\newcommand{\citet}{\cite}
\newcommand{\citealt}{\cite}
\newcommand{\citep}{\cite}
\newcommand{\titleText} {\bf Fundamental Properties of the Evolution of Mutational Robustness}
\newcommand{\titleHeader} {Fundamental Properties of the Evolution of Mutational Robustness}
\newcommand{\abstractText}{Evolution on neutral networks of genotypes has been found in models to concentrate on genotypes with high mutational robustness, to a degree determined by the topology of the network.  Here analysis is generalized beyond neutral networks to arbitrary selection and parent-offspring transmission.  In this larger realm, geometric features determine mutational robustness:  the alignment of fitness with the orthogonalized eigenvectors of the mutation matrix weighted by their eigenvalues.  ``House of cards'' mutation is found to preclude the evolution of mutational robustness.  Genetic load is shown to increase with increasing mutation in arbitrary single and multiple locus fitness landscapes.  The rate of decrease in population fitness can never grow as mutation rates get higher, showing that ``error catastrophes'' for genotype frequencies never cause precipitous losses of population fitness.  The ``inclusive inheritance'' approach taken here naturally extends these results to a new concept of dispersal robustness.
}
\newcommand{\authorText} {Lee Altenberg \affil{1}{The Konrad Lorenz Institute for Evolution and Cognition Research, Martinstrasse 12, Klosterneuburg, A3400 Austria, {Lee.Altenberg@kli.ac.at}
}
}
\newcommand{\keyText}{genetic load | spectral gap | lethal mutagenesis | epigenetic mutation | dispersal load}
\newcommand{\acknowledgeText}{Insight for this paper occurred while hearing Ludwig Geroldinger's dissertation defense on stepping stone and island migration models \citep{Geroldinger:and:Burger:2015:Clines}.
I thank Erik van Nimwegen, Reinhard B\"urger, Joachim Hermisson, and Nick Barton for their insightful comments.  I gratefully acknowledge support from The KLI Institute, Austria, and the Mathematical Biosciences Institute at Ohio State University, USA, through National Science Foundation Award \#DMS 0931642.
}
\newcommand{\hide}[1]{}	
\newcommand{\D}{\bm{D}}
\newcommand{\I}{\bm{I}}
\newcommand{\K}{\bm{K}}
\newcommand{\Lam}{\bms{\Lambda}}
\newcommand{\M}{\bm{M}}
\newcommand{\NCH}{{Nimwegen:Crutchfield:and:Huynen:1999}}
\newcommand{\Nc}{{\cal N}}
\newcommand{\Ox}{\bigotimes}
\newcommand{\ab}[1]{\begin{align*}#1\end{align*}}
\newcommand{\an}[1]{\begin{align}#1\end{align}}
\newcommand{\av}{\bm{a}}
\newcommand{\bms}[1]{{\boldsymbol{#1}}}
\newcommand{\bm}[1]{{\bf #1}}
\newcommand{\cv}{\bm{c}}
\newcommand{\df}[2]{\displaystyle \frac{\mbox{\rm d} #1}{\mbox{\rm d} #2}}
\newcommand{\diag}[1]{\mbox{ \bf diag}\matrx{#1}}
\newcommand{\dspfrac}[2]{\frac{\displaystyle #1}{\displaystyle #2} }
\newcommand{\eqdef}{:=}
\newcommand{\ev}{\bm{e}}
\newcommand{\evt}{{\,\ev\tr}}
\newcommand{\matrx}[1]{{\left[ \stackrel{}{#1}\right]}}
\newcommand{\ov}{\overline}
\newcommand{\piv}{\bms{\pi}}
\newcommand{\tr}{^{\!\top}}
\newcommand{\x}{\bm{x}}
\newcommand{\y}{\bm{y}}
\newcommand{\desclist}[1]{\begin{description}#1\end{description}}
\newcommand{\enumlist}[1]{\begin{enumerate}#1\end{enumerate}}
\newcommand{\sumin}{\sum_{i=1}^n}	
\newcommand{\sumjn}{\sum_{j=1}^n}
\newcommand{\sumijn}{\sum_{i,j=1}^n}
\newcommand {\dsty}{\displaystyle}
\newcommand{\stext}{\shortintertext}	
\newcommand{\0}{{\bm 0}}
\newcommand{\pf}[2]{\displaystyle \frac{\partial #1}{\partial #2}}
\newcommand{\suchthat}{\colon}
\renewcommand{\c}{\bm{c}}
\newcommand{\OxL}{\Ox_{\xi=1}^L}
\newcommand{\ddf}[2]{\displaystyle \frac{\mbox{\rm d}^2 #1}{\mbox{\rm d} #2^2}}
\newcommand{\Reals}{\mathbb{R}} 
\newcommand{\zh}{\hat{z}}
\newcommand{\zvh}{{\hat{\z}}}
\newcommand{\z}{\bm{z}}
\newcommand{\Bmatr}[1]{\begin{bmatrix*}[r]\dsty #1\end{bmatrix*}}
\newcommand{\wb}{\ov{w}}
\newcommand{\wbh}{\widehat{\wb} }
\newcommand{\wv}{\bm{w}}
\newcommand{\Cov}{\mbox{\rm Cov}}
\newcommand{\Lcu}{\mathscr{L}}
\newcommand{\goesto}{\rightarrow}
\newcommand{\w}{\bm{w}}
\newcommand{\Oc}{{\cal O}}
\newcommand{\Pc}{{\cal P}}
\newcommand{\nubh}{\hat{\ov{\nu}}}
\newcommand{\fracinline}[2]{{#1 / #2}}
\newcommand{\Var}{{\sf Var}}
\newcommand{\sxi}{^{(\xi)}}  
\newcommand{\AMR}{{\sf Ro}}	
\newcommand{\RMR}{{\sf RRo}}	
\newcommand{\SI}{\emph{SI}}
\newcommand{\Ex}[1]{{\mathbb{E}[#1]}} 
\newcommand{\Sm}{\bm{S}}
\renewcommand{\P}{\bm{P}}
\newcommand{\mut}{\text{\sf mu}}
\newcommand{\affil}[2]{\footnote{#2}}
\newtheorem{Theorem}{Theorem}
\newtheorem{Definition} [Theorem]{Definition}
\newtheorem{Result} [Theorem]{Result}
\newtheorem{Corollary}[Theorem]{Corollary}
\newtheorem{Lemma}[Theorem]{Lemma}
\newtheorem{RemarkNum}[Theorem]{Remark}
\author{\authorText}
\title{\titleText}
\begin{document}
\date{}
\maketitle
\abstract{
\bf \abstractText
}
\  \\ 
\  \\ 
{\bf Keywords}: \keyText
\  \\

\sloppy
\raggedbottom
Based on theoretical considerations, Kimura \citep{Kimura:1968:Evolutionary} predicted that the majority of evolutionary changes in the genome in mammals should consist of neutral mutations.  Since this time, research has been focused on understanding the extent and nature of neutral genetic variation in organisms.  One approach is to attempt to derive them from molecular first principles \citep{King:and:Jukes:1969,Hietpas:et:al:2011:Experimental}.    The idea that neutrality may not merely be a derived consequence of molecular interactions, but actually an evolved property shaped by evolutionary dynamics, had its first manifestation in Fisher's theory for the evolution of dominance \citep{Fisher:1928:Possible}, followed by  Waddington \citep{Waddington:1942} who proposed it as a consequence of stabilizing selection, and Conrad (cf. \citep{Conrad:1974:Molecular-I,Conrad:1979:Mutation}) who proposed it could  result from higher-order epistatic mutations which smooth the adaptive landscape at other sites and consequently enhance evolvability.  Another mechanism proposed is that natural selection for adaptations robust to {environmental} variation entails robustness to {mutation} as a generic side-effect \citep{Ancel:and:Fontana:2000}.

An altogether different mechanism for the evolution of mutational robustness was proposed by Bornberg-Bauer and Chan \citet{Bornberg:and:Chan:1999:Modeling} and van Nimwegen et al. \cite{Nimwegen:1999, Nimwegen:Crutchfield:and:Huynen:1999}, which is that evolution along \emph{neutral networks}---sets of mutationally connected genotypes with equivalent fitnesses---would ``concentrate at highly connected parts of the network, resulting in phenotypes that are relatively robust against mutations''\citep{Nimwegen:Crutchfield:and:Huynen:1999}, and ``Independent of functional fitness, topology \emph{per se} can lead to concentration of evolutionary population at some sequences.''\citep{Bornberg:and:Chan:1999:Modeling}.

Numerous citations of these papers repeat the finding that ``populations will evolve toward highly connected regions of the genome space'' (e.g. \citet{Aguirre:Buldu:and:Manrubia:2009}, \citet{Kun:etal:2015:Dynamics}).  However, neither the original papers nor subsequent studies (to my knowledge) provide analytic proofs of this observation.  With few exceptions \citep{Reeves:etal:2015:Eigenvalues} progress has been limited  in identifying exactly which properties of neutral networks determine their evolved mutational robustness.  As is shown by the following example, more is going on than simply a ``tendency to evolve toward highly connected parts of the network''.

Figure \ref{fig:TwoTopologies} compares the equilibrium population distribution for two neutral networks that have 63 equally fit genotypes, and one inviable genotype off the network.  The only difference between the networks is their mutational topology.  One network is the set of all 64 nucleotide triplets, while the other is the set of copy-number variants, from $1$ to $64$ copies.  
Two properties of the population equilibrium distributions on the neutral networks stand out:  first, the equilibrium frequency of a genotype is determined not solely by how many neutral neighbors it has, but also by its position within the network.  In the copy-number network, 62 of the genotypes are identical in having only neutral neighbors.  Yet the stationary distribution increases 20-fold over these 62 genotypes as they get more mutationally distant from the lethal genotype \citep{Altenberg:2005:Evolvability}.  The same phenomenon is seen in the trinucleotide network but to a lesser degree.  Secondly, dramatic differences are seen between the two neutral networks in the size of the genetic loads they maintain.  The genetic load of the trinucleotide network is 44 times the genetic load of the copy-number network.  

\begin{figure}[t]
\begin{center}
\includegraphics[width=\columnwidth]{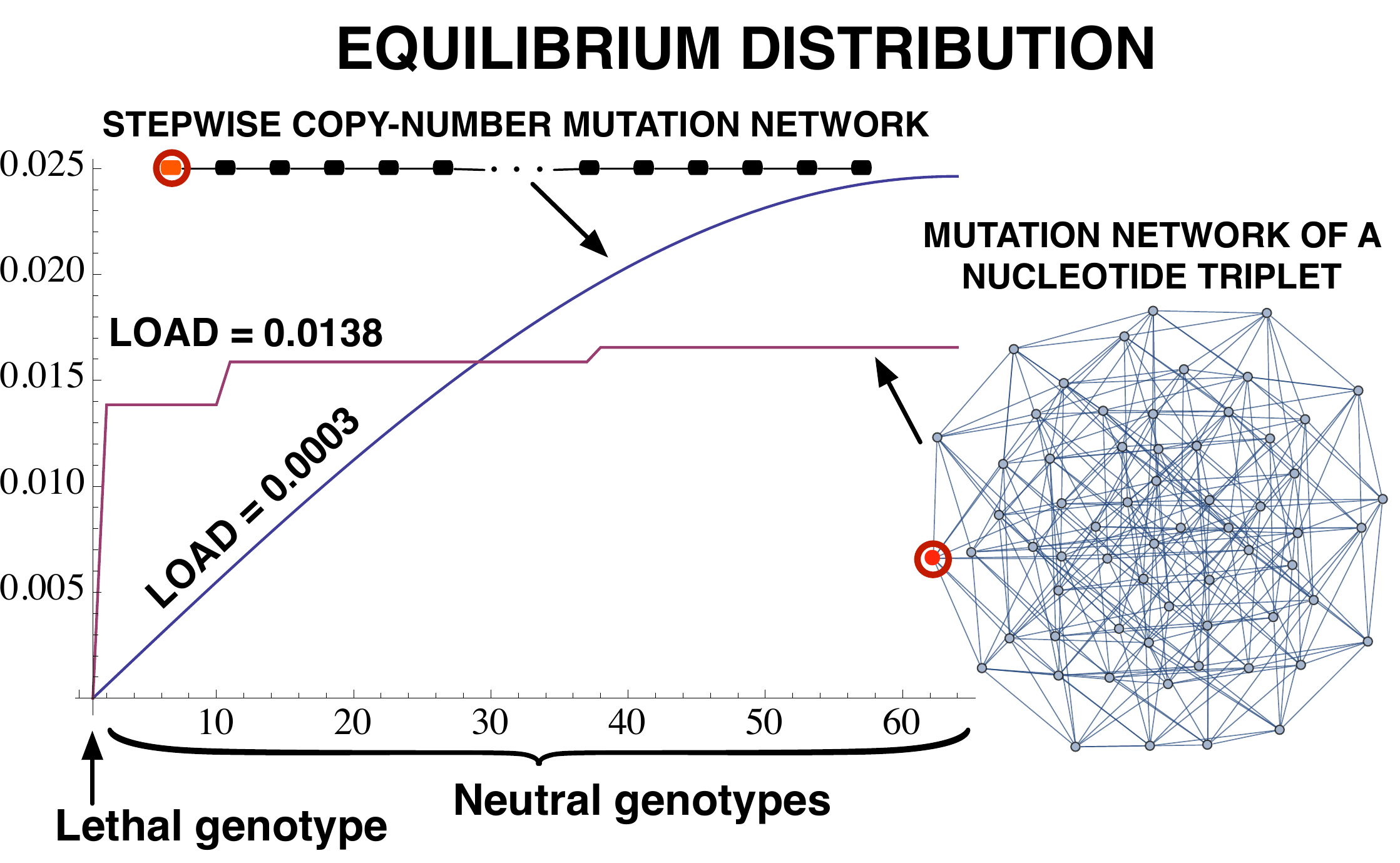}
\end{center}
\caption{ \label{fig:TwoTopologies} Equilibrium distributions and genetic loads for two neutral networks having 63 neutral genotypes and one lethal genotype (circled), under two mutation topologies: Mutation of a nucleotide triplet, and stepwise copy-number mutation.  
}
\end{figure}

The differences between the evolutionary outcomes on these two mutational graphs can only be the result of their different topologies, but what properties of their topologies?  Numerous results from the field of spectral graph theory are applicable to this question; here no review is attempted.  While graph theory has been widely used in models of mutation, the results typically impose the assumptions that mutations occur only once during replication, mutation is symmetric and occurs at a single rate, and fitness is either zero or a single other value (all assumptions in \citet{Bornberg:and:Chan:1999:Modeling,Nimwegen:Crutchfield:and:Huynen:1999}).

An alternative approach is taken here, which has proven valuable in previous work \citep{Altenberg:1984:Book,Altenberg:and:Feldman:1987,Altenberg:2010:Feldman-Karlin,Altenberg:2012:Resolvent-Positive,Altenberg:2012:Dispersal}.  The specific problem---evolution on neutral networks in nucleotide sequence space---is embedded into the larger space of problems, which  includes arbitrary mutation patterns and arbitrary selection values.   The generality forces one to seek the appropriate fundamental mathematical properties.  

It also expands the applicability of any results.  Mutation is treated generally enough to apply to non-genetic information transmission, the principal example being organismal location, where the analog of mutation is dispersal.  The results here thus automatically apply to {dispersal load}, and in the process define a new concept of \emph{dispersal robustness}.  Moreover, the generality of the treatment has the potential to apply widely to diverse mechanisms of ``inclusive inheritance'' \citep{Danchin:and:WagnerR:2010:Inclusive}.

With arbitrary fitnesses, one can no longer characterize mutation neighborhoods simply by the fraction of mutations that are neutral, since now the distribution of fitness effects of mutation (DFE) includes advantageous or deleterious mutations, as well as neutral and lethal.
The more general statistic is the expected fitness of offspring.  Averaged over the population, one obtains an aggregate \emph{population mutational robustness} --- the degree to which the population maintains its fitness in the face of mutation pressure.  A complete absence of mutational robustness occurs when the genetic load is the mutation rate, corresponding to the Haldane-Muller principle \citep{Haldane:1937:Effect,Muller:1950}.   Complete mutational robustness, on the other hand, would mean that the population suffers no genetic load as the mutation rate increases.   A given adaptive landscape will fall somewhere between these two extrema.  

The main results found here are that the population mutational robustness at a mutation-selection balance is determined by abstract spectral properties:  the alignment between the fitnesses and the eigenvectors of the mutation matrix, weighted by their eigenvalues.  

This spectral analysis provide a lens through which to examine models of mutation, dispersal, an inheritance generally.  Models of mutation and selection are usually constructed without appreciation of how the assumptions manifest in the eigenvalues and eigenvectors of the mutation matrix (or variation operator in the case of continuous variation).  We will see, for example, that the widely encountered \citep{Hermisson:Hansen:and:Wagner:2003:Epistasis} ``house-of-cards'' mutation model \citet{Kingman:1977:Properties} is incapable of supporting mutational robustness.  The results here provide direction for analyzing a wide variety of models for their capacity to support the evolution of mutational robustness, and enable comparisons to be made between different kinds of mutation processes, including nucleotide base mutation, epigenetic mark mutation, and gene copy number variation.  The comparisons extend to the spectral properties of dispersal matrices, thereby unifying the results for genetic robustness and genetic load with those for ``dispersal robustness'' and dispersal load.  

\section{The Setting}

The object of interest is the state of the population when it has converged in frequencies to an equilibrium under the following asexual, haploid evolutionary dynamics, where the population is large enough to be treated as infinite and has discrete non-overlapping generations (semelparity).  The only event that changes the genotypes during reproduction is mutation; there is no recombination.
\an{
z_i(t+1) &= \frac{1}{\wb(t)}\ \sumjn M_{ij}(\mu) \, w_j z_j(t), \qquad \label{eq:Haploid}
}
\text{where} \desclist {
\item[$n$] is the number of possible haploid genotypes,
\item[$z_i(t)$] is the frequency of haploid genotype $i$ in the population at time $t$, $z_i(t) \geq 0$, $\sumin z_i(t) = 1$,
\item[$w_j$] is the fitness of genotype $j$, $w_i \geq 0$,
\item[$M_{ij}(\mu)$] is the probability, when the mutation rate is $\mu \in [0,1]$, that parent of genotype $j$ has offspring of genotype $i$, $M_{ij} \geq 0$,  $\sumin M_{ij} = 1$ $\forall j$, and 
\item[$\wb(t)$] $= \sumin w_i z_i(t)$ is the mean fitness of the population at time $t$, its Malthusian rate of increase in size.
}
In the case of a multilocus genotype where mutation occurs independently at each of $L$ loci, the mutation probabilities may be decomposed as products of the mutation probabilities at each locus,
$M_{ij} = M_{i_1 j_1}^{(1)} M_{i_2 j_2}^{(2)} \cdots M_{i_L j_L}^{(L)}$,
where $i_\xi, j_\xi$ index the alleles at locus $\xi$.

Mutation at each locus $\xi$ may be parameterized by a mutation rate $\mu$ and a \emph{mutation distribution}, $P_{i_\xi j_\xi}^{(\xi)}$, given that mutation occurs: $M_{i_\xi j_\xi}^{(\xi)} = (1-\mu) \delta_{i_\xi j_\xi} + \mu \, P_{i_\xi j_\xi}^{(\xi)}$,
where $\delta_{ii} = 1$ and $\delta_{ij} = 0$ when $i \neq j$.  In matrix form this is
\an{
\M\sxi = (1-\mu) \I\sxi + \mu \P\sxi, \label{eq:MxiP}
}
where each $\P\sxi$ is an  $n_\xi \times n_\xi$ column-stochastic matrix, $n_\xi$ is the number of possible alleles at locus $\xi$, and $\I\sxi$ is the $n_\xi \times n_\xi$ identity matrix.

The multilocus mutation matrix can be represented using the Kronecker product, $\Ox$, as
\an{
\M(\mu) = \Ox_{\xi=1}^L \left[(1-\mu) \I^{(\xi)} + \mu \P^{(\xi)}\right].\label{eq:MultiMuPMat}
}

Matrices of the form $(1-\mu) \I + \mu \P$ represent the situation where only a \emph{single transforming event} occurs during reproduction and will be referred to as \emph{single-event mutation} matrices, while those of the form \eqref{eq:MultiMuPMat} represent the situation where \emph{multiple independent transforming events} occur during reproduction and will be referred to as \emph{multiple-event mutation} matrices.  In the limit of small $\mu$, multiple-event matrices converge to single-event matrices, even when they model multiple loci. 

From \eqref{eq:Haploid}, equilibrium frequencies must satisfy $\wbh \zh_i = \sumjn M_{ij} w_j \zh_j $, or in vector form,
\an{ 
\M(\mu) \, \D \, \zvh &= (\w\tr \zvh) \, \zvh 
=\wbh \ \zvh, \qquad  \label{eq:EqIdVec}
}
\text{where}\desclist {
\item[$\ev$] is the vector of ones, $\ev\tr = \Bmatr{1&1&\cdots&1}$ its transpose, 
\item[$\D$]$ \eqdef \diag{w_i}$ is the diagonal matrix of fitness coefficients, 
\item[$\w$]$=\D\ev$ is the vector of fitness coefficients, and
\item[$\wbh$] $\eqdef\sumin w_i \zh_i =  \ev\tr \D \zvh = \w\tr\zvh$ is the mean fitness of the population. 
}
Equation \eqref{eq:EqIdVec} shows that the mean fitness $\wbh = \ev\tr \D \zvh$ is an eigenvalue of $\M(\mu) \D$.

It is further assumed that $\M(\mu)$ is irreducible when $\mu > 0$, meaning any genotype $j$ can mutate in some number of steps to any other genotype $i$.  $\M(\mu)\D$ is irreducible as well if no genotype is lethal, so from Perron-Frobenius theory we know there is only one possible equilibrium, the strictly positive \emph{Perron vector} $\zvh$, and eigenvalue $\wbh$ 
is the \emph{Perron root} and spectral radius of $\M(\mu) \D$, referred to as $r(\M(\mu)\D) = \wbh$.

If some mutations are lethal, then $\M(\mu) \D$ is reducible, and more interesting equilibrium behavior becomes possible.  The Frobenius normal form of a reducible matrix (\citet[pp. 74-77]{Gantmacher:1959:2}) partitions the set of genotypes into blocks, such that the restriction of $\M(\mu)\D$ to any block $\kappa$ produces an irreducible matrix $\M(\mu)^{[\kappa]} \D^{[\kappa]}$.  The blocks constitute \emph{quasispecies} \citep{Eigen:and:Schuster:1977}.  The block with the largest spectral radius $r(\M(\mu)^{[\kappa]} \D^{[\kappa]})$ among the genotypes present at equilibrium sets the mean fitness of the population.  The analysis will therefore focus on such irreducible block matrices.

\subsection{Diploidy, frequency-dependent selection, recombination, assortative mating, and dispersal}
The mutational robustness of population at equilibrium can be analyzed without knowing whether this equilibrium is dynamically stable.  In the case of haploid selection on a quasispecies without recombination, the system is essentially linear and the polymorphic equilibrium is always stable.  Relaxation of these assumptions to accommodate greater biological variety may alter the stability and existence conditions of equilibria but does not change the robustness analysis.  The results here therefore apply to equilibria in systems with diploid and frequency-dependent selection, assortative mating, and also recombination.  In this case, letting $T_{ijk}$ be the probability that parents $j$ and $k$ produce offspring $i$, one gets a a frequency-dependent version of \eqref{eq:Haploid}:
\an{
z_i(t{+}1) & = \frac{1}{\wb(t)} \sumjn z_j(t) \sum_{k=1}^n  T_{ijk}  w_{jk} X_{jk}(\z), \label{eq:Diploid} \\
&= \frac{1}{\wb(t)} \sumjn M_{ij}(\z) \, w_j(\z) \,  z_j(t), \label{eq:DiploidFDS}
}
where $w_{jk}$ is the fitness of diploid genotype $jk$, $X_{jk}(\z(t))$ is the probability that the mate of genotype $j$ is genotype $k$ (frequency dependent), $w_j(\z)  
= \sum_{k=1}^n  w_{jk} X_{jk}(\z)$, and 
$
M_{ij}(\z)  = \sum_{k=1}^n  T_{ijk}  w_{jk} X_{jk}(\z)/w_j(\z).
$
Under random mating, $X_{jk}(\z(t)) = z_k(t)$.  Selfing cannot be accommodated within the product structure in \eqref{eq:Diploid}.   
In models of dispersal, $\z(t)$ typically represents the density of organisms, without normalization  by $\wb$. Carrying capacities then makes the growth rates $w_i(\z)$ density dependent. 

Because frequency dependence is relevant only to stability, which is not considered, the analysis to follow applies to both \eqref{eq:Haploid} and \eqref{eq:DiploidFDS}.  However, frequency dependence of both $M_{ij}(\z)$ and $w_j(\z)$ makes them composite formal quantities, rather than biological essential quantities as in the case of mutation and haploid selection. 

\subsection{Mean Fitness, Genetic Load, and Mutational Robustness}

Let us retrace the calculation of population mutational robustness in the neutral network model of van Nimwegen et al.\ \citep{\NCH}.  Genotypes have only two possible fitnesses, $w$, and $0$.  The mutational robustness of a genotype with fitness $w$ is the probability that its offspring have fitness $w$.  The set of genotypes in the neutral quasispecies is $\Nc$.  Then the mutational robustness of a genotype---it's \emph{neutrality}--- is $\nu_j \eqdef \sum_{i \in \Nc} M_{ij}$.
The population neutrality at equilibrium $\zvh$ is the expectation of $\nu_j$ censused before reproduction, with parent frequencies $w \zh_i / \wbh$ for $i \in \Nc$.   Using \eqref{eq:EqIdVec}, since $w_j=0$ or $\zh_j=0$ for $j \notin \Nc$,
\ab{
\zh_i &= \! \sumjn \! M_{ij} \frac{w_j  \zh_j}{\wbh } 
\!= \! \sum_{j \in \Nc} \! M_{ij} \frac{w  \zh_j}{\wbh } , \  
\wbh \!= \! \sum_{i=1}^n w_i \zh_i = \! \sum_{i \in \Nc} w \zh_i .
}
From these we see that the equilibrium average neutrality is 
\an{
\nubh &= \sum_{j \in \Nc} \nu_j \frac{w \zh_j }{ \wbh}
=  \sum_{i, j \in \Nc}  M_{ij}\frac{w \zh_j }{ \wbh} 
{=}  \sum_{i \in \Nc}  \zh_i 
{=} \frac{\wbh}{w} =\frac{r(\M\D)}{w}. \label{eq:AvgNeutrality}
}

The concept of mutational robustness can be extended beyond neutral networks by simply generalizing $w$ to $\max_i[w_i]$.
\begin{Definition}
At a mutation-selection balance, the {\bf population mutational robustness} is defined to be \an{
\AMR(\M, \D) \eqdef \frac{\wbh} {\max_i[w_i]} = \frac{r(\M\D)}{\max_i[w_i]} .\label{eq:AMRdef}
}
\end{Definition}
The quantity $ \fracinline{r(\M\D)}{\max_i[w_i]}=\AMR(\M, \D)=1 - {\sf L} $ is the complement of the classical {genetic load}, ${\sf L}$  \citep {Haldane:1937:Effect,Muller:1950}.  We would like to know where $\AMR(\M, \D)$ falls within the possible range of values given the mutation rate in $\M$.  The maximum possible value is $\AMR(\M, \D) = 1$, meaning no loss of fitness from mutation.  The minimum possible value will be called the \emph{Haldane limit}, which is $(1-\mu)^L$ for $L$ loci.

\begin{Result}[Haldane Limits on Mutational Robustness]\label{Result:Haldane}
\ab{
\AMR(\M(\mu),\D) &\eqdef \dspfrac{r(\M(\mu)\D)}{\max_i[w_i]} \geq
(1 - \mu)^L . 
}

\end{Result}
See also \citet[eq. (2.3)]{Burger:and:Hofbauer:1994:Mutation}, 
\citet[pp. 149--150, eq. (5.31)--(5.34)]{Burger:2000}.
The proof for this result and those following are provided in the {\bf Supporting Information} (\emph{SI}).

Robustness \emph{relative} to the Haldane limit can be defined.
\begin{Definition}
Define \emph{\bf relative mutational robustness} as
\ab{
&\RMR(\M(\mu), \D)  \eqdef \frac{\dspfrac{r(\M(\mu)\D)}{\max_i[w_i]} - (1-\mu)^L}{1 - (1-\mu)^L} \in [0, 1].
}
\end{Definition}

One further quantity of interest is the average fitness of \emph{mutant} offspring, 
$ 
\wb^{\mut}_j \eqdef { \sum_{i \neq j}^n w_i M_{ij}}/{ \sum_{i \neq j}^n M_{ij}}. 
$ 
\begin{Theorem}\label{Theorem:Mutant}
The average mutant offspring fitness in a population at mutation-selection balance is
\ab{
\wb^{\mut} (\M(\mu), \D) &\eqdef \sumjn \wb^{\mut}_j \frac{w_j \zh_j}{\wbh } \\&
= \wbh \left[1 - \frac{(1-\mu)^L}{1-(1-\mu)^L} \Var\big[\frac{w_i}{\wbh} \big]  \right],
}
assuming the mutation distributions have $P^{(\xi)}_{ii} {=} 0$ for all loci $\xi$ and alleles $i$. 
\end{Theorem}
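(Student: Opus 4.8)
The plan is to exploit the assumption $P^{(\xi)}_{ii}=0$ to make the no-mutation probability constant across all genotypes, and then to collapse the resulting population average using the equilibrium eigenvector equation \eqref{eq:EqIdVec}. First I would compute the diagonal of $\M(\mu)$: by the Kronecker form \eqref{eq:MultiMuPMat} together with the single-locus decomposition \eqref{eq:MxiP}, each diagonal factor is $M^{(\xi)}_{j_\xi j_\xi} = (1-\mu) + \mu\, P^{(\xi)}_{j_\xi j_\xi} = 1-\mu$ once $P^{(\xi)}_{ii}=0$, so $M_{jj} = \prod_{\xi} M^{(\xi)}_{j_\xi j_\xi} = (1-\mu)^L$ for every $j$. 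Since each column of $\M$ sums to one, the total probability of producing a mutant offspring is $\sum_{i\ne j} M_{ij} = 1 - (1-\mu)^L$, the same constant for all $j$. This is the crux of the whole calculation: the denominator of $\wb^{\mut}_j$ no longer depends on $j$ and therefore factors out of the population average.

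Next I would rewrite the per-genotype mutant fitness, using $M_{jj}=(1-\mu)^L$, as
\[
\wb^{\mut}_j = \frac{\sum_{i} w_i M_{ij} - w_j M_{jj}}{1-(1-\mu)^L} = \frac{\big(\sum_i w_i M_{ij}\big) - (1-\mu)^L w_j}{1-(1-\mu)^L},
\]
and then form the population average $\wb^{\mut} = \sumjn \wb^{\mut}_j\, w_j\zh_j/\wbh$, observing that the weights satisfy $\sumjn w_j \zh_j/\wbh = 1$. Splitting into two pieces, the first sum is $\sumijn w_i M_{ij} w_j \zh_j$. Grouping it by the offspring index $i$ and applying \eqref{eq:EqIdVec} in component form, $\sumjn M_{ij} w_j \zh_j = \wbh\, \zh_i$, collapses it to $\sumin w_i\, \wbh\, \zh_i = \wbh^2$. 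This eigenvector collapse is the computational lever that makes the identity clean. The second piece is $(1-\mu)^L \sumjn w_j^2 \zh_j$; since the mean of $w_i/\wbh$ taken over $\zvh$ is exactly $1$ (again because $\wbh = \sumin w_i\zh_i$), the definition of variance gives $\sumjn w_j^2\zh_j = \wbh^2\big(1 + \Var[w_i/\wbh]\big)$.

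Combining the two evaluated sums over the common denominator yields
\[
\wb^{\mut} = \frac{\wbh}{1-(1-\mu)^L}\Big[\,1 - (1-\mu)^L\big(1+\Var[w_i/\wbh]\big)\Big],
\]
which rearranges immediately into the stated formula. I expect the only genuine obstacle to be the very first step: the entire clean form rests on the diagonal of $\M$ being constant, which is precisely what $P^{(\xi)}_{ii}=0$ secures through the multiplicativity of the Kronecker product. Without that hypothesis $\sum_{i\ne j}M_{ij} = 1 - M_{jj}$ would vary with $j$, could not be pulled out of the weighted average, and the neat separation into the $\wbh^2$ term and the variance term would fail. Everything after the constancy of the diagonal is the eigenvector-collapse identity plus routine variance bookkeeping.
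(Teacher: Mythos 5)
Your proposal is correct and takes essentially the same route as the paper's own proof: both hinge on the diagonal being constant, $M_{jj}=(1-\mu)^L$, as a consequence of $P^{(\xi)}_{ii}=0$ and the product structure, then apply the equilibrium collapse $\sumjn M_{ij}\,w_j\,\zh_j = \wbh\,\zh_i$ to the cross term and the identity $\sumjn w_j^2\,\zh_j = \wbh^2\bigl(1+\Var\bigl[w_i/\wbh\bigr]\bigr)$ to the diagonal term. The algebraic rearrangement at the end matches the paper's as well, so there is nothing to correct.
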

This identity  precisely expresses Fisher's ``deterioration of the environment'' argument \citep[p. 42]{Fisher:1930}, that at any equilibrium, the average mutant fitness loss must exactly offset the gain in fitness from the fitness variance.

\section{Basic Results}
The core finding of \citet{\NCH} is that when genotype neutralities vary over the neutral network, ``the population neutrality is typically larger than the network neutrality. \ldots Thus, a population will evolve a mutational robustness that is larger than if the population were to spread uniformly over the neutral network.''
While numerical examples are explored, no analytical results are provided that define precisely when it holds.  It is claimed in \citet[eq. (12)]{Aguirre:Buldu:and:Manrubia:2009} that when the mutation matrix is symmetric, i.e.\ $\M = \M\tr$, this result can be obtained from Perron-Frobenius theory, but no proof is provided.  A proof is here provided (in \SI), but it requires more than Perron-Frobenius theory, specifically Rayleigh theory.

\begin{Theorem}[Neutral Network Robustness]\label{Result:NNR}
The equilibrium population robustness $\nubh = r(\M\D) / w$ is always greater than the average robustness over the neutral network, 
\ab{
\Ex{\nu} 
= \frac{\sum_{i \in \Nc} \sum_{j \in \Nc} M_{ij}}{\sumin \sum_{j \in \Nc} M_{ij}}.
}
if (1) the mutation matrix $\M$ is symmetric, (2) fitness off the network is $0$, and (3) not all genotypes on the neutral network have the same mutational robustness.
\end{Theorem}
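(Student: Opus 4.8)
The plan is to recognize the network-average robustness $\Ex{\nu}$ as a Rayleigh quotient of a symmetric principal submatrix of $\M$ and then invoke the Rayleigh--Ritz variational characterization; the author's remark that ``more than Perron--Frobenius theory'' is needed points exactly here. First I would reduce both sides of the asserted inequality to the submatrix $\M_\Nc$, the restriction of $\M$ to rows and columns indexed by the neutral network $\Nc$. Ordering the genotypes so that those in $\Nc$ come first, condition (2) makes every column of $\M\D$ indexed outside $\Nc$ vanish, while $\D$ restricted to $\Nc$ equals $w\I$. Thus $\M\D$ is block lower-triangular with diagonal blocks $w\M_\Nc$ and $\0$, so its eigenvalues are those of $w\M_\Nc$ together with zeros, giving $r(\M\D)=w\,r(\M_\Nc)$ and hence the population robustness $\nubh = r(\M\D)/w = r(\M_\Nc)$.

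On the other side, column-stochasticity of $\M$ collapses the denominator of $\Ex{\nu}$ to $\sum_{j\in\Nc}\sum_{i=1}^n M_{ij} = |\Nc|$, so that
\[
\Ex{\nu} = \frac{1}{|\Nc|}\sum_{i,j\in\Nc}M_{ij} = \frac{\ev\tr\M_\Nc\,\ev}{\ev\tr\ev},
\]
where $\ev$ is now the all-ones vector of dimension $|\Nc|$. This is precisely the Rayleigh quotient of $\M_\Nc$ evaluated at $\ev$. Since $\M$ is symmetric (condition 1), so is $\M_\Nc$, and Rayleigh--Ritz gives $\ev\tr\M_\Nc\ev/\ev\tr\ev \le \lambda_{\max}(\M_\Nc)$. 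Because $\M_\Nc$ is nonnegative and irreducible, its Perron root equals its algebraically largest eigenvalue, $r(\M_\Nc)=\lambda_{\max}(\M_\Nc)$, whence $\Ex{\nu}\le r(\M_\Nc)=\nubh$, the asserted inequality.

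The remaining step is to upgrade this to a strict inequality using condition (3). Rayleigh--Ritz attains equality exactly when the test vector $\ev$ is an eigenvector for $\lambda_{\max}$. The link to robustness is that, by symmetry, the neutrality $\nu_j = \sum_{i\in\Nc}M_{ij}$ is the $j$-th entry of $\M_\Nc\ev$, so the vector of neutralities is $\M_\Nc\ev$. If equality held we would have $\M_\Nc\ev = r(\M_\Nc)\,\ev$, forcing every $\nu_j$ to equal $r(\M_\Nc)$ and making all genotypes equally robust, contradicting condition (3). Conversely, if all $\nu_j$ were equal then $\ev$ would be a strictly positive eigenvector, hence by Perron--Frobenius uniqueness the Perron vector, and equality would hold; so condition (3) rules equality out and yields $\nubh > \Ex{\nu}$.

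I expect the main obstacle to be this strictness analysis rather than the inequality itself. Once $\Ex{\nu}$ is seen as a Rayleigh quotient the bound is immediate, but the equality case must be cleanly identified with constancy of the neutralities, and this uses symmetry twice---once to express $\nu_j$ as a component of $\M_\Nc\ev$, and once (through Perron--Frobenius) to guarantee that the only positive eigenvector is the Perron vector---so that failure of condition (3) cannot sneak in a degenerate equality via some other positive eigenvector.
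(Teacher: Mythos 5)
Your proposal is correct and takes essentially the same route as the paper: you recognize the network-average neutrality $\Ex{\nu}$ as a Rayleigh quotient of a symmetric matrix at the network's (all-ones) indicator vector, apply Rayleigh--Ritz, and obtain strictness because equality would force $\M_{\Nc}\,\ev = r(\M_{\Nc})\,\ev$, i.e.\ constant neutralities, contradicting condition (3) --- exactly the paper's argument, merely phrased on the principal submatrix $\M_{\Nc}$ (via the block-triangular structure of $\M\D$, giving $r(\M\D) = w\, r(\M_{\Nc})$) instead of on $\D\M\D$ with test vector $\D\ev$ and the identity $\D^2 = w\D$. One small caution: $\M_{\Nc}$ need not be irreducible even when $\M$ is, but your appeal to irreducibility is dispensable, since nonnegativity alone guarantees $r(\M_{\Nc}) = \lambda_{\max}(\M_{\Nc})$, and the forward equality-case argument (equality implies $\ev$ is an eigenvector, hence constant neutralities) never requires uniqueness of a Perron vector.
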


When mutation is not symmetric, counterexamples can easily be constructed where the population evolves to an average robustness that is less than the average robustness over the entire neutral network.  One example is where mutation is cyclic.  Another example is where asymmetric mutation is symmetrizable (a reversible Markov chain), but mutation is biased toward genotypes with few neutral neighbors (in \SI).

The next result adopts to quasispecies a theorem of Karlin, which has been available for over 30 years, and was independently proven in \citet{Kirkland:Li:and:Schreiber:2006}.
\begin{Theorem}[5.2 of Karlin \citet{Karlin:1982}]
\label{Theorem:SingleLoad}
The equilibrium genetic load within any quasispecies under single-event mutation strictly increases with mutation rate when there is any variation in fitness, i.e.  for each irreducible block of loci $\kappa$, $r([(1-\mu)\I^{[\kappa]} + \mu \P^{[\kappa]}] \D^{[\kappa]})$ strictly decreases with $\mu$ when $\D^{[\kappa]} \neq c \ \I$ for any $c > 0$, or when $r(\P^{[\kappa]}) < 1$.
\end{Theorem}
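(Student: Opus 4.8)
The plan is to show that the Perron root $r(\mu) := r(\M(\mu)\D)$ \emph{strictly decreases} in $\mu$; since $\max_i w_i$ does not depend on $\mu$, this is equivalent to the asserted increase of the load $1-r(\M(\mu)\D)/\max_i w_i$. First I would invoke the Frobenius normal form to restrict attention to a single irreducible block, dropping the superscript $[\kappa]$, so that $\M(\mu)\D = (1-\mu)\D + \mu\,\P\D$ is nonnegative and irreducible for $\mu\in(0,1]$. By Perron--Frobenius its Perron root is simple, so $r(\mu)$ together with the strictly positive right and left (reproductive-value) Perron vectors $\zvh,\vv$ are analytic in $\mu$; normalize $\ev\tr\zvh=\vv\tr\zvh=1$. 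Standard simple-eigenvalue perturbation then gives $r'(\mu)=\vv\tr(\P-\I)\D\zvh$, and substituting $\mu\,\P\D\zvh = r\,\zvh - (1-\mu)\D\zvh$ from the eigenequation collapses this to
\[
r'(\mu) = \tfrac{1}{\mu}\big(r - S\big), \qquad S := \vv\tr\D\zvh .
\]

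Everything then reduces to the sign of $S-r$. Writing $\ell_j := 1-\sum_i P_{ij}\ge 0$ for the probability of mutating out of the block and dotting the eigenequation with $\ev\tr$ (so that $(\ev\tr\M(\mu))_j = 1-\mu\ell_j$) gives $r=\sum_i w_i\zh_i(1-\mu\ell_i)$, whence
\[
S-r = \Cov_{\zvh}(w,v) + \mu\sum_i w_i\zh_i\,\ell_i,
\]
where $\Cov_{\zvh}(w,v) := \sum_i \zh_i w_i v_i - \big(\sum_i \zh_i w_i\big)\big(\sum_i \zh_i v_i\big)$ is the covariance of fitness and reproductive value under the equilibrium distribution. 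The leakage term is nonnegative and strictly positive exactly when $\P$ is substochastic, i.e. $r(\P)<1$; this already yields strict decrease under hypothesis (b). It remains to show $\Cov_{\zvh}(w,v)\ge 0$, strictly when $w$ is nonconstant, which supplies hypothesis (a).

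For symmetric $\P$ this covariance inequality has a transparent Rayleigh proof, which I would give in full. Passing to the similar symmetric matrix $\D^{1/2}\M(\mu)\D^{1/2}=(1-\mu)\D+\mu\,\D^{1/2}\P\D^{1/2}$ and differentiating its variational characterization $r(\mu)=\max_{\|x\|=1} x\tr[(1-\mu)\D+\mu\,\D^{1/2}\P\D^{1/2}]x$ by the envelope theorem yields $r'(\mu)=y\tr(\P-\I)y$ with $y=\D^{1/2}x^\ast$ and $x^\ast$ the leading eigenvector. Since a symmetric (sub)stochastic $\P$ has all eigenvalues $\le 1$, $\P-\I$ is negative semidefinite, so $r'(\mu)\le 0$; equality forces $y\propto\ev$, which one checks (using $\P\ev=\ev$) is incompatible with nonconstant $w$, giving the strict inequality.

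The main obstacle is the general, non-symmetric case. There $\P-\I$ is no longer a symmetric negative-semidefinite form, and the left Perron vector $\vv$ no longer equals $\D\zvh$, so the Rayleigh argument collapses and the positivity of $\Cov_{\zvh}(w,v)$—the statement that reproductive value is positively aligned with fitness at equilibrium for arbitrary transmission $\P$—must be established by the finer monotonicity argument of Karlin (and, independently, Kirkland--Li--Schreiber), equivalently through the resolvent-positive formulation of the reduction principle. I would reproduce that argument to close the general case, and then translate the strict decrease of $r(\mu)$ back into the strict increase of the genetic load.
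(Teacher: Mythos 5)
Your algebra is sound: the perturbation formula $r'(\mu)=\vv\tr(\P-\I)\D\zvh$, the collapse via the eigenequation to $r'(\mu)=(r-S)/\mu$ with $S=\vv\tr\D\zvh$, and the decomposition $S-r=\Cov_{\zvh}(w,v)+\mu\sum_i w_i\zh_i\ell_i$ under the normalizations $\ev\tr\zvh=\vv\tr\zvh=1$ all check out, as does the observation that for an irreducible substochastic block $r(\P^{[\kappa]})<1$ forces some $\ell_j>0$ and hence a strictly positive leakage term. Your Rayleigh argument for symmetric $\P$ is also complete and correct, including the equality analysis showing $y\propto\ev$ is incompatible with nonconstant $w$. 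Note, however, one point of comparison with the source: the paper itself supplies \emph{no} proof of this theorem --- the Supporting Information's proof sequence skips it (the theorem counter jumps past it), because the result is imported wholesale from Karlin's Theorem 5.2 and its independent proof in Kirkland--Li--Schreiber; the only new content in the paper is the quasispecies adaptation via the Frobenius normal form, handled exactly as in the paper's proof of the multiple-event load theorem, and your opening block-restriction step reproduces that adaptation correctly.

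The genuine gap in your proposal, judged as a self-contained proof, sits exactly where you acknowledge it: the positivity of $\Cov_{\zvh}(w,v)$ for arbitrary irreducible stochastic $\P$ is not a lemma feeding into Karlin's theorem --- given your identity $r'(\mu)=(r-S)/\mu$, it is \emph{equivalent} to it. So the final paragraph, where you propose to ``reproduce that argument'' from Karlin or the resolvent-positive formulation, defers the entire content of the general case rather than a technical remainder; a referee reading your text as a proof would stop there. Karlin's actual argument does not go through the covariance route at all: it establishes, via a variational/convexity analysis of the spectral radius (in modern form, the strict convexity of $\mu\mapsto r([(1-\mu)\I+\mu\P]\D)$ together with its value and slope at the stochastic endpoint, or equivalently Cohen--Friedland convexity as the paper uses elsewhere for Result 9), that the derivative is negative without ever needing the alignment of reproductive value with fitness as an independent input. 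Your reduction is still valuable --- it isolates the biological meaning of Karlin's theorem (fitness--reproductive-value covariance plus leakage) and fully settles the symmetric and substochastic-scalar cases --- and since the paper also merely cites Karlin, your treatment is, if anything, more explicit than the paper's; but to claim a proof of the nonsymmetric case you would need to actually carry out the Karlin/Kirkland--Li--Schreiber argument, not gesture at it.
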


The same outcome has been shown for multiple-event mutation when mutation at each locus is \emph{reversible}, by which I refer to $\M$ being the transition matrix of a reversible Markov chain, which is also equivalent to $\M$ being symmetrizable.
\begin{Theorem}[Corollaries 1, 3 of \citet {Altenberg:2011:Mutation}]\label{Theorem:MultiLoad}
The equilibrium genetic load of a quasispecies under irreducible multiple-event mutation strictly increases with mutation rate $0 < \mu < 1/2$, when the quasispecies has variation in fitness within its irreducible blocks.
\end{Theorem}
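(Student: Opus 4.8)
The plan is to convert the statement ``genetic load increases'' into ``$r(\M(\mu)\D)$ strictly decreases on $(0,1/2)$,'' and then to exploit reversibility plus the Kronecker structure to place the family $\M(\mu)\D$ inside a simultaneously diagonalizable, Loewner-monotone family of symmetric matrices, from which the Perron root's monotonicity can be read off. First I would use reversibility locus by locus: for each $\xi$ there is a diagonal $\Sm^{(\xi)}=\diag{\sqrt{\pi^{(\xi)}_i}}$, built from the stationary distribution of $\P^{(\xi)}$, with $A^{(\xi)}\eqdef(\Sm^{(\xi)})\inv\P^{(\xi)}\Sm^{(\xi)}$ symmetric. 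The key point is that this symmetrizer is $\mu$-independent, since $(1-\mu)\I^{(\xi)}+\mu\P^{(\xi)}$ shares the stationary vector of $\P^{(\xi)}$. Setting $\Sm=\Ox_{\xi=1}^L\Sm^{(\xi)}$ gives $\Sm\inv\M(\mu)\Sm=\Ox_{\xi=1}^L[(1-\mu)\I^{(\xi)}+\mu A^{(\xi)}]=:A(\mu)$, a symmetric family in a fixed basis, and the similarities $\M(\mu)\D\sim A(\mu)\D\sim\D^{1/2}A(\mu)\D^{1/2}=:B(\mu)$ show $r(\M(\mu)\D)=\lambda_{\max}(B(\mu))$ with $B(\mu)$ symmetric. (Reversibility is exactly what makes the Kronecker factors simultaneously diagonalizable with real spectra; without it they can have complex eigenvalues and the argument below fails, which is why the multilocus result needs it where Karlin's single-event Theorem~\ref{Theorem:SingleLoad} did not.)

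Next I would record what $0<\mu<1/2$ buys. Each eigenvalue of $A(\mu)$ is a product $\prod_\xi[1-\mu(1-\lambda^{(\xi)}_{k_\xi})]$ over eigenvalues $\lambda^{(\xi)}_{k}\in[-1,1]$ of $\P^{(\xi)}$; for $\mu<1/2$ every factor is positive (worst case $\lambda=-1$ gives $1-2\mu>0$) and non-increasing in $\mu$, so $A(\mu)$ is positive definite and $\lambda_{\max}(B(\mu))$ is genuinely the Perron root, not the modulus of some negative eigenvalue. Because the eigenprojections of $A(\mu)$ are fixed (tensor products of those of the $A^{(\xi)}$), the matrices $\{A(\mu)\}$ commute, every eigenvalue is non-increasing in $\mu$, and therefore $A(\mu_1)\succeq A(\mu_2)$ in the Loewner order whenever $\mu_1<\mu_2$. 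Congruence by $\D^{1/2}$ preserves this, giving $B(\mu_1)\succeq B(\mu_2)$ and, via the Rayleigh characterization, the \emph{weak} inequality $\lambda_{\max}(B(\mu_1))\geq\lambda_{\max}(B(\mu_2))$.

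The hard part is upgrading this to strict inequality precisely under fitness variation, because $C\eqdef B(\mu_1)-B(\mu_2)$ is only positive semidefinite. I would first pin down its kernel: a product $\prod_\xi[1-\mu(1-\lambda^{(\xi)}_{k_\xi})]$ is constant in $\mu$ iff every $\lambda^{(\xi)}_{k_\xi}=1$, i.e.\ only the Perron eigenvalue of $A(\mu)$ (identically $1$) is undiminished, so $\ker(A(\mu_1)-A(\mu_2))=\mathrm{span}\{\vv_0\}$ with $\vv_0=\Ox_\xi\vv^{(\xi)}_0$ the strictly positive Perron vector, whence $\ker C=\mathrm{span}\{\D^{-1/2}\vv_0\}$. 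Writing $\uv$ for the (simple) top unit eigenvector of $B(\mu_2)$, I would bound $\lambda_{\max}(B(\mu_1))\geq\uv\tr B(\mu_1)\uv=\lambda_{\max}(B(\mu_2))+\uv\tr C\,\uv$, reducing strictness to $\uv\notin\ker C$. The decisive computation is that $\D^{-1/2}\vv_0$ can be an eigenvector of $B(\mu_2)=\D^{1/2}A(\mu_2)\D^{1/2}$ only if $\D\vv_0=\lambda_{\max}\vv_0$; since $\vv_0$ is strictly positive and $\D$ is diagonal, this forces $\D=\lambda_{\max}\I$, i.e.\ no fitness variation. Hence whenever fitness varies within the block, $\uv$ is not parallel to $\D^{-1/2}\vv_0$, so $\uv\tr C\,\uv>0$ and $r(\M(\mu_1)\D)>r(\M(\mu_2)\D)$, making the load $1-r(\M(\mu)\D)/\max_i[w_i]$ strictly increasing on $(0,1/2)$. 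I expect the only remaining technical points to be justifying $\D\succ0$ within a viable quasispecies block (so $\D^{1/2}$ is invertible) and simplicity of the Perron eigenspace, both of which follow from the standing Perron--Frobenius assumptions on the irreducible block $\M(\mu)\D$.
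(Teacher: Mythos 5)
Your proof is correct, but it takes a genuinely different route from the paper's. The paper's \emph{SI} proof of Theorem~\ref{Theorem:MultiLoad} is essentially reduction-plus-citation: it quotes Corollaries 1 and 3 of Altenberg (2011) for the strict decrease of $r\big(\Ox_{\xi=1}^L[(1-\mu)\I^{(\xi)}+\mu\P^{(\xi)}]\D\big)$ on $(0,1/2)$ when $\D$ is nonscalar, and its only new content is the lethal-genotype case: pass to the irreducible blocks $\kappa$ of the Frobenius normal form of $\M(\mu)\D$, apply the cited corollaries to each restriction $[\M(\mu)\D]^{[\kappa]}$, and note that the quasispecies mean fitness $\max_\kappa r([\M(\mu)\D]^{[\kappa]})$ therefore decreases. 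You instead prove the monotonicity from scratch: the $\mu$-independent symmetrizer $\Sm=\Ox_\xi \Sm^{(\xi)}$, the commuting family $A(\mu)$ with eigenvalues $\beta_k(\mu)=\prod_\xi[1-\mu(1-\lambda^{(\xi)}_{k_\xi})]$, Loewner monotonicity preserved under congruence by $\D^{1/2}$, and the strictness upgrade via $\ker(A(\mu_1)-A(\mu_2))=\mathrm{span}\{\vv_0\}$ together with the observation that $\D^{-1/2}\vv_0$ can be a top eigenvector of $B(\mu_2)$ only if $\D$ is scalar. This is self-contained, exhibits exactly where $0<\mu<1/2$ enters (the factors $1-\mu(1-\lambda)$ must stay positive for a product of non-increasing factors to be non-increasing), and recovers the equality case transparently; it is in spirit close to the machinery the paper deploys later for Theorem~\ref{Theorem:ConvexMulti}, where the same $\beta_j(\mu)$ appear. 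One misattribution: positive definiteness of $A(\mu)$ is not what makes $\lambda_{\max}(B(\mu))$ equal $r(\M(\mu)\D)$ --- Perron--Frobenius already guarantees $\lambda_{\max}(B)=r(B)$ for the nonnegative symmetric $B$, since $r$ is itself an eigenvalue --- but positivity is indispensable for the monotonicity of the $\beta_k$, so the hypothesis is used in the right place.

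The one loose end is your closing sentence about blocks, which understates the issue: when lethals make $\M(\mu)\D$ reducible, the theorem requires strict decrease of \emph{each} block radius $r([\M(\mu)\D]^{[\kappa]})$ (the quasispecies present at equilibrium need not be the dominant block), and restricting $\M(\mu)$ to a block destroys the Kronecker structure, so your commuting family and the product eigenvalues $\beta_k$ simply do not exist on the restriction. Fortunately the Loewner half of your argument is exactly the part that survives: a principal submatrix of the positive semidefinite $A(\mu_1)-A(\mu_2)$ is again positive semidefinite, and since its kernel is spanned by the strictly positive, fully supported $\vv_0$, the restriction $[A(\mu_1)-A(\mu_2)]^{[\kappa]}$ to any proper block is in fact positive \emph{definite}; with $\D^{[\kappa]}\succ 0$ (viable blocks contain no lethals) the congruent difference $B^{[\kappa]}(\mu_1)-B^{[\kappa]}(\mu_2)$ is positive definite and every proper block's radius strictly decreases with no further hypothesis, while your full-space argument covers the irreducible case. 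Alternatively, run your argument once on the full space with singular $\D$, using $r(A\D)=r(\D^{1/2}A\D^{1/2})$ in place of similarity and noting that $\D^{1/2}\uv$ vanishes on lethal coordinates while $\vv_0$ does not. With either repair the proof is complete.
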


Theorems \ref{Theorem:SingleLoad} and \ref{Theorem:MultiLoad} have important implications for the theory of lethal mutagenesis.  They show that the population mean fitness will keep decreasing with greater mutation over at least the range $0 < \mu < 1/2$.   
\begin{Corollary}[Sufficiency for Lethal Mutagenesis]\label{Corollary:Lethal}
If $r(\M(1/2) \D) < 1$ then for some $\mu* < 1/2$, the population will go extinct for all $\mu > \mu*$.
\end{Corollary}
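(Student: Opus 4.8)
The plan is to read ``the population will go extinct'' as the condition $r(\M(\mu)\D) < 1$: by \eqref{eq:Haploid} the mean fitness $\wbh = r(\M(\mu)\D)$ is the per-generation multiplier of population size, so once it drops below $1$ the population declines geometrically to zero. Writing $f(\mu) \eqdef r(\M(\mu)\D)$, the goal is then to show $f$ dips below $1$ at some $\mu^* < 1/2$ and stays there. First I would establish that $f$ is continuous on $(0,1/2]$. The entries of $\M(\mu)$ are polynomials in $\mu$ by the Kronecker form \eqref{eq:MultiMuPMat}, so $\M(\mu)\D$ depends continuously (indeed analytically) on $\mu$; for $\mu > 0$ the matrix is irreducible and nonnegative, its Perron root is a simple eigenvalue, and simple eigenvalues vary continuously with the entries, so $f$ is continuous up to and including $\mu = 1/2$.

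Second, I would import the strict monotonicity already in hand. Theorem \ref{Theorem:SingleLoad} (single-event mutation) and Theorem \ref{Theorem:MultiLoad} (reversible multiple-event mutation) assert that the genetic load strictly increases --- equivalently that $f$ strictly decreases --- throughout $0 < \mu < 1/2$ whenever there is fitness variation within the irreducible block. The degenerate case $\D = c\,\I$ (no fitness variation, excluded by those theorems) is dispatched directly: since every factor $(1-\mu)\I^{(\xi)} + \mu\P^{(\xi)}$ is column-stochastic, so is $\M(\mu)$, whence $r(\M(\mu)) = 1$ and $f(\mu) = c$ for all $\mu$; the hypothesis $f(1/2) = c < 1$ then gives extinction for every $\mu$, and any $\mu^* \in [0,1/2)$ suffices.

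Third, with $f$ continuous and strictly decreasing on $(0,1/2]$ and $f(1/2) < 1$, I would finish with the intermediate value theorem. Strict monotonicity makes the set $\{\mu \in (0,1/2] : f(\mu) \geq 1\}$ a down-interval: if it is empty then extinction already holds for all $\mu > 0$ and we take $\mu^* = 0$; otherwise continuity together with $f(1/2) < 1$ produces a unique $\mu^* \in (0,1/2)$ with $f(\mu^*) = 1$, and strict decrease gives $f(\mu) < f(\mu^*) = 1$ for every $\mu > \mu^*$. Either way there is a threshold $\mu^* < 1/2$ above which the population is doomed.

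The substantive content --- the strict decline of mean fitness --- is entirely supplied by Theorems \ref{Theorem:SingleLoad}--\ref{Theorem:MultiLoad}, so the real work here is only bookkeeping, and that is where I expect the one delicate point to lie: those theorems are stated on the \emph{open} interval $(0,1/2)$, so I must be careful to extend $f$ continuously to the endpoint $\mu = 1/2$ (which the polynomial-entry argument does) and to treat the fitness-constant case separately, since it is exactly the case the monotonicity theorems exclude. No deeper difficulty is anticipated; the corollary is a continuity-plus-monotonicity packaging of the preceding results.
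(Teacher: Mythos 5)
Your proof is correct and is essentially the paper's own route: the paper supplies no separate proof of this corollary in the SI (its numbering is simply skipped there), treating it as an immediate consequence of Theorems \ref{Theorem:SingleLoad} and \ref{Theorem:MultiLoad} via the identification of mean fitness with the growth rate $r(\M(\mu)\D)$, and your continuity/intermediate-value bookkeeping plus the constant-fitness case merely makes that implication explicit. The one caveat applies equally to the paper: since the monotonicity theorems cover only $0<\mu<1/2$, the argument establishes extinction for $\mu\in(\mu^*,1/2]$, and the literal claim ``for all $\mu>\mu^*$'' beyond $1/2$ is not proved by either your write-up or the paper's.
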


The observation in \citet{\NCH} that ``Perhaps surprisingly, the tendency to evolve toward highly
connected parts of the network is independent of evolutionary parameters---such as mutation rate,'' is solely due to there being only two fitnesses in the model, $0$, and $w$.  

\begin{Result}[Single-Event Relative Robustness Increases with Mutation Rates] \label{Result:Increases}
As the mutation rate increases with single-event mutation, if a quasispecies has more than one [only one] nonlethal fitness value, then the relative mutational robustness $\RMR(\M(\mu), \D)$ strictly increases [is constant] with mutation rate. 
\end{Result}
For multiple-event mutation, the relative mutational robustness may increase or decrease in $\mu$, as found in initial numerical exploration. 

\section{Results for Reversible Markov Chain Mutation}
Theorem \ref{Theorem:MultiLoad} was provable due to the tractability afforded by reversible Markov chains.  This same tractability carries over to the analysis of mutational robustness.  For a reversible chain, $\M$ can be represented (\citet[p. 33]{Keilson:1979}, \citet{Hermisson:etal:2002:Mutation}, \citet{Altenberg:2011:Mutation}) as: 
\ab{
\M = \D_{\piv}^{1/2}\K \Lam \K\tr \D_{\piv}^{-1/2}
}
where reversibility requires $M_{ij} \pi_j = M_{ji} \pi_j$, and
\desclist{
\item[$\piv$] $ = (\pi_1, \ldots, \pi_n)\tr = \M \piv$ is the stationary distribution for $\M$, i.e.\ its \emph{Perron vector},
\item[$\D_{\piv}^{1/2}$] is the diagonal matrix of the square roots of $\pi_i$,
\item[$\K$] is the matrix of orthogonalized eigenvectors of $\M$, $\K \K\tr = \K\tr\K = \I$,  i.e.\  for each $j=1, \ldots, n$, $\sum_{i=1}^n K_{ij}^2 = 1$, $\sum_{i=1}^n K_{ij} K_{ih} = 0$ if $j\neq h$, 
\item[\protect{$[\K]_j$}] is the $j$-th column of $\K$, 
\item[$K_{i1} = \pi_i^{1/2}$] or $[\K]_1 = \D_\piv^{1/2} \ev$,
\item[$\Lam$] $ \eqdef \diag{\lambda_i}$ is the diagonal matrix of eigenvalues  $\lambda_1 > \lambda_2 \geq \cdots \geq \lambda_n$ of $\M$.
}
Applying this representation of $\M$, we find that the spectral radius of $\M \D$ is that of a symmetric matrix:
\ab{
r(\M \D) &
= r(\D_{\piv}^{1/2}\K \Lam \K\tr \D_{\piv}^{-1/2} \D) \\&
= r(\K \Lam \K\tr \D_{\piv}^{-1/2} \D \D_{\piv}^{1/2}) \\&
= r(\K \Lam \K\tr \D) = r( \Lam \K\tr \D \K) \\&
= r(\D^{1/2}\K \Lam \K\tr  \D^{1/2}).
}
This allows use of the Rayleigh-Ritz variational formula for the spectral radius (\citealt[pp. 172--173]{Wilkinson:1965}, \citealt[pp. 176--180]{Horn:and:Johnson:1985}), to give the first main result.
\begin{Theorem}[Expression for the Spectral Radius]
When mutation has the transition matrix of a reversible Markov chain, $\M = \D_{\piv}^{1/2}\K \Lam \K\tr \D_{\piv}^{-1/2}$, then
\an{
r(\M\D) &= \max_{\x\tr\x = 1}  \x\tr \D^{1/2}\K \Lam \K\tr  \D^{1/2} \x \notag \\&
= \max_{\x\tr\x = 1} \sumjn \lambda_j \bigg(\sumin x_i \sqrt{w_i} K_{ij}\bigg)^2. \label{eq:rMD}
}
\end{Theorem}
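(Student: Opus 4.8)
The plan is to reduce everything to the Rayleigh--Ritz characterization of the largest eigenvalue of the real symmetric matrix $\Sm \eqdef \D^{1/2}\K \Lam \K\tr \D^{1/2}$. The chain of identities preceding the statement already establishes $r(\M\D) = r(\Sm)$: conjugate $\M\D$ by the invertible $\D_\piv^{1/2}$, cancel the commuting diagonal factors $\D_\piv^{-1/2}\D\D_\piv^{1/2}=\D$, and apply $r(\A\B)=r(\B\A)$ to split $\D=\D^{1/2}\D^{1/2}$ across $\K\Lam\K\tr$. So the genuine content of the theorem is to (i) pass from $r(\Sm)$ to the variational maximum and (ii) expand the quadratic form. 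First I would observe that $\Sm\tr=\Sm$, since $\Lam$ and $\D^{1/2}$ are diagonal; hence its eigenvalues are real and the Rayleigh--Ritz formula yields its largest \emph{algebraic} eigenvalue, $\mu_{\max}(\Sm)=\max_{\x\tr\x=1}\x\tr\Sm\x$.

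The essential step is to show that this largest algebraic eigenvalue is in fact the spectral radius, $\mu_{\max}(\Sm)=r(\Sm)$. This is not automatic: the eigenvalues $\lambda_j$ of $\M$ need not be nonnegative (only $\lambda_1=1$ is forced, $\M$ being column-stochastic), so $\Sm$ can carry genuinely negative eigenvalues and a priori its spectral radius might be realized by one of them. To exclude this I would invoke Perron--Frobenius on the nonnegative matrix $\M\D$: its spectral radius $r(\M\D)$ is itself a nonnegative eigenvalue. The transformations relating $\Sm$ to $\M\D$---a genuine similarity by the invertible $\D_\piv^{1/2}$ followed by the identity $r(\A\B)=r(\B\A)$, which is valid even when $\D^{1/2}$ is singular---show that $\Sm$ and $\M\D$ share their nonzero eigenvalues. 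Hence $r(\M\D)$, when positive, is a nonzero eigenvalue of $\Sm$, so the nonnegative value $r(\Sm)=r(\M\D)$ is attained as an actual eigenvalue and must therefore be the largest algebraic one; this gives $\mu_{\max}(\Sm)=r(\Sm)=r(\M\D)$ and establishes the first equality.

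For the second equality I would change coordinates by setting $\y \eqdef \K\tr\D^{1/2}\x$, whose $j$-th component is $y_j=\sumin x_i\sqrt{w_i}\,K_{ij}$. Because $\K$ is orthogonal this is merely a rotation, and the quadratic form collapses to $\x\tr\Sm\x=\y\tr\Lam\y=\sumjn \lambda_j\,y_j^2=\sumjn \lambda_j\big(\sumin x_i\sqrt{w_i}\,K_{ij}\big)^2$, which is the stated expression; taking the maximum over $\x\tr\x=1$ on both sides finishes the proof.

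I expect the main obstacle to be precisely the sign issue in the second paragraph. Rayleigh--Ritz by itself delivers only the largest signed eigenvalue, whereas the object of interest is the spectral radius, and the two coincide only because of the Perron--Frobenius nonnegativity of $\M\D$; this is the one place where the nonnegativity of the evolutionary dynamics, rather than the mere symmetry of $\Sm$, is indispensable. Everything else---the symmetry of $\Sm$ and the orthogonal change of variables---is routine linear algebra.
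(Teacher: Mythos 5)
Your proof is correct and follows essentially the same route as the paper's: the identical chain of similarity and $r(\A\B)=r(\B\A)$ identities reducing $r(\M\D)$ to the symmetric matrix $\D^{1/2}\K\Lam\K\tr\D^{1/2}$, then Rayleigh--Ritz with the orthogonal change of variables $\y=\K\tr\D^{1/2}\x$. Your second paragraph in fact supplies a step the paper elides---its cited ``Rayleigh quotient characterization of the spectral radius'' is literally the characterization of the largest \emph{algebraic} eigenvalue of a symmetric matrix, and your Perron--Frobenius argument (nonnegativity of $\M\D$, plus the fact that the transformations preserve the nonzero spectrum even when $\D$ is singular) is precisely what identifies that eigenvalue with $r(\M\D)$.
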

Here we see a fundamental property of the population mutational robustness:
\begin{Corollary}
Holding $\K$ and $\w$, fixed, $r(\M \D)$ 
is non-decreasing in each eigenvalue of the mutation matrix, $\lambda_i$, $i=2, \ldots, n$.
\end{Corollary}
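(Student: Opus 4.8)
The plan is to read the monotonicity directly off the variational formula \eqref{eq:rMD} of the preceding theorem, separating the part of the objective that depends on the eigenvalues from the part that does not. First I would write $c_j(\x) \eqdef \left(\sumin x_i \sqrt{w_i}\, K_{ij}\right)^2$, so that \eqref{eq:rMD} reads
\an{
r(\M\D) = \max_{\x\tr\x=1} \sumjn \lambda_j\, c_j(\x). \notag
}
The decisive observation is that each coefficient $c_j(\x)$ is a squared real quantity, hence $c_j(\x) \geq 0$, and, crucially, depends only on the fixed data $\K$ and $\w$ together with the optimization variable $\x$ — not on the eigenvalues $\lambda_1, \ldots, \lambda_n$. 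Thus for each \emph{fixed} $\x$ the objective is an affine function of $(\lambda_1,\ldots,\lambda_n)$ with nonnegative slopes.

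Second, I would fix an index $i \in \{2, \ldots, n\}$ and compare two eigenvalue assignments that agree in every coordinate except the $i$-th, with $\lambda_i' \geq \lambda_i$. For every unit vector $\x$,
\an{
\sumjn \lambda_j' \, c_j(\x) = \sumjn \lambda_j\, c_j(\x) + (\lambda_i' - \lambda_i)\, c_i(\x) \geq \sumjn \lambda_j\, c_j(\x), \notag
}
since $\lambda_i' - \lambda_i \geq 0$ and $c_i(\x) \geq 0$. Taking the maximum over the unit sphere on both sides preserves the inequality, because the pointwise maximum of a family of functions each non-decreasing in a parameter is itself non-decreasing in that parameter. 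This yields that $r(\M\D)$ evaluated at $\lambda_i'$ is at least $r(\M\D)$ evaluated at $\lambda_i$, which is the claim.

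One point I would flag is that the Rayleigh–Ritz characterization used in \eqref{eq:rMD} expresses the largest eigenvalue of the symmetric matrix $\D^{1/2} \K \Lam \K\tr \D^{1/2}$ for \emph{any} orthogonal $\K$ and \emph{any} diagonal $\Lam$, so the argument does not require the eigenvalues to retain the ordering $\lambda_1 > \lambda_2 \geq \cdots \geq \lambda_n$ as $\lambda_i$ is perturbed; holding $\K$ and $\w$ fixed while varying a single $\lambda_i$ is therefore legitimate. I do not expect a genuine obstacle here, since the entire content is the sign $c_i(\x) \geq 0$ combined with monotonicity of the max. The only subtlety is to claim no more than is true: a maximizing $\x^\ast$ may have $c_i(\x^\ast) = 0$, so the dependence is non-decreasing rather than strictly increasing, exactly matching the statement.
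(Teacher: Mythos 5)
Your proof is correct and is essentially the paper's own reasoning: the corollary is presented as an immediate consequence of the variational formula \eqref{eq:rMD} (no separate proof appears in the Supporting Information), and your argument---that the coefficients $\left(\sumin x_i \sqrt{w_i}\,K_{ij}\right)^2$ are nonnegative and independent of $\Lam$, so the objective is non-decreasing in each $\lambda_i$ for every fixed unit $\x$, and taking the maximum over the sphere preserves this---is exactly the intended one. Your closing caveats (that the ordering of the $\lambda_i$ plays no role in the Rayleigh--Ritz characterization, and that a maximizer may have zero weight on the perturbed coordinate, which is why the conclusion is non-decreasing rather than strict) are accurate and match the statement as claimed.
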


\subsection{Bounds Derived from the Mutational {Eigenvectors}}

Weinberger \citet{Weinberger:1991:Fourier} introduced the idea of representing fitness landscapes using the eigenvectors of the matrix representing the mutation structure, and this ``Fourier expansion'' approach has been elaborated by numerous further studies  (cf. \citet{Stadler:1996:Landscapes,
Klemm:and:Stadler:2014:Rugged}).  The second main result is now presented.
\begin{Theorem}{\bf (Lower Bound from the Fourier Representation of Fitness)}\label{Theorem:Main:Eigenvector}
Represent the fitnesses $w_i$ as $w_i =  \sumjn K_{ij} a_j$ or $\wv = \K \av$, where $a_i$ is the Fourier coefficient for eigenvector $[\K]_i$.  Then $$r(\M\D)\geq\frac{1}{\sumjn w_j} \sumjn \lambda_j a_j^2.$$
\end{Theorem}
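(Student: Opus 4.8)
The plan is to read the bound straight off the variational characterization established just above in the Expression for the Spectral Radius theorem, which gives
$$r(\M\D) = \max_{\x\tr\x = 1} \sumjn \lambda_j \Big(\sumin x_i \sqrt{w_i}\, K_{ij}\Big)^2 .$$
Because this is a maximum over all unit vectors, evaluating the objective at any single admissible $\x$ immediately yields a lower bound on $r(\M\D)$. The entire argument therefore reduces to exhibiting one test vector whose objective value equals the right-hand side of the claim.

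First I would record the elementary consequence of orthogonality that converts the inner sum into a Fourier coefficient: since $\K\tr\K = \I$ and $\wv = \K\av$, the coefficients satisfy $\av = \K\tr\wv$, that is $a_j = \sumin K_{ij} w_i$. The key step is then to choose the test vector proportional to the square root of fitness,
$$x_i = \frac{\sqrt{w_i}}{\big(\sumjn w_j\big)^{1/2}} ,$$
which is real because $w_i \geq 0$ and satisfies $\x\tr\x = \big(\sumin w_i\big)\big/\big(\sumjn w_j\big) = 1$, so it is admissible. Substituting it into the inner sum and using the orthogonality identity gives
$$\sumin x_i \sqrt{w_i}\, K_{ij} = \frac{1}{\big(\sumjn w_j\big)^{1/2}} \sumin w_i K_{ij} = \frac{a_j}{\big(\sumjn w_j\big)^{1/2}} .$$
Feeding this back into the variational formula and discarding the maximization (which can only increase the value) produces
$$r(\M\D) \geq \sumjn \lambda_j \frac{a_j^2}{\sumjn w_j} = \frac{1}{\sumjn w_j} \sumjn \lambda_j a_j^2 ,$$
which is exactly the asserted bound.

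I do not expect a genuine obstacle here; the only thing that requires insight rather than routine computation is recognizing that the vector $\x \propto \sqrt{\w}$ is precisely the one whose $\K\tr$-image reproduces the Fourier coefficients $a_j$, collapsing the double sum to $\sumjn \lambda_j a_j^2$. After that choice everything is orthogonality and arithmetic. The one hypothesis worth checking explicitly is that $\sumjn w_j > 0$ so the normalization is legitimate; this holds whenever the quasispecies contains a nonlethal genotype, which is assumed throughout.
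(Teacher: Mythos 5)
Your proof is correct and matches the paper's own argument essentially step for step: the paper likewise takes the test vector $x_i = c\sqrt{w_i}$ with $c = 1/\sqrt{\ev\tr\D\ev}$ in the Rayleigh--Ritz formula and then uses $\K\tr\K = \I$ with $\wv = \K\av$ to collapse the sum to $\frac{1}{\sumjn w_j}\sumjn \lambda_j a_j^2$. Your additional remark that $\sumjn w_j > 0$ is needed for the normalization is a sound, if minor, point the paper leaves implicit.
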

The Fourier coefficients $a_j$ measure the `alignment' between $\w$ and each column $[\K]_j$, and $a_j^2$ is its ``amplitude'' \citep{Hordijk:and:Stadler:1998:Amplitude}.   From the theory of discrete nodal domains \citep{Davies:etal:2001:Discrete}, we know that as $j$ increases from $1$ to $n$, the eigenvectors $[K]_j$ exhibit more sign changes between their mutationally connected domains, which can be considered the `frequency' of the eigenvector, and contribute to ruggedness in the fitness landscape.  Greater ruggedness of the landscape (weight on the $a_j$ with larger $j$) corresponds to a smaller lower bound for the population mutational robustness, an effect that can appear even before there are multiple fitness peaks.

Because Theorem \ref {Theorem:Main:Eigenvector} applies to arbitrary fitnesses, it includes the mutational landscape modifiers described in \citet{Hermisson:etal:2002:Mutation}, early envisioned in \cite{Conrad:1979:Mutation}.  Note that the range of mutational robustness values possible from different arrangements of fitness values depends on the spread of the eigenvalues $\lambda_2, \ldots, \lambda_n$.

Prior applications of Fourier expansions of landscapes (e.g. \citet{Stadler:1996:Landscapes}, \citet{Reidys:and:Stadler:2001:Neutrality}) do not actually examine selection-mutation dynamics.  Rather, only mutation occurs, and fitnesses do not enter into reproduction but are simply recorded in each generation to produce a time series.  The main object of analysis has been the autocorrelation function of this time series when it is stationary, derived from the covariance between parent and offspring fitnesses.  The next result ties the mutation-only covariance to the mutation-selection dynamics.

\begin{Theorem}[Relation to Random Mutational Walks]
For a stationary mutational random walk with reversible Markov chain transition matrix $\M$, and stationary distribution $\piv$, let $\Cov_\piv(W_{\Pc}, W_{\Oc})$ be the $\piv$-weighted covariance between parent and offspring fitnesses.  Then at a mutation-selection balance, $\wbh \, \zvh = \M\D\zvh$,
\ab{
r(\M\D) \geq \left(\sumjn w_j \pi_j \right) \left[1 + \frac{1}{{(\w\tr\piv)}^2} \Cov_\piv[ {W_\Pc}, {W_\Oc}]\right].
}
\end{Theorem}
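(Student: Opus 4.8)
The plan is to collapse the claimed inequality into a single Rayleigh-quotient estimate against the symmetric matrix already produced in the Expression for the Spectral Radius theorem. First I would put the right-hand side into matrix form. Writing $\D_{\piv} \eqdef \diag{\pi_i}$, at stationarity the parent genotype is drawn from $\piv$ and the offspring from the corresponding column of $\M$, so the joint law of (parent, offspring)$\,=(j,i)$ is $\pi_j M_{ij}$. Stationarity gives $\M\piv=\piv$, so both marginals have mean $\w\tr\piv$, i.e. $\Ex{W_{\Pc}}=\Ex{W_{\Oc}}=\w\tr\piv$, while $\Ex{W_{\Pc}W_{\Oc}}=\sum_{i,j} w_i M_{ij}\pi_j w_j=\w\tr\M\D_{\piv}\w$. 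Hence $\Cov_\piv[W_{\Pc},W_{\Oc}]=\w\tr\M\D_{\piv}\w-(\w\tr\piv)^2$, and substituting this into the bracketed factor collapses the whole right-hand side to $\w\tr\M\D_{\piv}\w/(\w\tr\piv)$. So the theorem is equivalent to the clean statement $r(\M\D)\ge \w\tr\M\D_{\piv}\w/(\w\tr\piv)$.

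Next I would invoke the reversible representation $\M=\D_{\piv}^{1/2}\K\Lam\K\tr\D_{\piv}^{-1/2}$, under which the numerator symmetrizes: $\w\tr\M\D_{\piv}\w=(\D_{\piv}^{1/2}\w)\tr\K\Lam\K\tr(\D_{\piv}^{1/2}\w)$. The Expression for the Spectral Radius theorem identifies $r(\M\D)=\max_{\x\tr\x=1}\x\tr\D^{1/2}\K\Lam\K\tr\D^{1/2}\x$ as the largest eigenvalue of the symmetric matrix $\D^{1/2}\K\Lam\K\tr\D^{1/2}$, and any Rayleigh quotient lower-bounds that largest eigenvalue. It therefore suffices to exhibit one test vector whose quotient hits the target. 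I would take $\y$ with components $y_i=\sqrt{w_i\pi_i}$, so that $\D^{1/2}\y=\D_{\piv}^{1/2}\w$. Then the numerator $\y\tr\D^{1/2}\K\Lam\K\tr\D^{1/2}\y$ equals $(\D_{\piv}^{1/2}\w)\tr\K\Lam\K\tr(\D_{\piv}^{1/2}\w)=\w\tr\M\D_{\piv}\w$, while the denominator $\y\tr\y=\sum_i w_i\pi_i=\w\tr\piv$, giving exactly the required inequality.

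A point I would be careful about is that the Rayleigh lower bound requires $r(\M\D)$ to be the largest \emph{signed} eigenvalue of the symmetric matrix, not merely its largest magnitude. This follows because $\D^{1/2}\K\Lam\K\tr\D^{1/2}$ is similar to the nonnegative matrix $\M\D$, whose spectral radius is, by Perron--Frobenius, a genuine positive eigenvalue; being the largest in magnitude among real eigenvalues, it sits at the top of the spectrum. I would also note that defining $\y$ directly through $y_i=\sqrt{w_i\pi_i}$ sidesteps any appearance of $\D^{-1/2}$, so the argument survives the presence of zero fitnesses.

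The main obstacle is really just the bookkeeping of the first step---correctly reading off the joint distribution of parent and offspring fitnesses and confirming via stationarity that both marginals equal $\w\tr\piv$, so that the covariance takes the form $\w\tr\M\D_{\piv}\w-(\w\tr\piv)^2$. Once that algebraic identity is in hand, the result needs no analytic machinery beyond the already-established spectral-radius formula and a single well-chosen Rayleigh quotient.
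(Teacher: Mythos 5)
Your proposal is correct and takes essentially the same route as the paper: the paper's proof likewise reduces the right-hand side via the identity $\Cov_\piv[W_\Pc,W_\Oc]=\w\tr\M\D_{\piv}\w-(\w\tr\piv)^2$ (established as an intermediate step, eq.~(10), in its parent--offspring covariance lemma) and then lower-bounds the Rayleigh quotient of $\D^{1/2}\K\Lam\K\tr\D^{1/2}$ at exactly your test vector, $(\D_{\piv}\D)^{1/2}\ev$ with entries $\sqrt{\pi_i w_i}$. The only difference is order of operations---you simplify the claimed bound to $\w\tr\M\D_{\piv}\w/(\w\tr\piv)$ before invoking the Rayleigh quotient, whereas the paper bounds first and then recognizes the covariance---and your explicit justification that the spectral radius is the top signed eigenvalue is a point the paper leaves implicit in its Expression for the Spectral Radius theorem.
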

The expression states that the mean fitness of the population is increased above the $\pi$-weighted average fitness of the genotype space by the covariance between the normalized parent and offspring fitnesses in the mutational random walk.  

\begin{Corollary}
When the eigenvalues of $\M$ are positive, then $\Cov_\piv[ {W_\Pc}, {W_\Oc}] \geq 0$.
\end{Corollary}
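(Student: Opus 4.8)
The plan is to derive an exact spectral expression for the covariance and then read off its sign. First I would write the covariance from the joint law of a parent--offspring pair in the stationary walk: the parent is genotype $j$ with probability $\pi_j$ and the offspring is genotype $i$ with conditional probability $M_{ij}$, so the pair has joint probability $M_{ij}\pi_j$. Stationarity $\M\piv=\piv$ gives $\sumjn M_{ij}\pi_j=\pi_i$, so the offspring marginal is again $\piv$ and both fitness means equal $\w\tr\piv$. Hence $\Cov_\piv[W_\Pc,W_\Oc]=\sumijn w_i w_j M_{ij}\pi_j-(\w\tr\piv)^2=\w\tr\M\D_\piv\w-(\w\tr\piv)^2$, where $\D_\piv\eqdef\diag{\pi_i}$.

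Next I would diagonalize this quadratic form using the reversible representation $\M=\D_\piv^{1/2}\K\Lam\K\tr\D_\piv^{-1/2}$ already in hand. Right-multiplying by $\D_\piv$ produces the symmetric matrix $\M\D_\piv=\D_\piv^{1/2}\K\Lam\K\tr\D_\piv^{1/2}$, so with $\uv\eqdef\K\tr\D_\piv^{1/2}\w$ the form becomes $\w\tr\M\D_\piv\w=\uv\tr\Lam\uv=\sumjn\lambda_j u_j^2$.

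The key step is isolating the leading eigenmode. Because the first eigenvector is $[\K]_1=\D_\piv^{1/2}\ev$, its coordinate is $u_1=(\D_\piv^{1/2}\ev)\tr\D_\piv^{1/2}\w=\ev\tr\D_\piv\w=\w\tr\piv$, and because $\M$ is a stochastic reversible transition matrix its Perron root is $\lambda_1=1$. The $j=1$ term therefore contributes exactly $\lambda_1 u_1^2=(\w\tr\piv)^2$, which cancels the subtracted square and leaves the clean identity $\Cov_\piv[W_\Pc,W_\Oc]=\sum_{j=2}^n\lambda_j u_j^2$. The corollary is then immediate: if every eigenvalue $\lambda_j>0$ then each summand $\lambda_j u_j^2\ge 0$, so the covariance is nonnegative.

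There is no genuine obstacle here; the calculation is routine once the covariance is recognized as a quadratic form in $\w$. The only point needing care is the bookkeeping that makes the $\lambda_1=1$ mode cancel the mean-square term, which relies on the normalization $[\K]_1=\D_\piv^{1/2}\ev$ and on both marginals of the stationary pair being $\piv$. I would also remark that the resulting identity says more than the corollary: the parent--offspring covariance is precisely the $\lambda$-weighted sum of the squared projections of $\D_\piv^{1/2}\w$ onto the nonleading eigenvectors $[\K]_j$, so negative correlation can arise only through landscape components aligned with negative eigenvalues of $\M$, and it vanishes entirely in the house-of-cards case, where $\lambda_j=0$ for all $j\ge 2$.
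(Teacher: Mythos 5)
Your proposal is correct and follows essentially the same route as the paper: the paper's Lemma on parent--offspring fitness covariance derives exactly your identity $\Cov_\piv[W_\Pc,W_\Oc]=\sum_{j=2}^n \lambda_j\big(\sumin w_i\pi_i^{1/2}K_{ij}\big)^2$ by writing the covariance as the quadratic form $\w\tr\M\D\piv-(\w\tr\piv)^2$, diagonalizing with $\M=\D_\piv^{1/2}\K\Lam\K\tr\D_\piv^{-1/2}$, and cancelling the leading mode via $[\K]_1=\D_\piv^{1/2}\ev$ (the paper subtracts the rank-one projector $[\K]_1[\K]_1\tr$, which is the same bookkeeping as your $\lambda_1 u_1^2=(\w\tr\piv)^2$ cancellation). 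The corollary then follows in both treatments by reading off the sign of each summand, so your argument matches the paper's in substance and detail.
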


\subsection{Bounds Derived from the Mutational Eigenvalues}
Now we consider bounds on $r(\M\D)$ that derive from the eigenvalues of $\M$ for any fitness landscape, to conclude the main results.
\begin{Theorem}[Upper Bound from the Spectral Gap] \label{Theorem:Main:Eigenvalue}
Assuming $0 < \mu < 1/2$, then
\ab{
\sumin w_i \pi_i 
&\leq r(\M\D) \\&
\leq \sumin w_i \pi_i + \lambda_2(\M) \left(\max_i[w_i]- \sumin w_i \pi_i\right),
}
where $\lambda_2(\M) = 1{-}\mu {+} \mu {\dsty \max_{\xi=1, \ldots, L}} \lambda_{2}(\P^{(\xi)})$.
\end{Theorem}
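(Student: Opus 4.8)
The plan is to run both inequalities through the Rayleigh--Ritz variational formula established in the preceding theorem, namely $r(\M\D) = \max_{\x\tr\x=1}\sumjn\lambda_j\, y_j^2$ with $y_j := \sumin x_i\sqrt{w_i}K_{ij}$. Two structural facts do all the work: $\lambda_1 = 1$ with $[\K]_1 = \D_{\piv}^{1/2}\ev$, so the leading coordinate is the linear form $y_1 = \sumin \sqrt{\pi_i w_i}\,x_i$; and the orthogonality $\K\tr\K = \I$ gives $\sumjn y_j^2 = \|\K\tr\D^{1/2}\x\|^2 = \x\tr\D\x = \sumin w_i x_i^2$. I would isolate the $j=1$ term throughout and control the tail $\sum_{j\geq 2}\lambda_j y_j^2$ by the eigenvalue ordering.

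First I would dispatch the sign bookkeeping, since this is where the hypothesis $0<\mu<1/2$ enters. Each single-locus factor $(1-\mu)\I\sxi+\mu\P\sxi$ has eigenvalues $1-\mu+\mu\lambda_k(\P\sxi)\in[1-2\mu,\,1]$, all strictly positive for $\mu<1/2$; since the spectrum of the Kronecker product $\M$ consists of products of these, every eigenvalue of $\M$ is positive. The same product structure identifies the subdominant eigenvalue: the largest value below $1$ is obtained by replacing the leading factor eigenvalue $1$ in exactly one coordinate $\xi$ by its runner-up $1-\mu+\mu\lambda_2(\P\sxi)$ and leaving the rest at $1$ (reducing two or more factors only shrinks the product of positive numbers), which yields $\lambda_2(\M) = 1-\mu+\mu\max_\xi\lambda_2(\P\sxi)$.

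For the lower bound I would test the variational formula against $x_i\propto\sqrt{\pi_i w_i}$. A direct computation gives $y_1^2 = \sumin w_i\pi_i$, and because all $\lambda_j\geq 0$ the discarded tail $\sum_{j\geq 2}\lambda_j y_j^2$ is nonnegative, so $r(\M\D)\geq\sumin w_i\pi_i$. For the upper bound I would estimate the tail by the ordering, $\sum_{j\geq 2}\lambda_j y_j^2\leq\lambda_2\sum_{j\geq 2}y_j^2 = \lambda_2\big(\sumin w_i x_i^2 - y_1^2\big)$, so that for every unit $\x$,
\[
\sumjn\lambda_j y_j^2 \leq (1-\lambda_2)\,y_1^2 + \lambda_2\sumin w_i x_i^2 .
\]
I would then bound the two pieces separately over the unit sphere: Cauchy--Schwarz gives $y_1^2\leq\big(\sumin\pi_i w_i\big)\big(\sumin x_i^2\big)=\sumin\pi_i w_i$, while $\sumin w_i x_i^2\leq\max_i[w_i]$. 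Since $1-\lambda_2\geq 0$ and $\lambda_2\geq 0$ (here again $\mu<1/2$ is used), these combine to
\[
r(\M\D)\leq (1-\lambda_2)\sumin\pi_i w_i + \lambda_2\max_i[w_i] = \sumin w_i\pi_i + \lambda_2\!\left(\max_i[w_i]-\sumin w_i\pi_i\right),
\]
the claimed bound.

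The main obstacle is conceptual rather than computational: the upper bound estimates the two terms $y_1^2$ and $\sumin w_i x_i^2$ by separate maximizers of their respective Rayleigh quotients, so one must be satisfied that the sum of the two individual maxima is a legitimate upper bound even though no single $\x$ attains both simultaneously --- this is valid precisely because both coefficients $1-\lambda_2$ and $\lambda_2$ are nonnegative. The only genuine hypothesis-dependence is the nonnegativity of the eigenvalues, which fails once $\mu\geq 1/2$, and this is exactly why the statement is restricted to $0<\mu<1/2$.
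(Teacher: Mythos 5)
Your proof is correct and takes essentially the same route as the paper's: both start from the Rayleigh--Ritz form $r(\M\D)=\max_{\x\tr\x=1}\sumjn\lambda_j\big(\x\tr\D^{1/2}[\K]_j\big)^2$, isolate the $j=1$ term, bound the tail by $\lambda_2$, split the maximum into the two separate Rayleigh quotients $\max_{\x\tr\x=1}(\x\tr\D^{1/2}\piv^{1/2})^2=\sumin w_i\pi_i$ and $\x\tr\D\x\leq\max_i[w_i]$, and obtain the lower bound by discarding the nonnegative tail, with $0<\mu<1/2$ entering exactly where you say it does (positivity of the eigenvalues). Your only genuine addition is the explicit Kronecker-spectrum argument identifying $\lambda_2(\M)=1-\mu+\mu\max_\xi\lambda_2(\P\sxi)$, which the paper asserts in the theorem statement but does not prove.
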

Theorem \ref{Theorem:Main:Eigenvalue} when applied to a neutral network at biological mutation rates (which give $\lambda_i > 0$ for all $i$) yields:
\begin{Corollary}[Spectral Gap Bound for Neutral Networks] \label{Result:Main:Neutral}
Let the neutral network be referred to as $\Nc = \{i : w_i = w > 0\}$.  Then
\an{
\sum_{i \in \Nc} \pi_i  \leq \frac{r(\M\D)}{w}
= \AMR(\M, \D) 
\leq \sum_{i \in \Nc} \pi_i + \lambda_2 \sum_{i \notin \Nc} \pi_i. \label{eq:GapNeutral}
}
\end{Corollary}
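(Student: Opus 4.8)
The plan is to obtain this corollary as a direct specialization of Theorem~\ref{Theorem:Main:Eigenvalue} to the two-valued neutral-network fitness landscape. First I would record that on a neutral network, where off-network genotypes are lethal, the fitness vector takes only the values $w_i = w$ for $i \in \Nc$ and $w_i = 0$ for $i \notin \Nc$, so that $\max_i[w_i] = w$ and the $\piv$-weighted mean fitness collapses to
\[
\sumin w_i \pi_i = w \sum_{i \in \Nc} \pi_i .
\]
Substituting these two identities into the lower bound of Theorem~\ref{Theorem:Main:Eigenvalue} gives $r(\M\D) \geq w \sum_{i \in \Nc} \pi_i$, and dividing through by $\max_i[w_i] = w$ yields the left inequality of \eqref{eq:GapNeutral} together with the definition $\AMR(\M,\D) = r(\M\D)/\max_i[w_i]$.

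For the upper bound I would insert the same two identities into the right-hand side of Theorem~\ref{Theorem:Main:Eigenvalue}, obtaining
\[
r(\M\D) \leq w \sum_{i \in \Nc} \pi_i + \lambda_2\Big(w - w \sum_{i \in \Nc} \pi_i\Big).
\]
The only algebraic step is to recognize, since $\piv$ is a probability vector with $\sumin \pi_i = 1$, that $1 - \sum_{i \in \Nc} \pi_i = \sum_{i \notin \Nc} \pi_i$; factoring $w$ out of the right-hand side and dividing by $w$ then produces the right inequality of \eqref{eq:GapNeutral}.

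The remaining point to check is that the hypotheses of Theorem~\ref{Theorem:Main:Eigenvalue} are genuinely in force, and this is where I would take the most care. The theorem requires $0 < \mu < 1/2$, and I would verify that this is exactly the regime in which every eigenvalue of $\M$ is positive: each factor $(1-\mu)\I^{(\xi)} + \mu\P^{(\xi)}$ has eigenvalues $1 - \mu + \mu\,\lambda_i(\P^{(\xi)}) \geq 1 - 2\mu > 0$, using $\lambda_i(\P^{(\xi)}) \geq -1$ for a column-stochastic matrix, and the eigenvalues of the Kronecker product \eqref{eq:MultiMuPMat} are products of these. This justifies the parenthetical phrase ``which give $\lambda_i > 0$ for all $i$'' and, in particular, guarantees $0 < \lambda_2 < 1$, so that the two bounds squeeze $\AMR(\M,\D)$ strictly between $\sum_{i \in \Nc}\pi_i$ and $1$. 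I expect no genuine obstacle here: the mathematical content is entirely carried by Theorem~\ref{Theorem:Main:Eigenvalue}, and the corollary is just the arithmetic of collapsing a binary fitness landscape onto $\Nc$ and its complement.
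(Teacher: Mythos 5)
Your proposal is correct and matches the paper's own route exactly: the paper offers no separate proof, deriving the corollary directly from Theorem~\ref{Theorem:Main:Eigenvalue} by the same substitutions $\max_i[w_i]=w$ and $\sumin w_i\pi_i = w\sum_{i\in\Nc}\pi_i$, with $1-\sum_{i\in\Nc}\pi_i=\sum_{i\notin\Nc}\pi_i$ and division by $w$. Your verification that $0<\mu<1/2$ forces all eigenvalues of $\M$ positive (via $1-\mu+\mu\lambda_i(\P^{(\xi)})\geq 1-2\mu>0$ for the real eigenvalues of the reversible factors, multiplied through the Kronecker product) is precisely the justification the paper compresses into the phrase ``at biological mutation rates (which give $\lambda_i>0$ for all $i$).''
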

Here we see the explanation for the different mutational loads in Fig. \ref{fig:TwoTopologies}.  For the trinucleotide network, $\lambda_2= 5/9$, while for the copy-number variation network, $\lambda_2= 0.999$.  Eq. \eqref{eq:GapNeutral} gives lower bounds on load as $(1-\lambda_2)/64$.  Comparison of the bounds with the actual loads give $0.007 < 0.0138, 0.00002 < 0.0003$, respectively, for the two networks.

Next we shall see that the convexity of $r(\M(\mu)\D)$ in $\mu$ seen for single-event mutation in Result \ref {Result:Increases} extends to multiple-event reversible mutation.
\begin{Theorem}[Convexity of the Spectral Radius in $\mu$]\label {Theorem:ConvexMulti}
In the case where $L$ loci mutate independently at rate $\mu$, each in a reversible Markov chain, then for $0 < \mu <  1/2$, 
\ab{
&r(\M(\mu) \D)  = \\
\dsty &\max_{\x\tr\x = 1} 
\x\tr \D^{1/2} \bigg[\Ox_{\xi=1}^L \K^{(\xi)} [(1{-}\mu) \I^{(\xi)} {+} \mu \Lam^{(\xi)}]  {\K^{(\xi)}}\tr\bigg] \D^{1/2} \x
}
is convex in $\mu$.
\end{Theorem}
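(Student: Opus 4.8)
The plan is to reduce the statement to a single elementary fact---that a product of the per-locus affine eigenvalue factors is a convex function of $\mu$ on $(0,1/2)$---and then to invoke the principle that a pointwise maximum of convex functions is convex. First I would establish the displayed variational formula itself. Writing each reversible factor as $\P^{(\xi)} = \D_{\piv^{(\xi)}}^{1/2}\K^{(\xi)}\Lam^{(\xi)}{\K^{(\xi)}}\tr\D_{\piv^{(\xi)}}^{-1/2}$ and substituting into $(1-\mu)\I^{(\xi)}+\mu\P^{(\xi)}$, the Kronecker multiplicativity $\Ox_\xi(A_\xi B_\xi C_\xi)=(\Ox_\xi A_\xi)(\Ox_\xi B_\xi)(\Ox_\xi C_\xi)$ gives $\M(\mu)=\D_{\piv}^{1/2}\K\,\Phim(\mu)\,\K\tr\D_{\piv}^{-1/2}$, with $\K=\Ox_{\xi}\K^{(\xi)}$, $\D_{\piv}=\Ox_\xi\D_{\piv^{(\xi)}}$, and $\Phim(\mu)=\Ox_\xi[(1-\mu)\I^{(\xi)}+\mu\Lam^{(\xi)}]$ a diagonal matrix. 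The same chain of similarity transforms and diagonal-commutation steps used for the single-locus ``Expression for the Spectral Radius'' then yields $r(\M(\mu)\D)=r(\D^{1/2}\K\Phim(\mu)\K\tr\D^{1/2})$, and Rayleigh--Ritz converts this into the stated maximum over $\x\tr\x=1$.

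Next, for each fixed unit vector $\x$ I would set $\y=\K\tr\D^{1/2}\x$ and $g_\x(\mu)=\y\tr\Phim(\mu)\y$. Because $\Phim(\mu)$ is diagonal, indexed by multi-indices $(k_1,\dots,k_L)$, this is $g_\x(\mu)=\sum_{k_1,\dots,k_L} y_{k_1\cdots k_L}^2\prod_{\xi=1}^L\bigl(1-\mu(1-\lambda^{(\xi)}_{k_\xi})\bigr)$, where the $\lambda^{(\xi)}_{k_\xi}$ are the eigenvalues of $\P^{(\xi)}$. Since the coefficients $y^2$ are nonnegative, $g_\x$ will be convex as soon as each product of affine factors is convex on $(0,1/2)$; then $r(\M(\mu)\D)=\max_{\x\tr\x=1}g_\x(\mu)$ is convex as a pointwise maximum of convex functions.

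The crux, and the step I expect to be the main obstacle, is the convexity of a single product $h(\mu)=\prod_{\xi=1}^L(1-a_\xi\mu)$ with $a_\xi=1-\lambda^{(\xi)}_{k_\xi}$. Reversibility and column-stochasticity force $\lambda^{(\xi)}_{k_\xi}\in[-1,1]$, hence $a_\xi\in[0,2]$; combined with $\mu<1/2$ this gives $a_\xi\mu<1$, so every factor is strictly positive on $(0,1/2)$. Taking the logarithmic derivative yields $h''(\mu)/h(\mu)=\bigl(\sum_\xi b_\xi\bigr)^2-\sum_\xi b_\xi^2=\sum_{\xi\neq\eta}b_\xi b_\eta$, where $b_\xi=a_\xi/(1-a_\xi\mu)\geq0$. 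All cross terms are nonnegative, so $h''\geq0$ and $h$ is convex. This argument needs both $a_\xi\geq0$ (to make the $b_\xi$ nonnegative) and $\mu<1/2$ (to keep $h>0$), which is exactly where the Markov structure and the range $0<\mu<1/2$ enter. A naive attempt to check convexity directly on the degree-$L$ polynomial would stall, because a product of positive affine functions is only log-concave, not convex, in general; the sign restrictions above are what rescue convexity on this particular interval.

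Assembling the pieces, each $g_\x$ is a nonnegative combination of convex functions, hence convex on $(0,1/2)$, and the spectral radius, being their supremum over the unit sphere, inherits convexity. For $L=1$ the products degenerate to affine functions, recovering the convexity already noted for single-event mutation in Result \ref{Result:Increases}, so the genuine content of the theorem resides in the case $L\geq2$, precisely where the cross-term estimate above does the work.
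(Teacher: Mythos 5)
Your proposal is correct and takes essentially the same route as the paper's own proof: the same reversible-representation/Rayleigh--Ritz derivation of the variational formula, the same reduction to convexity of the per-multi-index products $\beta_j(\mu)=\prod_{\xi=1}^L\bigl[1-\mu(1-\lambda^{(\xi)}_{j_\xi})\bigr]$ using $\lambda^{(\xi)}_{j_\xi}\in[-1,1]$ and $\mu<1/2$, and the same assembly via nonnegative combinations and a pointwise maximum of convex functions. The only difference is cosmetic: you certify convexity of each product by the logarithmic-derivative identity $h''/h=\bigl(\sum_\xi b_\xi\bigr)^2-\sum_\xi b_\xi^2=\sum_{\xi\neq\eta}b_\xi b_\eta\geq 0$, which upon clearing denominators is exactly the paper's directly computed second derivative $\sum_{\kappa}\sum_{\gamma\neq\kappa}(1-\lambda^{(\kappa)}_{j_\kappa})(1-\lambda^{(\gamma)}_{j_\gamma})\prod_{\xi\neq\kappa,\gamma}\bigl[1-\mu(1-\lambda^{(\xi)}_{j_\xi})\bigr]$.
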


Theorem \ref{Theorem:ConvexMulti} has an important implication:  ``error catastrophes'' for individual genotypes as the mutation rate increases never correspond to a precipitous decline in population mean fitness.  The convexity of $r(\M(\mu)\D)$ in $\mu$ means that no such declines are \emph{ever} possible, that in contrast, $r(\M(\mu)\D)$ steadily declines at rates that never increase as $\mu$ increases.  These observation, noted by \citet{Hermisson:etal:2002:Mutation,Bull:AncelMeyers:Lachmann:2005,Tejero:Montero:and:Nuno:2015:Theories} based on specific fitness landscapes, are seen to hold for all possible fitness landscapes under reversible mutation.

\subsection{House-of-Cards Mutation}
The ``house-of-cards'' (HoC) mutation model introduced by \citet{Kingman:1977:Properties} is a single-event mutation model where $P_{ij} = \pi_i$ for all $j$, which makes $\P=\piv \ev\tr$ a rank-one matrix, and therefore $\lambda_i = 0$ for $i \geq 2$.
\begin{Result}[house-of-cards Mutation]
Let $\M(\mu) = (1-\mu) + \mu \P$, where $\P = \piv \ev\tr$.  Then
\ab{
r(\M(\mu) \D) & \leq (1{-}\mu)  \max_{i}[w_i] +  \mu \sumin \pi_i w_i .
}
\end{Result}
Under biological conditions the term $\sumin \pi_i w_i$ is near zero, since it reflects the average fitness of genotypes under the stationary distribution of mutation --- i.e.\ a genome made out of random nucleotides.  In this case, $\AMR(\M(\mu) \D) \approx 1-\mu$, which means that house-of-cards mutation is incapable of supporting any mutational robustness.  Note that HoC mutation at individual loci does not create HoC mutation for the entire genome.  But also note that HoC mutation at each locus combined with multiplicative non-epistasis ($\D = \Ox_{\xi=1}^L \D^{(\xi)}$) decomposes into multiple HoC systems.  Therefore, with multi-locus HoC mutation, multiplicative epistasis is required for population mutational robustness to rise above the Haldane limit.


\section{Discussion}

Priority here has been given to presenting the new mathematical results on these decades old problems rather than to their biological interpretations or to particular illustrations.  But a few words are in order.  The chief contribution is to show how the elevation of a population's mean fitness above the Haldane limit at a mutation-selection depends on underlying spectral relationships between the mutation structure and the array of fitnesses.   The results are obtained in almost complete generality, encompassing arbitrary fitnesses and for some results, arbitrary mutation structures, and in others, reversible multilocus mutation.  The results directly extend to a novel concept of ``dispersal robustness''.

Infinite population theory involving the spectral radius $r(\M\D)$ and Perron vector $\zvh > \0$ was originally developed in one or two locus theory, or dispersal models, where the population size could easily exceed the number of genotypes or patches, and large populations were well approximated by infinite population models.  In quasispecies theory, however, results were aimed to encompass the entire genome, entailing genotype spaces orders of magnitude larger than any possible population, rendering questionable the biological relevance of $\zvh$ and of $r(\M\D)$.

Their relevance may possibly be retained in several circumstances:  (1)  when a small number of loci contribute to a multiplicative fitness component, (2) when transient dynamics produce a meta-stable mutation-selection balance \citep{Nimwegen:and:Crutchfield:2000:Metastable}, (3) when the dynamics can be approximated by a coarse-graining to give a ``phenotypic quasispecies'' \citep{Nimwegen:and:Crutchfield:2001:Optimizing}, and (3) in a finite population multitype branching process.  In the latter, noted in \citet{Hermisson:etal:2002:Mutation}, the expected number of offspring $i$ produced from one parent $j$ defines a  matrix $\M\D$, and $(\M\D)^t \z(0)$ is the trajectory of expected numbers from an initial population, $\z(0)$.  The long-term expected growth rate of the population is then $r(\M\D)$. 

Each of these conditions merits further investigation.  With these considerations as caveats, let us review the results here.

The first main result (Theorem \ref{Theorem:Main:Eigenvalue}) is that the spectral gap of the mutation matrix sets an upper bound on how much above than the Haldane limit the equilibrium population fitness can be.  The spectral gap measures how \emph{rapidly mixing} the mutation operator is.   Theorem {Theorem:Main:Eigenvalue} thus shows that the more rapidly mixing mutation is, the small the mutational robustness at a mutation-selection balance.  In Kingman's well-known house-of-cards model, $\lambda_2=0$ so the spectral gap is nearly maximal, so almost no mutational robustness is possible.  Theorem \ref{Theorem:Main:Eigenvalue} provides an impetus to compare different kinds of mutation models, as well as dispersal models, for their spectral gaps and therefore how mutational robustness is limited.

The second main result (Theorem \ref{Theorem:Main:Eigenvector}) is that the alignment of the fitnesses with the orthogonalized eigenvectors of the mutation matrix, weighted by the eigenvalues, sets a lower bound on the population mutational robustness.  This bound is low when the landscape is rugged and fitnesses align with the higher-frequency eigenvectors, and high when the landscape is smooth and fitnesses align with the low-frequency eigenvectors.

Several additional results generalize results that have been found for special cases of fitness landscapes to arbitrary landscapes:  Theorems \ref{Theorem:SingleLoad} and \ref{Theorem:MultiLoad} show that the aggregate population mutational robustness can only decline with increasing mutation rate. Theorem \ref{Theorem:ConvexMulti} shows that this decrease happens at an at ever lessening rate as the mutation rate increases.  So ``error catastrophes'' for genotypes can never cause the population fitness to plummet \citep{Bull:AncelMeyers:Lachmann:2005}.  But high enough mutation rates will inescapably drive the population to the fitness of a random genotype sequence, so for some rate less than this, lethal mutagenesis is assured.  The issue of whether lethal mutagenesis can be evaded by the evolution of mutational robustness \citep{Tejero:Montero:and:Nuno:2015:Theories} is thus resolved:  with high enough mutation rates, it cannot.

These applications are to be seen as only initial illustrations of the potential usefulness of the analysis presented here.  Other population dynamical processes may similarly be clarified by viewing them through this lens of their eigenvalues an eigenvectors.

\section*{Acknowledgements} \acknowledgeText


\pagebreak
\centerline{\bf \LARGE Supporting Information}

\section*{Proofs of the Results}
For convenience, the results are restated along with their proofs.
\setcounter{Theorem}{1}
\begin{Result}[Haldane Limits on Mutational Robustness]
\ab{
\AMR(\M(\mu)\D) &\eqdef \frac{r(\M(\mu)\D)}{\max_i[w_i]}\geq (1 - \mu)^L.
}
\end{Result}
\begin{proof}
At equilibrium, we examine $\zh_1$ where $w_1 = \max_{i=1}^n w_i$.  Then, 
\ab{
\wbh \zh_1 &= M_{11} w_1 \zh_1 + \sum_{j=2}^n M_{1j} w_j \zh_j
}
which implies
\ab{
(\wbh-M_{11} w_1) \zh_1 &=   \sum_{j=2}^n M_{1j} w_j \zh_j \geq 0 \\
\stext{hence \citet[p. 62]{Altenberg:1984:Book} }
\AMR(\M(\mu),\D) = \frac{\wbh}{ w_1} & \geq  M_{11},
}
with equality if and only if for all $j=2, \ldots, n$ either $M_{1j}=0, w_j=0$, or $\zh_j=0$.
Further, $M_{11}(\mu) = \prod_{\xi=1}^L (1-\mu + \mu P^{(\xi)}_{11}),$ so $M_{11}(\mu) \geq (1- \mu)^L$, with equality if and only if $ P^{(\xi)}_{11} = 0$ for all loci $\xi$.
\end{proof}
See also \citet[eq. (2.3)]{Burger:and:Hofbauer:1994:Mutation}, 
\citet[pp. 149--150, eq. (5.31)--(5.34)]{Burger:2000}.

\setcounter{Theorem}{3}
\begin{Theorem}
Letting the average fitness of mutant offspring produced by parent $j$ be
\ab{
\wb^{\mut}_j = \frac{ \sum_{i \neq j}^n w_i M_{ij}}{ \sum_{i \neq j}^n M_{ij}},
}
the average mutant offspring fitness in a population at mutation-selection balance is therefore
\ab{
\wb^\mut (\M(\mu), \D) &\eqdef \sumjn \wb^{\mut}_j \frac{w_j \zh_j}{\wbh } \\&
= \wbh \left[1 - \frac{(1-\mu)^L}{1-(1-\mu)^L} \Var\big[\frac{w_i}{\wbh} \big]  \right],
}
assuming the mutation distributions have $P^{(\xi)}_{ii} {=} 0$ for all loci $\xi$ and alleles $i$. 
\end{Theorem}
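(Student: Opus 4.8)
The plan is to exploit the hypothesis $P^{(\xi)}_{ii}=0$, which renders the no-mutation probability a genotype-independent constant, and then to collapse the resulting double sum by invoking the mutation-selection balance. First I would note that $P^{(\xi)}_{ii}=0$ gives $M^{(\xi)}_{jj}=1-\mu$ at every locus, hence $M_{jj}=(1-\mu)^L$ for all $j$. The denominator of $\wb^{\mut}_j$ is then the constant $\sum_{i\neq j} M_{ij}=1-M_{jj}=1-(1-\mu)^L$, and the numerator can be rewritten as $\sum_{i\neq j} w_i M_{ij}=[\w\tr\M]_j-w_j(1-\mu)^L$, where $[\w\tr\M]_j=\sumin w_i M_{ij}$.

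Substituting these into $\wb^{\mut}=\sumjn \wb^{\mut}_j\, w_j\zh_j/\wbh$ and factoring out the constant denominator, the expression separates into a ``balance'' term $\sumjn[\w\tr\M]_j\,w_j\zh_j$ and a ``second-moment'' term $(1-\mu)^L\sumjn w_j^2\zh_j$. The crux is the balance term: rewriting it as the double sum $\sumijn w_i M_{ij} w_j\zh_j$ and applying the equilibrium identity $\M\D\zvh=\wbh\zvh$ in the entrywise form $\sumjn M_{ij}w_j\zh_j=\wbh\zh_i$ collapses the inner sum, leaving $\sumin w_i\,\wbh\,\zh_i=\wbh^2$. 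This is the single step where the mutation-selection balance actually enters; everything else is bookkeeping.

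For the second-moment term I would identify $\sumjn w_j^2\zh_j=\wbh^2\big(1+\Var[w_i/\wbh]\big)$, which follows because under the offspring equilibrium distribution $\zvh$ one has $\Ex{w_i/\wbh}=1$ and $\Ex{(w_i/\wbh)^2}=\sumin w_i^2\zh_i/\wbh^2$. After dividing by the normalizing weight $\wbh$ and the common denominator $1-(1-\mu)^L$, the non-variance contributions cancel that denominator and leave $\wbh$, while the variance contribution supplies the term $-\wbh\,(1-\mu)^L\,\Var[w_i/\wbh]/(1-(1-\mu)^L)$, which is exactly the claimed identity. I expect the only real pitfall to be keeping straight that the reference distribution for the variance is the offspring equilibrium $\zvh$ and not the parent distribution $w_j\zh_j/\wbh$; getting that identification right is precisely what makes the final algebra telescope.
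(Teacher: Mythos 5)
Your proposal is correct and follows essentially the same route as the paper's proof: the constant diagonal $M_{jj}=(1-\mu)^L$ from $P^{(\xi)}_{ii}=0$, the split of the numerator into $\sumin w_i M_{ij} - w_j(1-\mu)^L$, the collapse of the double sum via $\sumjn M_{ij}w_j\zh_j = \wbh\,\zh_i$, and the identity $\sumjn w_j^2\zh_j = \wbh^2\left(1+\Var\big[\frac{w_i}{\wbh}\big]\right)$ with the variance taken under $\zvh$, exactly as in the paper. The only difference is cosmetic bookkeeping (you carry unnormalized sums and divide by $\wbh$ at the end), and your remark that the variance is referenced to the equilibrium distribution $\zvh$ rather than the fitness-weighted parent distribution is the correct reading of the paper's usage.
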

\begin{proof}
The average mutant offspring offspring fitness from parent $j$, $\wb^{\mut}_j$, averaged over the population, censused after selection, before reproduction, is
\ab{
\wb^{\mut} &(\M(\mu), \D) = 
\sumjn \wb^{\mut}_j \frac{w_j \zh_j}{\wbh} 
= \sumjn \frac{\sum_{i \neq j}^n w_i M_{ij}}{ \sum_{i \neq j}^n M_{ij}} \frac{w_j \zh_j}{\wbh}\\&
= \sumjn \frac{\sumin w_i M_{ij} - w_j M_{jj}}{1- M_{jj}} \frac{w_j \zh_j}{\wbh}.
}
Recall that the multiple-event mutation rates are
\ab{
M_{jj} = \prod_{\xi=1}^L M^{(\xi)}_{j_\xi j_\xi}
=  \prod_{\xi=1}^L [(1-\mu) + \mu P^{(\xi)}_{j_\xi j_\xi}].
}
Assuming $ P^{(\xi)}_{j_\xi j_\xi} = 0$ for all $\xi$, $j$, then $M_{jj} = (1-\mu)^L$.  Substitution gives
\ab{
\wb^{\mut} (\M(\mu), \D)  
&= \sumjn \frac{\dsty \sumin w_i M_{ij} - w_j (1-\mu)^L}{1- (1-\mu)^L} \frac{w_j \zh_j}{\wbh} \\&
= \frac{\dsty \sumijn w_i M_{ij}\frac{w_j \zh_j}{\wbh} - \sumjn \frac{w_j^2 \zh_j}{\wbh} (1-\mu)^L}{1- (1-\mu)^L}.
}
Since
\ab{
\sumjn \frac{w_j^2 \zh_j}{\wbh} = {\wbh} \sumjn \frac{w_j^2 \zh_j}{\wbh^2}
= \wbh \left( 1 + \Var\big[\frac{w_j}{\wbh}\big] \right),
}
\ab{
\wb^{\mut} (\M(\mu), \D)  
&
= \frac{\dsty \sumin w_i \zh_i -\wbh \big( 1 + \Var\big[\frac{w_j}{\wbh}\big] \big) (1-\mu)^L}{1- (1-\mu)^L}  \\&
= \wbh - \wbh \left[\frac{ (1-\mu)^L}{1- (1-\mu)^L} \right]  \Var\big[\frac{w_j}{\wbh}\big] \\&
= \wbh \left[1 - \frac{(1-\mu)^L}{1-(1-\mu)^L} \Var\big[\frac{w_i}{\wbh} \big]  \right].
\qedhere
}
\end{proof}

\begin{Theorem}[Neutral Network Robustness]
The equilibrium population robustness $\nubh = r(\M\D) / w$ is always greater than the average robustness over the neutral network, 
\ab{
\Ex{\nu} 
= \frac{\sum_{i \in \Nc} \sum_{j \in \Nc} M_{ij}}{\sumin \sum_{j \in \Nc} M_{ij}}.
}
if 
\enumlist{
\item the mutation matrix $\M$ is symmetric, 
\item fitness off the network is $0$, and 
\item not all genotypes on the neutral network have the same mutational robustness.
}
\end{Theorem}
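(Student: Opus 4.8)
The plan is to reduce the statement to a single application of the Rayleigh--Ritz variational principle on the symmetric submatrix of $\M$ supported on the neutral network. Throughout I would restrict attention, as the paper does elsewhere, to the irreducible quasispecies block that determines the Perron root; this block sits inside $\Nc$ because off-network genotypes have fitness $0$ and hence zero columns in $\M\D$. First I would use condition (2): since $\D$ annihilates all off-network coordinates, the equilibrium identity $\M\D\zvh = \wbh\zvh$ restricted to rows $i \in \Nc$ reads $\wbh\,\zh_i = w\sum_{j\in\Nc} M_{ij}\,\zh_j$. Writing $\M^{[\Nc]}$ for the principal submatrix of $\M$ indexed by $\Nc$ and $\zvh^{[\Nc]}$ for the restriction of the Perron vector, this is exactly $\M^{[\Nc]}\zvh^{[\Nc]} = (\wbh/w)\,\zvh^{[\Nc]} = \nubh\,\zvh^{[\Nc]}$. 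Since $\zvh^{[\Nc]} > 0$ on the irreducible block, Perron--Frobenius identifies $\nubh = r(\M\D)/w$ as the spectral radius of the nonnegative symmetric matrix $\M^{[\Nc]}$.

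Next, condition (1) makes $\M^{[\Nc]}$ symmetric, so Rayleigh--Ritz gives $r(\M^{[\Nc]}) = \max_{\x\tr\x = 1}\x\tr\M^{[\Nc]}\x$. I would test this with the uniform unit vector $\x = \ev_\Nc/\sqrt{|\Nc|}$, where $\ev_\Nc$ is the all-ones vector indexed on $\Nc$; the resulting Rayleigh quotient is $\tfrac{1}{|\Nc|}\sum_{i,j\in\Nc} M_{ij}$. Because $\M$ is column-stochastic, the denominator in $\Ex{\nu}$ collapses: $\sum_{i=1}^n\sum_{j\in\Nc} M_{ij} = \sum_{j\in\Nc} 1 = |\Nc|$, so $\Ex{\nu} = \tfrac{1}{|\Nc|}\sum_{i,j\in\Nc} M_{ij}$ is precisely this Rayleigh quotient. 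The variational inequality then yields $\nubh = r(\M^{[\Nc]}) \geq \Ex{\nu}$ immediately.

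For strictness I would invoke the equality case of Rayleigh--Ritz, which holds if and only if the test vector lies in the eigenspace of the largest eigenvalue. As $\M^{[\Nc]}$ is irreducible and nonnegative, this eigenspace is one-dimensional, so equality would force $\ev_\Nc$ itself to be the Perron vector, i.e.\ $\M^{[\Nc]}\ev_\Nc = \nubh\,\ev_\Nc$. By symmetry the $j$-th entry of $\M^{[\Nc]}\ev_\Nc$ is the neutrality $\nu_j = \sum_{i\in\Nc} M_{ij}$, so equality requires $\nu_j$ to be constant across the network---exactly what condition (3) forbids. Hence the inequality is strict.

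The main obstacle I anticipate is not any single computation but making the reduction to $\M^{[\Nc]}$ fully rigorous: one must justify that the equilibrium vector is strictly positive on $\Nc$ (so that $\nubh$ is genuinely the Perron root of $\M^{[\Nc]}$ rather than some smaller eigenvalue whose eigenvector happens to be positive only on part of the network) and that $\M^{[\Nc]}$ is irreducible (so the Perron eigenspace is simple and the strictness argument goes through). This is where condition (1) is truly doing work, since without symmetry the Rayleigh quotient is not the right object and, as the paper notes, the conclusion can fail. I would therefore state the restriction to a single irreducible quasispecies block explicitly at the outset and carry it through both the positivity and the simplicity claims.
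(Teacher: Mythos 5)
Correct, and essentially the paper's own argument: because $\D$ equals $w$ times the indicator of $\Nc$, the paper's application of Rayleigh--Ritz to $\D\M\D$ with test vector $\D\ev$ is precisely your Rayleigh quotient of the principal submatrix $\M^{[\Nc]}$ at the uniform unit vector, with the same equality analysis (the test vector lies in the top eigenspace iff all neutralities $\nu_j$ coincide, which condition (3) excludes). The one respect in which the paper's version is leaner addresses exactly the obstacle you flag in your final paragraph: it never invokes the equilibrium vector, its positivity on $\Nc$, or irreducibility of $\M^{[\Nc]}$, since the identity $r(\D\M\D) = w\,r(\M\D)$ (equivalently, $\M\D$ has zero columns off $\Nc$, so $r(\M\D) = w\,r(\M^{[\Nc]})$) is purely algebraic, and strictness needs only membership of $\ev_{\Nc}$ in the top eigenspace---which already forces $\M^{[\Nc]}\ev_{\Nc} = \lambda_{\max}\ev_{\Nc}$, i.e.\ constant neutrality---rather than simplicity of the Perron root.
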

\begin{proof} 
Repeated use will be made of the identity $\sum_{i \in \Nc} f_i = \sumin  f_i w_i / w$ for arbitrary values $f_i$.  We may write the network's average robustness as
\ab{
\Ex{\nu} 
&= \frac{\sum_{i \in \Nc} \sum_{j \in \Nc} M_{ij}}{\sumin \sum_{j \in \Nc} M_{ij}}
= \frac{1}{w} \frac{\sumin w_i \sumjn M_{ij} w_j}{\sumin \sumjn M_{ij} w_j} \\&
= \frac{1}{w} \frac{\evt \D\M \D \ev }{ \evt \D \ev}. 
}
We have $r(\D\M\D) =  w \, r(\M\D)$ since $(\D\M\D)^t = w^t \D(\M\D)^t = w^t (\D\M)^t \D$.

Since $w_i \in \{w, 0\}$, $\D^2 = w \D $, so $\sqrt{w} \D^{1/2} =  \D$, hence
\an{
\Ex{\nu} &=\frac{1}{w} \frac{\evt \D\M \D\ev }{ \evt \D \ev} 
= 
\frac{1}{w^2} \frac{(\evt \D) \D\M \D (\D \ev)}{ (\evt \D) (\D \ev) } \notag\\&
\leq \frac{1}{w^2} \max_{\x\neq \0}\frac{\x\tr \D \M \D \x} { \x \tr \x } \label{eq:eDMDe}\\&
= \frac{1}{w^2} \  r(\D\M \D) 
 =  \frac{r(\M \D)}{w} = \AMR(\M,\D) . \notag
}
Equality holds if and only if $\ev\tr \D$ is a left eigenvector of $\D\M\D$ \citep[Theorem 4.2.2 (Rayleigh)]{Horn:and:Johnson:2013}, in which case 
\ab{
 \lambda \ev\tr \D = (\ev\tr \D) \D \M \D = w \ev\tr \D \M \D,
}
and so $\ev\tr \D$ is also an eigenvector of $\M\D$ with eigenvalue $\lambda/w$.  This is equivalent to the condition that all genotypes on the network have the same neutrality, i.e. for all $j \in \Nc$, $c= \sum_{i \in \Nc} M_{ij} = \sumin \frac{w_i}{w} M_{ij}$, or 
\ab{
c \, \ev\tr\D = \frac{1}{w^2} \ev\tr \D \M \D
}
with  $\lambda  = c / w$.  Therefore, if there are any differences in the neutralities on the network, the inequality \eqref {eq:eDMDe} is strict.
\end{proof}

\setcounter{Theorem}{6}
\begin{Theorem}[Corollaries 1, 3 of \citet {Altenberg:2011:Mutation}]
The equilibrium genetic load of a quasispecies under irreducible multiple-event mutation strictly increases with mutation rate $0 < \mu < 1/2$, when the quasispecies has variation in fitness within its irreducible blocks.
\end{Theorem}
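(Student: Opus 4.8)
The plan is to restate the conclusion spectrally. Since the genetic load equals $1 - r(\M(\mu)\D)/\max_i[w_i]$ and $\max_i[w_i]$ does not depend on $\mu$, it suffices to show that within an irreducible block the Perron root $\mu \mapsto r(\M(\mu)\D)$ is strictly decreasing on $(0,1/2)$. I would work in the symmetric representation supplied by reversibility. Writing $N\sxi(\mu) = \K\sxi[(1-\mu)\I\sxi + \mu\Lam\sxi]{\K\sxi}\tr$ and $B(\mu) = \D^{1/2}\big[\Ox_{\xi=1}^L N\sxi(\mu)\big]\D^{1/2}$, the single-locus reversible factorization (Kronecker-multiplied, as in the identity displayed in Theorem \ref{Theorem:ConvexMulti}) gives $r(\M(\mu)\D) = r(B(\mu)) = \max_{\x\tr\x=1}\x\tr B(\mu)\x$, where $B(\mu)$ is symmetric and similar to the nonnegative irreducible matrix $\M(\mu)\D$. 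Irreducibility makes the Perron root an algebraically simple eigenvalue, so $r(B(\mu))$ is differentiable and, by the Rayleigh/envelope principle, $\frac{d}{d\mu}r(B(\mu)) = \xvh\tr B'(\mu)\xvh$, where $\xvh$ is the unit top eigenvector of $B(\mu)$.

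The core step is to show $B'(\mu)$ is negative semidefinite for $0<\mu<1/2$. Differentiating the Kronecker product termwise, $B'(\mu) = \D^{1/2}\big[\sum_{\xi} N^{(1)}\ox\cdots\ox C\sxi\ox\cdots\ox N^{(L)}\big]\D^{1/2}$, where $C\sxi = \K\sxi(\Lam\sxi-\I\sxi){\K\sxi}\tr$ is the ($\mu$-independent) derivative of $N\sxi$. Since each $\P\sxi$ is stochastic and reversible, the eigenvalues of $\Lam\sxi$ lie in $[-1,1]$ with top value $1$, so $C\sxi$ is negative semidefinite with one-dimensional kernel spanned by $[\K\sxi]_1 = \D_{\piv\sxi}^{1/2}\ev\sxi$. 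For $\mu<1/2$ every other factor $N^{(\eta)}(\mu)$ has eigenvalues $1-\mu+\mu\lambda_k \geq 1-2\mu>0$, hence is positive definite; the eigenvalues of each summand are thus products of positive numbers with one nonpositive number, so each summand, and therefore their sum, is negative semidefinite. Congruence by $\D^{1/2}$ preserves this, giving $\xvh\tr B'(\mu)\xvh \leq 0$. This is exactly where $\mu<1/2$ is essential: beyond $1/2$ some $N^{(\eta)}$ can become indefinite and the sign argument collapses.

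To upgrade to strict monotonicity, I would characterize the kernel of the bracketed sum. Because the off-diagonal factors are positive definite (hence invertible), the kernel of each summand forces the $\xi$-th tensor index into the Perron direction, so the common kernel is one-dimensional, spanned by $\Ox_\xi [\K\sxi]_1 = \D_{\piv}^{1/2}\ev$. Thus $\xvh\tr B'(\mu)\xvh = 0$ would force $\D^{1/2}\xvh \propto \D_{\piv}^{1/2}\ev$, i.e. $\xvh \propto \D^{-1/2}\D_{\piv}^{1/2}\ev$. But applying $B(\mu) = \D^{1/2}\D_{\piv}^{-1/2}\M\D_{\piv}^{1/2}\D^{1/2}$ to this candidate and using $\M\piv = \piv$ yields $\D^{1/2}\D_{\piv}^{1/2}\ev = \D\,(\D^{-1/2}\D_{\piv}^{1/2}\ev)$, which is a scalar multiple of the candidate only if $w_i$ is constant across the block. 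Since the block has fitness variation, $\xvh$ cannot be this vector, so $\xvh\tr B'(\mu)\xvh<0$ strictly; hence $r(\M(\mu)\D)$ strictly decreases and the load strictly increases on $(0,1/2)$.

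The main obstacle I anticipate is this final strictness argument rather than the semidefiniteness bookkeeping: one must exclude the accidental alignment of the Perron maximizer $\xvh$ with the single kernel direction of $B'(\mu)$, and tie that alignment precisely to flat fitness. The Kronecker-eigenvalue sign computation is routine once the reversible representation is in hand, and differentiability of the simple Perron root is standard; the substantive content is the rank-one kernel of each $C\sxi$ together with the observation that the global stationary direction $\D_{\piv}^{1/2}\ev$ is an eigenvector of $B(\mu)$ only when fitness is constant.
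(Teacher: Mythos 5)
Your argument is sound where $\D$ is strictly positive, and it takes a genuinely different route from the paper: the paper's SI proof contains essentially no analysis of its own --- it imports the strict decrease of $r(\M(\mu)\D)$ from Corollaries 1 and 3 of Altenberg (2011) and spends its effort on the reducible case, passing to the Frobenius normal form and observing that the quasispecies mean fitness $\max_\kappa r([\M(\mu)\D]^{[\kappa]})$ inherits the per-block monotonicity. You instead re-derive the imported result from scratch: differentiate the symmetric Rayleigh representation $B(\mu)=\D^{1/2}\big[\Ox_{\xi=1}^L \K\sxi((1{-}\mu)\I\sxi+\mu\Lam\sxi){\K\sxi}\tr\big]\D^{1/2}$, show each Kronecker summand of $B'(\mu)$ is negative semidefinite for $\mu<1/2$ (positive definiteness of the off-$\xi$ factors is exactly where $\mu<1/2$ enters, consistent with the paper's convexity theorem), and pin the common kernel down to the span of $\D_{\piv}^{1/2}\ev$, which your computation $B\,(\D^{-1/2}\D_{\piv}^{1/2}\ev)=\D\,(\D^{-1/2}\D_{\piv}^{1/2}\ev)$ correctly shows can be the Perron direction only for flat fitness. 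Two small points should be made explicit: one-dimensionality of $\ker C^{(\xi)}$ requires each $\P\sxi$ irreducible (this does follow from the irreducibility hypothesis on $\M$, since the off-diagonal pattern of a Kronecker factor is inherited), and the passage from $\xvh\tr B'(\mu)\xvh=0$ to membership in the common kernel uses the standard fact that $v\tr A v=0$ with $A$ negative semidefinite forces $Av=0$ summand by summand. Your route buys a self-contained proof that makes visible why $\mu<1/2$ and fitness variation are the operative hypotheses; the paper's buys brevity.

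The one genuine gap is the lethal-genotype case, which is precisely what the clause ``within its irreducible blocks'' is there for. Your opening reduction is correct, but your proof of the reduced claim does not apply to a proper block: $\M^{[\kappa]}$ is a substochastic principal submatrix and is \emph{not} of the form $\Ox_\xi[(1-\mu)\I\sxi+\mu\P\sxi]$, so the Kronecker decomposition of $B'(\mu)$ does not restrict, and your strictness step uses $\D^{-1/2}$ and the candidate vector with entries $\sqrt{\pi_i/w_i}$, which are undefined when some $w_i=0$. The repair stays inside your framework: a principal submatrix $S^{[\kappa]}$ of your negative semidefinite $S=\sum_\xi N^{(1)}\ox\cdots\ox C^{(\xi)}\ox\cdots\ox N^{(L)}$ is again negative semidefinite, and $v\tr S^{[\kappa]} v=0$ forces the zero-padded extension of $v$ into $\ker S=\mathrm{span}(\D_{\piv}^{1/2}\ev)$, which is impossible for a proper block because $\D_{\piv}^{1/2}\ev$ is strictly positive; so on proper blocks the restricted derivative is negative \emph{definite} and strict decrease is automatic --- consistent with the paper's Theorem 4, which gives strict decrease when $r(\P^{[\kappa]})<1$ even for flat fitness. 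One should also replace the simple-eigenvalue derivative by a Danskin/one-sided-derivative argument at parameter values where two blocks tie for the maximum, since there the top eigenvalue of the symmetric family need not be simple. With these patches your argument covers the full statement.
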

\begin{proof}
Corollaries 1, 3 of \citet {Altenberg:2011:Mutation} show that 
\an{
r(\M(\mu)\D) = r\big(\Ox_{\xi=1}^L[(1-\mu) \I^{(\xi)} + \mu \P^{(\xi)}] \D \big) \label{rbigOx}
}
strictly decreases in $\mu \in (0, 1/2)$ when $\D \neq c \ \I$ for any $c > 0$, and when there is only one equilibrium for each $\mu$.  In the case where there are lethal genotypes, there may be more than one equilibrium, corresponding to multiple quasispecies.  In that case, one must focus on the irreducible blocks, $\kappa$, from the Frobenius normal form of the reducible matrix $\M(\mu)\D$.  The irreducible restriction $r([\M(\mu)\D]^{[\kappa]})$ has a unique equilibrium for each $\mu$, and Corollaries 1 and 3 apply, meaning that if there is any variation in fitness among $\{w_i\suchthat i \in \kappa\}$, then $r([\M(\mu)\D]^{[\kappa]})$ decreases strictly in $\mu$.  The quasispecies may comprise more than one block, but its mean fitness, $\max_\kappa \{r([\M(\mu)\D]^{[\kappa]})\}$, will therefore also be decreasing in $\mu$. 
\end{proof}

\setcounter{Theorem}{8}
\begin{Result}[Single-Event Relative Robustness Increases with Mutation Rates] 
As the mutation rate increases with single-event mutation, if a quasispecies has more than one [only one] nonlethal fitness value, then the relative mutational robustness $\RMR(\M(\mu), \D)$ strictly increases [is constant] with mutation rate. 
\end{Result}
\begin{proof}
We examine the restriction of $\M(\mu)\D$ to an irreducible block $\M(\mu)^{[\kappa]} \D^{[\kappa]}$.  In the neutral network case where $w_i = w$ for all $i \in \kappa$, then
\ab{
r&(\M(\mu)^{[\kappa]} \D^{[\kappa]}) = w \, r([(1-\mu) \I^{[\kappa]} + \mu \P^{[\kappa]})] \\&
= w [(1-\mu) + \mu r(\P^{[\kappa]})]
= w [1 + \mu (r(\P^{[\kappa]}) - 1)],
}
so with single-event mutation, $L=1$, and
\ab{
\RMR(\M(\mu)^{[\kappa]}, \D^{[\kappa]}) 
&= \frac{r(\M(\mu)^{[\kappa]} \D^{[\kappa]})/w - 1 + \mu}{1 - 1 + \mu} \\&
= \frac{1 + \mu (r(\P^{[\kappa]}) - 1) - 1 + \mu}{ \mu} \\&
= r(\P^{[\kappa]}) < 1.
}
which shows the relative robustness is a constant for all $\mu$.

For brevity write $f(\mu) \eqdef r(\M(\mu)^{[\kappa]} \D^{[\kappa]}) / \max_i[w_i]$ and 
\ab{
g(\mu) &\eqdef \RMR (\M(\mu)^{[\kappa]}, \D^{[\kappa]}) 
= \frac{ f(\mu) - 1+\mu}{\mu} \\&
= 1 +  \frac{ f(\mu) - 1}{\mu}.
}
Then
\ab{
g(\mu_1) - g(\mu_2) &
= \frac{ f(\mu_1) - 1}{\mu_1} - \frac{ f(\mu_2) - 1}{\mu_2}.
}
Let $\mu_1 = \mu > 0$ and $\mu_2 = h \mu$, $0<h<1$.  
\an{
g(\mu) - g(h \mu) &
= \frac{h f(\mu) - h - (f(h \mu) - 1)}{\mu h}. \label{eq:gg}
}

In the case where there are more than one nonlethal fitness in the irreducible block, $ \D^{[\kappa]}$ is nonscalar ($ \D^{[\kappa]} \neq c \  \I^{[\kappa]}$ for any $c \in \Reals$).  In \citet{Altenberg:2012:Resolvent-Positive} it is shown, by applying ``dual convexity'' to theorems Cohen \citep{Cohen:1981:Convexity} and Friedland \citep{Friedland:1981}, that $r([(1-\mu) \I^{[\kappa]} + \mu \P^{[\kappa]}] \D^{[\kappa]}) $ is strictly convex in $\mu$ when $\D^{[\kappa]}$ is nonscalar.  Convexity means that for all $0 < p < 1$, and $\alpha_1, \alpha_2$,
\an{
(1-p) f(\alpha_1) + p f(\alpha_2) > f((1-p) \alpha_1 + p \alpha_2). \label{eq:1pf}
}
Let $\alpha_1 = 0$ and $\alpha_2 = \mu$.  We know that 
\ab{
f(0) &= \frac{r(\D^{[\kappa]})} {\max_{i=1,\ldots,n} w_i }
= \frac{\max_{i \in \kappa} w_i}{\max_{i=1,\ldots,n} w_i } \leq 1.
}
So \ref{eq:1pf} becomes
\ab{
1-p + p f(\mu) \geq (1-p) f(0) + p f(\mu) > f(p \mu). 
}
Now, let $p=h$.  Using $f(0) \leq 1$ and \eqref{eq:gg} we get
\ab{
0 &< (1-h) f(0) + h f(\mu)  - f(h \mu) \\&
\leq 1-h  + h f(\mu)  - f(h \mu) 
= \mu h( g(\mu) - g(h \mu) ).
}
Therefore $g(\mu) = \RMR(\M(\mu)^{[\kappa]}, \D^{[\kappa]})$ strictly increases in $0<\mu<1$ for any quasispecies $\kappa$.
\end{proof}

\begin{Theorem}[Expression for the Spectral Radius]
When mutation has the transition matrix of a reversible Markov chain, $\M = \D_{\piv}^{1/2}\K \Lam \K\tr \D_{\piv}^{-1/2}$, then
\an{
r(\M\D) &= \max_{\x\tr\x = 1}  \x\tr \D^{1/2}\K \Lam \K\tr  \D^{1/2} \x \notag \\&
= \max_{\x\tr\x = 1} \sumjn \lambda_j \bigg(\sumin x_i \sqrt{w_i} K_{ij}\bigg)^2. \label{eq:rMD2}
}
\end{Theorem}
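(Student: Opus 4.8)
The plan is to build on the similarity reductions displayed immediately before the statement, which already give $r(\M\D) = r(\Sm)$ with $\Sm \eqdef \D^{1/2}\K\Lam\K\tr\D^{1/2}$. The crucial feature is that $\Sm$ is symmetric, so the Rayleigh--Ritz variational theorem applies and yields $\max_{\x\tr\x=1}\x\tr\Sm\x = \lambda_{\max}(\Sm)$. The first displayed equality then reduces to showing $r(\M\D) = \lambda_{\max}(\Sm)$, and the second is a direct expansion of the quadratic form.

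First I would record that $\Sm\tr=\Sm$, since $\D^{1/2}$ is diagonal and $\K\Lam\K\tr$ is symmetric ($\K$ orthogonal, $\Lam$ diagonal). Hence every eigenvalue of $\Sm$ is real, and the constrained Rayleigh quotient attains its maximum at the top eigenvalue $\lambda_{\max}(\Sm)$.

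The main obstacle is that Rayleigh--Ritz delivers the \emph{largest} eigenvalue, whereas $r(\M\D)$ is a \emph{spectral radius}, i.e.\ the largest eigenvalue in modulus; for a general symmetric matrix these need not agree. To close the gap I would invoke Perron--Frobenius theory for the nonnegative matrix $\M\D$: its spectral radius is a nonnegative real eigenvalue satisfying $r(\M\D)\geq|\lambda|$ for every eigenvalue $\lambda$ of $\M\D$. Moreover, each step of the reduction chain rests on the identity $\det(\lambda\I-AB)=\det(\lambda\I-BA)$ for square $A,B$, so $\M\D$ and $\Sm$ are isospectral. Thus $r(\M\D)$ is itself an eigenvalue of $\Sm$, and since it dominates every eigenvalue of $\Sm$ in modulus, it must equal the largest one, $\lambda_{\max}(\Sm)$. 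This establishes the first equality.

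For the second equality I would substitute $\y = \K\tr\D^{1/2}\x$, whose $j$-th coordinate is $y_j = \sumin x_i\sqrt{w_i}\,K_{ij}$, giving $\x\tr\Sm\x = \y\tr\Lam\y = \sumjn\lambda_j y_j^2$. This is precisely the stated sum, so nothing remains beyond this routine algebra.
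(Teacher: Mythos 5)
Your proof is correct and takes essentially the same route as the paper's: reduce $r(\M\D)$ by similarity and the $AB$/$BA$ isospectrality identity to the symmetric matrix $\Sm = \D^{1/2}\K\Lam\K\tr\D^{1/2}$, then apply Rayleigh--Ritz and expand the quadratic form in the basis $\y = \K\tr\D^{1/2}\x$. Your Perron--Frobenius step identifying $r(\M\D)$ with $\lambda_{\max}(\Sm)$ is in fact slightly more careful than the paper, which simply cites ``the Rayleigh quotient characterization of the spectral radius of any symmetric real matrix''---a statement that, for general symmetric matrices, characterizes only the largest eigenvalue rather than the spectral radius---so your argument from nonnegativity of $\M\D$ and isospectrality is precisely what legitimizes that identification when $\Lam$ may contain negative eigenvalues.
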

\begin{proof}
The transition matrix of an irreducible reversible Markov chain can be represented (\citet[p. 33]{Keilson:1979} \cite{Altenberg:2011:Mutation}) as
\ab{
\M = \D_{\piv}^{1/2}\K \Lam \K\tr \D_{\piv}^{-1/2}
}
where reversibility requires $M_{ij} \pi_j = M_{ji} \pi_i$, and
\desclist{
\item[$\piv$] $ = (\pi_1, \ldots, \pi_n)\tr = \M \piv$ is the stationary distribution for $\M$, i.e.\ its \emph{Perron vector},
\item[$\D_{\piv}^{1/2}$] is the diagonal matrix of the square roots of $\pi_i$,
\item[$\K$] is the matrix of orthogonalized eigenvectors of $\M$, $\K \K\tr = \K\tr\K = \I$,  i.e.\  for each $j=1, \ldots, n$, $\sum_{i=1}^n K_{ij}^2 = 1$, $\sum_{i=1}^n K_{ij} K_{ih} = 0$ if $j\neq h$, 
\item[\protect{$[\K]_j$}] is the $j$-th column of $\K$, 
\item[$K_{i1} = \pi_i^{1/2}$] or $[\K]_1 = \D_\piv^{1/2} \ev$,
\item[$\Lam$] $ \eqdef \diag{\lambda_i}$ is the diagonal matrix of eigenvalues  $\lambda_1 > \lambda_2 \geq \cdots \geq \lambda_n \geq -1$ of $\M$, these inequalities coming from Perron-Frobenius theory.
}

Applying this representation of $\M$, we find that the spectral radius of $\M \D$ is that of a symmetric matrix:
\ab{
r(\M \D) &
= r(\D_{\piv}^{1/2}\K \Lam \K\tr \D_{\piv}^{-1/2} \D) \\&
= r(\K \Lam \K\tr \D_{\piv}^{-1/2} \D \D_{\piv}^{1/2}) 
= r(\K \Lam \K\tr \D) \\&
= r( \Lam \K\tr \D \K) 
= r(\D^{1/2}\K \Lam \K\tr  \D^{1/2}).
}
Note the remarkable symmetry here: $r(\M\D) =r(\K \Lam \K\tr \D) = r( \Lam \K\tr \D \K)$, which sandwiches one of two diagonal matrices, $\Lam$ and $\D$, between the change-of-basis matrices $\K$ and $\K\tr$.

From this symmetric form, the theorem results immediately from application of the Rayleigh quotient characterization of the spectral radius of any symmetric real matrix $\Sm$ (\citealt[pp. 172--173]{Wilkinson:1965}, \citealt[pp. 176--180]{Horn:and:Johnson:1985}),
$ 
r(\Sm) = \max_{\x \neq \0}({\x\tr \Sm \x}/{\x\tr\x}). \qedhere
$ 
\end{proof}
{\bf Remark 1.}  No assumptions on the irreducibility of $\M$ or $\M\D$ enter here.  In order for $r(\M\D)$ to be the mean fitness of the population when $\M\D$ is reducible, we must impose the additional assumption that the population contain a quasispecies $\kappa$ such that $r(\M^{[\kappa]} \D^{[\kappa]}) = r(\M\D)$.

{\bf Remark 2.}  The approach taken here may have utility for graph theory.  One obtains bounds on the spectral radius of a graph by embedding it into a larger graph whose eigenvectors and eigenvalues are known, and projecting those eigenvectors onto the graph.

\setcounter{Theorem}{11}
\begin{Theorem}{\bf (Lower Bound from the Fourier Representation of Fitness)}
Represent the fitnesses $w_i$ as $w_i =  \sumjn K_{ij} a_j$ or $\wv = \K \av$, where $a_i$ is the Fourier coefficient for eigenvector $[\K]_i$.  Then 
\an{\label{eq:la2}
r(\M\D) & \geq \frac{1}{\sumjn w_j} \sumjn \lambda_j a_j^2 .
}
\end{Theorem}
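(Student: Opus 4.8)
The plan is to read off the bound directly from the variational characterization of $r(\M\D)$ already established in the ``Expression for the Spectral Radius'' theorem. Since
\[
r(\M\D) = \max_{\x\tr\x=1} \sumjn \lambda_j \Big(\sumin x_i\sqrt{w_i}\,K_{ij}\Big)^2 ,
\]
every unit vector $\x$ furnishes a lower bound for $r(\M\D)$, so the whole task reduces to exhibiting one well-chosen $\x$ that reproduces the right-hand side of the claim. I would therefore treat this as a single test-vector (Rayleigh-quotient) estimate rather than any delicate optimization.

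The guiding observation is that the inner sum $\sumin x_i\sqrt{w_i}\,K_{ij}$ is exactly the $j$-th coordinate of $\K\tr(\D^{1/2}\x)$, while the Fourier coefficients obey $\av = \K\tr\wv$ (because $\wv=\K\av$ and $\K\tr\K=\I$). So if I arrange $\D^{1/2}\x$ to be a scalar multiple of $\wv$, the inner sum becomes that same multiple of $a_j$, and the quadratic form collapses to a multiple of $\sumjn\lambda_j a_j^2$. This is what dictates the test vector.

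Concretely I would set $x_i = \sqrt{w_i}\big/\sqrt{\sumjn w_j}$. First verify admissibility: $\sumin x_i^2 = \big(\sumin w_i\big)\big/\sumjn w_j = 1$, so $\x\tr\x=1$. Then $\sqrt{w_i}\,x_i = w_i\big/\sqrt{\sumjn w_j}$, whence $\sumin x_i\sqrt{w_i}\,K_{ij} = a_j\big/\sqrt{\sumjn w_j}$. Substituting into the variational formula yields $\big(\sumjn\lambda_j a_j^2\big)\big/\sumjn w_j$, and since $r(\M\D)$ is the maximum over all unit vectors this quantity is a lower bound, which is precisely \eqref{eq:la2}.

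There is no serious analytic obstacle here; the work is entirely in recognizing that the correct test vector is the one making $\D^{1/2}\x\propto\wv$. The only points requiring care are bookkeeping: confirming $\av=\K\tr\wv$ from the orthogonality of $\K$, checking the normalization of $\x$, and noting (per Remark 1) that the symmetric form $\D^{1/2}\K\Lam\K\tr\D^{1/2}$ and its Rayleigh characterization remain valid even when some $w_i=0$, so neither irreducibility nor strict positivity of fitness is needed. I would also remark that the bound is genuinely informative only when $\wv$ aligns with eigenvectors of positive eigenvalue, since terms with negative $\lambda_j$ contribute negatively to $\sumjn\lambda_j a_j^2$.
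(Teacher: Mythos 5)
Your proposal is correct and is essentially identical to the paper's own proof: both plug the test vector $x_i = \sqrt{w_i}\big/\sqrt{\sumjn w_j}$ (the paper's $x_i = c\sqrt{w_i}$ with $c = 1/\sqrt{\ev\tr\D\ev}$) into the Rayleigh variational formula and then use $\av = \K\tr\wv$ from the orthogonality of $\K$ to collapse the quadratic form to $\sumjn \lambda_j a_j^2 \big/ \sumjn w_j$. The only difference is expository ordering --- the paper first states the bound in the matrix form $\ev\tr\D\K\Lam\K\tr\D\ev / (\ev\tr\D\ev)$ and then substitutes the Fourier representation, while you substitute immediately --- which is immaterial.
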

\begin{proof}
The square-root terms $\sqrt{w_i}$ in \eqref {eq:rMD2} can be ``digested'' by employing $x_i = c \sqrt{w_i}$, where the constraint $\x\tr\x=1$ implies $c = 1 / {\sqrt{\ev\tr\D\ev}}.$  Then
\an{
r(\M\D) &
= \max_{\x\tr\x = 1} \sumjn \lambda_j \left(\sumin x_i \sqrt{w_i} K_{ij}\right)^2 \notag\\&
\geq c^2 \sumjn \lambda_j \left(\sumin \sqrt{w_i} \sqrt{w_i} K_{ij}\right)^2 \notag\\&
= c^2 \sumjn \lambda_j \left(\sumin w_i K_{ij}\right)^2 \notag\\&
= \frac{1}{{\ev\tr\D\ev}} \ \ev\tr \D \K \Lam \K\tr \D \ev. \label{eq:DKLKD}
}

Substitution with the Fourier representation $\wv \equiv \D \ev = \K \av$ into \eqref{eq:DKLKD} gives \eqref {eq:la2}:
\ab{
r(\M\D)& \geq \frac{1}{{\ev\tr\D\ev}}  \ \ev\tr \D \K \Lam \K\tr \D \ev
\\&
= \frac{1}{{\sumjn w_j}} \  \av\tr \K\tr\K \Lam \K\tr  \K \av \\&
=
\frac{1}{{\sumjn w_j}} \  \av\tr  \Lam  \av
= \frac{1}{{\sumjn w_j}} \sumjn \lambda_j a_j^2 . \qedhere
}
\end{proof}

Several lemmas are needed for the proofs to follow.  The first is a generalization of the result for graphs in \citet{Stadler:1996:Landscapes}.
\begin{Lemma}[Parent-Offspring Fitness Covariance]\label{Lemma:Cov}
For a stationary mutational random walk with reversible Markov chain transition matrix $\M$, and stationary distribution $\piv$, the $\piv$-weighted covariance between parent and offspring fitnesses ($W_\Pc$, $W_\Oc$ respectively) is 
\an{
&\Cov_{\piv}[W_{\Pc}, W_{\Oc}]  = \sum_{j=2}^n \lambda_j \left(\sumin w_i \pi_i^{1/2} K_{ij}\right)^2.  \label{eq:Cov}
}
\end{Lemma}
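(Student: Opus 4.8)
The plan is to compute the $\piv$-weighted covariance directly from its definition and to recognize the resulting quadratic form as an expansion in the orthonormal eigenvector basis $\K$. First I would write the stationary mutational random walk explicitly: parent genotype $j$ is drawn with probability $\pi_j$, and offspring genotype $i$ is produced with transition probability $M_{ij}$. The joint distribution of (parent, offspring) fitnesses is then $\Pr(W_\Pc = w_j, W_\Oc = w_i) = M_{ij}\pi_j$. The covariance is $\Cov_\piv[W_\Pc, W_\Oc] = \Ex{W_\Pc W_\Oc} - \Ex{W_\Pc}\Ex{W_\Oc}$, so I would compute each piece. Using reversibility ($M_{ij}\pi_j = M_{ji}\pi_i$) and the fact that $\piv$ is stationary, both marginals equal the $\piv$-weighted mean fitness $\mu_w \eqdef \sumin w_i\pi_i$, so $\Ex{W_\Pc} = \Ex{W_\Oc} = \mu_w$.

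The key step is to express $\Ex{W_\Pc W_\Oc} = \sumijn w_i w_j M_{ij}\pi_j$ in matrix form and substitute the reversible representation $\M = \D_\piv^{1/2}\K\Lam\K\tr\D_\piv^{-1/2}$. Writing the double sum as $\wv\tr \M \D_\piv \wv$ (where $\D_\piv = \diag{\pi_j}$), substitution gives $\wv\tr \D_\piv^{1/2}\K\Lam\K\tr\D_\piv^{-1/2}\D_\piv\wv = (\D_\piv^{1/2}\wv)\tr \K\Lam\K\tr (\D_\piv^{1/2}\wv)$, which is a clean symmetric quadratic form. Defining the vector of projections $b_j \eqdef \sumin w_i\pi_i^{1/2} K_{ij} = [\K\tr\D_\piv^{1/2}\wv]_j$, this becomes $\sumjn \lambda_j b_j^2$. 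I would then handle the $j=1$ term separately: since $[\K]_1 = \D_\piv^{1/2}\ev$ and $\lambda_1 = 1$, the first coefficient is $b_1 = \sumin w_i\pi_i^{1/2}\cdot\pi_i^{1/2} = \mu_w$, contributing $\lambda_1 b_1^2 = \mu_w^2$ to $\Ex{W_\Pc W_\Oc}$.

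Finally I would combine the pieces: $\Cov_\piv[W_\Pc, W_\Oc] = \Ex{W_\Pc W_\Oc} - \mu_w^2 = \sumjn \lambda_j b_j^2 - \mu_w^2$, and the $j=1$ term $\mu_w^2$ cancels exactly against the subtracted $\mu_w^2 = \Ex{W_\Pc}\Ex{W_\Oc}$, leaving the sum starting at $j=2$, namely $\sum_{j=2}^n \lambda_j b_j^2 = \sum_{j=2}^n \lambda_j\big(\sumin w_i\pi_i^{1/2}K_{ij}\big)^2$, which is precisely \eqref{eq:Cov}.

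I expect the main obstacle to be organizational rather than deep: correctly tracking the placement of the $\D_\piv^{1/2}$ factors and the $\pi_j$ weights so that the reversible representation collapses cleanly, and verifying that the marginal expectations genuinely coincide with $\mu_w$ (this uses both stationarity for one marginal and reversibility for the other). The crucial conceptual point is that the Perron eigenvalue term ($j=1$), which carries exactly the square of the mean, is removed by the mean-subtraction in the covariance, so that the covariance isolates the \emph{nontrivial} spectrum $\lambda_2, \ldots, \lambda_n$ — the same eigenvalues that govern mutational robustness in the earlier theorems.
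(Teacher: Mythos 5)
Your proposal is correct and follows essentially the same route as the paper's proof: compute $\Cov_\piv[W_\Pc,W_\Oc]$ from its definition, write $\Ex{W_\Pc W_\Oc}=\wv\tr\M\D_\piv\wv$, substitute the reversible representation $\M=\D_\piv^{1/2}\K\Lam\K\tr\D_\piv^{-1/2}$, and cancel the $j=1$ term $\lambda_1 b_1^2=(\wv\tr\piv)^2$ against the product of means (the paper phrases this cancellation as the rank-one subtraction $\K\Lam\K\tr-[\K]_1[\K]_1\tr$, which is identical to your coordinatewise removal of the Perron term). One trivial slip does not affect the argument: neither marginal expectation actually needs reversibility --- the parent marginal is $\piv$ by construction and the offspring marginal uses only stationarity $\M\piv=\piv$; reversibility enters only through the spectral representation itself.
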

\begin{proof}
\an{
&\Cov_{\piv}[W_{\Pc}, W_{\Oc}]  \notag\\&
= \sumjn \pi_j w_j \sum_{i=1}^n w_i M_{ij} - 
\bigg(\sumjn \pi_j  w_j \bigg)
\bigg(\sum_ {i,j=1} ^n \pi_j w_i M_{ij}  \bigg) \notag\\&
= \w\tr \M \D \piv - (\w\tr\piv)(\w\tr \M \piv) \notag\\&
= \w\tr (\D_\piv^{1/2} \K \Lam \K\tr \D_\piv^{-1/2}) \D \piv - \w\tr\piv\w\tr\piv  \label{eq:CovDeriv}\\&
= \ev\tr \D \D_\piv^{1/2} \K \Lam \K\tr \D_\piv^{1/2} \D  \ev -  \ev\tr \D \D_\piv \ev \ev\tr \D \D_\piv  \ev \notag\\&
= \ev\tr \D \D_\piv^{1/2} (\K \Lam \K\tr) \D_\piv^{1/2} \D  \ev  \notag\\&
\qquad -  \ev\tr \D \D_\piv^{1/2}( \D_\piv^{1/2}\ev \ev\tr  \D_\piv^{1/2} ) \D_\piv^{1/2} \D \ev \notag\\&
= \ev\tr \D \D_\piv^{1/2} (\K \Lam \K\tr -  \D_\piv^{1/2}\ev \ev\tr  \D_\piv^{1/2} ) \D_\piv^{1/2} \D  \ev \notag \\&
= \ev\tr \D \D_\piv^{1/2} (\K \Lam \K\tr -  [\K]_1 [\K_1]\tr) \D_\piv^{1/2} \D  \ev \notag \\&
= \ev\tr \D \D_\piv^{1/2} (\K \diag{0, \lambda_2, \ldots, \lambda_n} \K\tr ) \D_\piv^{1/2} \D  \ev  \notag \\&
= \sum_{j=2}^n \lambda_j \left(\sumin w_i \pi_i^{1/2} K_{ij}\right)^2. \label{eq:CovSum}
}
\end{proof}

The expression \eqref{eq:Cov} simplifies when $\M$ is symmetric, to give the following, originally derived in \citet{Stadler:1996:Landscapes,Reidys:and:Stadler:2001:Neutrality}.
\begin{Lemma}[Parent-Offspring Fitness Covariance Under Symmetric Mutation \citet{Stadler:1996:Landscapes}]
For a stationary mutational random walk with a doubly stochastic transition matrix $\M=\M\tr= \K \Lam \K\tr$, and stationary distribution $\piv=\ev / n$, the parent-offspring covariance in fitness can be represented in terms of the Fourier representation of fitnesses, $\w = \K \av$, as
\ab{
\Cov_{\ev / n}&[W_{\Pc}, W_{\Oc}] =  \frac{1}{n} \sum_{i=2}^n \lambda_i a_i^2 .
}
\end{Lemma}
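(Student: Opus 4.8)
The plan is to obtain this statement as a direct specialization of Lemma \ref{Lemma:Cov}, which already supplies the parent-offspring covariance for an arbitrary reversible chain. First I would observe that a doubly stochastic $\M$ has the uniform stationary distribution, so $\pi_i = 1/n$ for every $i$, and the general spectral representation $\M = \D_\piv^{1/2}\K\Lam\K\tr\D_\piv^{-1/2}$ collapses to $\M = \K\Lam\K\tr$: here $\D_\piv^{1/2} = n^{-1/2}\I$ is a scalar multiple of the identity, so it commutes with $\K\Lam\K\tr$ and cancels against $\D_\piv^{-1/2}$. This confirms that the hypotheses of Lemma \ref{Lemma:Cov} are met and that its conclusion \eqref{eq:Cov} applies.

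Next I would substitute $\pi_i^{1/2} = n^{-1/2}$ into \eqref{eq:Cov}, pull the constant out of the inner sum, and square it to produce the prefactor $1/n$:
\[
\Cov_{\ev/n}[W_\Pc, W_\Oc] = \sum_{j=2}^n \lambda_j \left(\frac{1}{\sqrt{n}}\sumin w_i K_{ij}\right)^2 = \frac{1}{n}\sum_{j=2}^n \lambda_j \left(\sumin w_i K_{ij}\right)^2.
\]
The final step is to identify the inner sum with the Fourier coefficient. Since $\sumin w_i K_{ij}$ is the $j$-th entry of $\K\tr\w$, and the Fourier representation $\w = \K\av$ together with the orthonormality $\K\tr\K = \I$ gives $\K\tr\w = \av$, we have $\sumin w_i K_{ij} = a_j$. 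Inserting this yields $\Cov_{\ev/n}[W_\Pc, W_\Oc] = \frac{1}{n}\sum_{j=2}^n \lambda_j a_j^2$, which is the claimed identity.

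There is no genuine obstacle here: the entire analytic content is carried by Lemma \ref{Lemma:Cov}, and the specialization amounts only to inserting the uniform weights and recognizing the orthogonal projection of $\w$ onto the eigenbasis. The single point worth a moment's care is the range of summation—the $j=1$ term is already absent from \eqref{eq:Cov}, because the covariance subtracts the squared mean and this corresponds precisely to the rank-one piece $[\K]_1[\K]_1\tr$ removed in the derivation of Lemma \ref{Lemma:Cov}—so no separate argument is needed to discard it.
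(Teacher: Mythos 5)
Your proof is correct and follows essentially the same route as the paper: both obtain the identity as a direct specialization of Lemma \ref{Lemma:Cov} to the doubly stochastic case with uniform stationary distribution. The only cosmetic difference is that you substitute $\pi_i = 1/n$ into the final formula \eqref{eq:Cov}, whereas the paper substitutes the Fourier representation into the intermediate line \eqref{eq:CovDeriv} and cancels the $a_1^2$ term using $\lambda_1 = 1$ --- a step your route sidesteps, as you correctly note, because the $j=1$ term has already been removed in \eqref{eq:Cov}.
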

\begin{proof}
When $\M$ is symmetric, $\piv = \ev / n$, $[\K]_1 = \ev / \sqrt{n}$, and $\ev\tr \K = \sqrt{n} \ \ev_1\tr$, and using the Fourier representation $\w = \K \av$, one gets
\ab{
\sumjn w_j = \ev\tr \w = \ev\tr \K \av = \ev_1 \tr \av \sqrt{n}= a_1 \sqrt{n}.
}
Substituting the above into \eqref{eq:CovDeriv}, one obtains:
\ab{
\Cov_{\ev / n}[&W_{\Pc}, W_{\Oc}] 
= \w\tr (\D_\piv^{1/2} \K \Lam \K\tr \D_\piv^{1/2}) \w {-} (\w\tr\piv)^2 \\ &
= \w\tr ( \K \Lam \K\tr ) \w / \sqrt{n^2} - (\w\tr\ev)^2 / n^2 \\ &
= \av\tr \K\tr \K \Lam \K\tr \K \av / n - (\av\tr \K\tr \ev)^2 / n^2 \\ &
= \av\tr \Lam \av / n - (\av\tr \ev_1 \sqrt{n})^2 / n^2 \\ &
=  \frac{1}{n} \left(\sumin \lambda_i a_i^2 - a_1^2 \right)
=  \frac{1}{n} \sum_{i=2}^n \lambda_i a_i^2.& \qedhere
}
\end{proof}
\begin{Theorem}[Relation to Random Mutational Walks]
For a stationary mutational random walk with reversible Markov chain transition matrix $\M$, and stationary distribution $\piv$, let $\Cov_\piv(W_{\Pc}, W_{\Oc})$ be the $\piv$-weighted covariance between parent and offspring fitnesses.  Then at a mutation-selection balance, $\wbh \zvh = \M\D\zvh$,
\ab{
r(\M\D) \geq \left(\sumjn w_j \pi_j \right) \left[1 + \frac{1}{{(\w\tr\piv)}^2} \Cov_\piv[ {W_\Pc}, {W_\Oc}]\right].
}
\end{Theorem}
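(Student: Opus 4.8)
The plan is to feed a single, carefully aligned test vector into the variational expression \eqref{eq:rMD2} for $r(\M\D)$ and then recognize the resulting quadratic form as the covariance computed in Lemma~\ref{Lemma:Cov}. First I would recall that for reversible mutation
$$r(\M\D) = \max_{\x\tr\x = 1} \sumjn \lambda_j \bigg(\sumin x_i \sqrt{w_i}\, K_{ij}\bigg)^2,$$
and substitute the feasible choice $x_i = \sqrt{w_i \pi_i}\big/\sqrt{\w\tr\piv}$, whose normalization $\x\tr\x=1$ is immediate from $\sumin w_i \pi_i = \w\tr\piv$. Under this choice $\sumin x_i \sqrt{w_i}\, K_{ij} = (\w\tr\piv)^{-1/2}\sumin w_i \pi_i^{1/2} K_{ij}$, so writing $b_j \eqdef \sumin w_i \pi_i^{1/2} K_{ij}$ produces the lower bound $r(\M\D) \geq (\w\tr\piv)^{-1} \sumjn \lambda_j\, b_j^2$.

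Next I would separate the Perron term from the sum. Because $K_{i1} = \pi_i^{1/2}$, the leading coefficient is $b_1 = \sumin w_i \pi_i = \w\tr\piv$, and since $\M$ is stochastic its Perron root is $\lambda_1 = 1$; hence $\sumjn \lambda_j b_j^2 = (\w\tr\piv)^2 + \sum_{j=2}^n \lambda_j b_j^2$. The residual sum $\sum_{j=2}^n \lambda_j b_j^2$ is exactly the right-hand side of \eqref{eq:Cov}, i.e.\ $\Cov_\piv[W_\Pc, W_\Oc]$. Substituting back gives
$$r(\M\D) \geq \frac{(\w\tr\piv)^2 + \Cov_\piv[W_\Pc, W_\Oc]}{\w\tr\piv} = \Big(\sumjn w_j \pi_j\Big)\bigg[1 + \frac{\Cov_\piv[W_\Pc, W_\Oc]}{(\w\tr\piv)^2}\bigg],$$
which is the asserted inequality, since $\sumjn w_j \pi_j = \w\tr\piv$.

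Most of this is bookkeeping once the right test vector is in hand; the one genuine step is guessing the alignment $x_i \propto \sqrt{w_i \pi_i}$ that collapses the inner product onto the covariance coefficients $b_j$. The only subtlety to track carefully is that the variational bound runs over all $j = 1, \ldots, n$ while the covariance formula omits $j=1$; the two are reconciled precisely by the separate accounting of the Perron contribution $\lambda_1 b_1^2 = (\w\tr\piv)^2$, which is what converts the bare sum into the $1 + (\cdots)$ form. Since a lower bound suffices, no hypothesis beyond the reversible representation of $\M$ is needed; equality would hold exactly when $x_i \propto \sqrt{w_i \pi_i}$ is itself the Perron vector of $\D^{1/2}\K\Lam\K\tr\D^{1/2}$.
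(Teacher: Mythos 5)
Your proposal is correct and takes essentially the same route as the paper: both arguments lower-bound the Rayleigh quotient for $\D^{1/2}\K\Lam\K\tr\D^{1/2}$ at the test vector $x_i \propto \sqrt{w_i \pi_i}$ and then identify the resulting quadratic form via Lemma~\ref{Lemma:Cov}. The only cosmetic difference is that you split off the $j=1$ Perron term from the summed form \eqref{eq:Cov} by hand, whereas the paper invokes the intermediate matrix identity \eqref{eq:CovDeriv}, in which that accounting is already built in.
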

\begin{proof}
Using \eqref {eq:CovDeriv},
\ab{
r&(\M\D) = \max_{\x \neq \0} \frac{\x\tr (\D^{1/2} \K \Lam \K\tr \D^{1/2}) \x}{\x\tr\x} \\&
\geq \frac{\ev\tr (\D_\piv \D)^{1/2}( \D^{1/2} \K \Lam \K\tr \D^{1/2}) (\D_\piv \D)^{1/2} \ev}{\ev\tr(\D_\piv \D)\ev} \\&
= \frac{\ev\tr  \D \D_\piv^{1/2} \K \Lam \K\tr \D_\piv^{1/2} \D \ev}{\w\tr\piv} \\&
= \frac{1} {\w\tr\piv} [\Cov_{\piv}[W_{\Pc}, W_{\Oc}] + (\w\tr\piv)^2] \\&
= (\w\tr\piv) \left[ 1 + \frac{1} {(\w\tr\piv)^2} \Cov_{\piv}[W_{\Pc}, W_{\Oc}] \right]. \qedhere
}
\end{proof}

\begin{Corollary}[Condition for Positive Parent/Offspring Fitness Covariance]
When the eigenvalues of $\M$ are positive, then $\Cov_\piv[ {W_\Pc}, {W_\Oc}] \geq 0$.
\end{Corollary}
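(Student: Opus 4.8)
The plan is to read the conclusion straight off Lemma~\ref{Lemma:Cov}, which has already done all of the real work. That lemma expresses the stationary parent--offspring fitness covariance in the explicit spectral form
\[
\Cov_{\piv}[W_{\Pc}, W_{\Oc}] = \sum_{j=2}^n \lambda_j \left(\sumin w_i \pi_i^{1/2} K_{ij}\right)^2 ,
\]
a weighted sum over the non-principal spectrum of $\M$, where each weight is an eigenvalue $\lambda_j$ and each factor being weighted is a perfect square. So the corollary follows by a sign argument with no further computation.

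First I would invoke Lemma~\ref{Lemma:Cov} to obtain the displayed expression. Next I would observe that for every $j = 2, \ldots, n$ the quantity $\big(\sumin w_i \pi_i^{1/2} K_{ij}\big)^2$ is a square of a real number and hence non-negative. The hypothesis that the eigenvalues of $\M$ are positive supplies $\lambda_j > 0$ for each $j \geq 2$, so every summand $\lambda_j \big(\sumin w_i \pi_i^{1/2} K_{ij}\big)^2$ is non-negative, and a finite sum of non-negative terms is non-negative. This yields $\Cov_{\piv}[W_{\Pc}, W_{\Oc}] \geq 0$, completing the proof.

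There is essentially no obstacle to overcome here; the lemma's spectral decomposition converts the statement into a transparent sign count. The only point worth flagging is that the sum runs from $j = 2$, excluding the Perron direction $[\K]_1 = \D_\piv^{1/2}\ev$, which was subtracted off in deriving \eqref{eq:CovSum}. Consequently the conclusion needs positivity only of the \emph{non-principal} eigenvalues $\lambda_2, \ldots, \lambda_n$, and in fact the weaker hypothesis $\lambda_j \geq 0$ for all $j \geq 2$ already suffices for the non-strict inequality. I would note this to make clear that biological mutation rates, which guarantee $\lambda_j > 0$ for all $j$, are more than enough to force a non-negative parent--offspring fitness covariance.
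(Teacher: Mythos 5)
Your proposal is correct and follows essentially the same route as the paper, which likewise reads the conclusion directly off the spectral sum \eqref{eq:Cov} from Lemma~\ref{Lemma:Cov}: positive eigenvalues times perfect squares give nonnegative summands. Your added remark that nonnegativity of only the non-principal eigenvalues $\lambda_2, \ldots, \lambda_n$ already suffices is a valid minor sharpening, but the argument is the paper's own.
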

\begin{proof}
When $\lambda_i > 0$ for all $i$, then all terms in the sum \eqref{eq:Cov} are nonnegative.
\end{proof}

\begin{Lemma}\label{Lemma:Kbasis}
Since $\K$ is orthogonal, its columns form a basis to represent $\x = \K \cv$, or $x_i = \sumjn K_{ij} c_j$.  The constraint $\x\tr \x = 1 = \cv\tr \K\tr \K \cv = \cv\tr\cv$ implies $\cv\tr\cv=1$.  Then
\ab{
\sum_{i =1}^n x_i K_{ij} &= \sum_{i =1}^n (\sum_{h=1}^n K_{ih} c_h) K_{ij} \\&
= \sum_{h=1}^n c_h \sum_{i =1}^n  K_{ih}  K_{ij}
= \sum_{h=1}^n c_h  \delta_{hj} = c_j.
}
\end{Lemma}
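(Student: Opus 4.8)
The plan is to use the single structural fact about $\K$ already recorded in the setting, namely that it is orthogonal, $\K\tr\K = \K\K\tr = \I$. First I would note that orthogonality makes $\K$ invertible with $\K\inv = \K\tr$, so its columns $[\K]_1, \ldots, [\K]_n$ are linearly independent and span $\Reals^n$. Consequently every $\x$ has the unique representation $\x = \K\cv$ with coordinate vector $\cv = \K\tr\x$, which establishes the change-of-basis claim.

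Next I would settle the norm-preservation claim by direct substitution. Writing $\x = \K\cv$ and using $\K\tr\K = \I$ gives $\x\tr\x = \cv\tr\K\tr\K\cv = \cv\tr\cv$, so the constraint $\x\tr\x = 1$ is exactly $\cv\tr\cv = 1$. This is a pure identity; no optimization or inequality enters.

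Finally, to recover the coordinate $c_j$ I would expand $\sumin x_i K_{ij}$, substitute the representation $x_i = \sumjn K_{ij} c_j$ (relabeling the inner summation index to $h$ to avoid a clash with the target index $j$), interchange the two finite sums, and invoke the entrywise orthonormality $\sumin K_{ih} K_{ij} = \delta_{hj}$, which is just the $(h,j)$ entry of $\K\tr\K = \I$. The inner sum collapses to $\delta_{hj}$, and summing against $c_h$ leaves $c_j$. In matrix language this is simply $[\K]_j\tr \x = [\K]_j\tr\K\cv = c_j$, since $[\K]_j\tr\K$ is the $j$-th row of $\K\tr\K = \I$.

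There is no genuine obstacle here; the lemma is elementary linear algebra whose purpose is to record the ``digesting'' identity used in the later bounds. The only point requiring care is to invoke the two-sided identity $\K\tr\K = \K\K\tr = \I$ rather than merely unit-norm columns, so that the basis property and the coordinate-recovery relation hold simultaneously.
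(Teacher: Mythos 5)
Your proposal is correct and follows essentially the same route as the paper: substitute the representation $x_i = \sum_{h=1}^n K_{ih} c_h$, interchange the finite sums, and collapse the inner sum via the orthonormality relation $\sumin K_{ih}K_{ij} = \delta_{hj}$, which is just the $(h,j)$ entry of $\K\tr\K = \I$. Your added matrix-language remark, $c_j = [\K]_j\tr \x$ since $[\K]_j\tr\K$ is the $j$-th row of $\K\tr\K = \I$, is a clean restatement of the same computation, not a different argument.
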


\begin{Lemma}\label{Lemma:xmaxy}
\an{
\max_{\x\tr\x=1} \x\tr \y = (\y\tr\y)^{1/2}, \ \min_{\x\tr\x=1} \x\tr \y = -(\y\tr\y)^{1/2}.\label{eq:xmaxy}
}
\end{Lemma}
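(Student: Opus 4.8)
The plan is to recognize \eqref{eq:xmaxy} as the Cauchy--Schwarz inequality phrased as a constrained optimization, and to prove it by first establishing the bound and then exhibiting vectors that attain it. I would begin by disposing of the degenerate case $\y = \0$ separately: there $\x\tr\y = 0$ for every $\x$, so both the maximum and minimum equal $0 = (\y\tr\y)^{1/2}$ and nothing further is needed. For the remainder I assume $\y \neq \0$, so that $(\y\tr\y)^{1/2} > 0$ and the normalizations below are well defined.

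Next I would prove the upper bound. By the Cauchy--Schwarz inequality, for any $\x$ satisfying $\x\tr\x = 1$,
\[
\x\tr\y \leq |\x\tr\y| \leq (\x\tr\x)^{1/2}(\y\tr\y)^{1/2} = (\y\tr\y)^{1/2},
\]
which bounds the maximand from above by $(\y\tr\y)^{1/2}$. Replacing $\y$ by $-\y$, or equivalently applying the corresponding lower Cauchy--Schwarz bound to $\x\tr\y$, gives $\x\tr\y \geq -(\y\tr\y)^{1/2}$ on the same constraint set, bounding the minimand from below.

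Finally I would verify attainment, which is what upgrades these inequalities to the claimed equalities. Setting $\x = \y/(\y\tr\y)^{1/2}$ satisfies the constraint, since $\x\tr\x = (\y\tr\y)/(\y\tr\y) = 1$, and yields $\x\tr\y = (\y\tr\y)/(\y\tr\y)^{1/2} = (\y\tr\y)^{1/2}$, so the supremum is a maximum; the choice $\x = -\y/(\y\tr\y)^{1/2}$ achieves the minimum symmetrically. There is no genuine obstacle here: the only points deserving care are separating out $\y = \0$ so that the normalizing factor is nonzero, and confirming that the proposed extremizers actually lie on the unit sphere, so that the Cauchy--Schwarz bound is attained rather than merely an inequality. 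An alternative route would be a Lagrange-multiplier argument, whose stationarity condition $\y = 2\lambda\,\x$ forces $\x$ to be parallel to $\y$ and recovers the same two extremizers; but the direct Cauchy--Schwarz argument is the shorter of the two.
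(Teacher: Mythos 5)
Your proof is correct, but it takes a genuinely different route from the paper's. The paper argues via stationarity: it differentiates the quotient $\x\tr\y/(\x\tr\x)$ in each coordinate, obtains the critical-point condition $y_i = x_i(\x\tr\y)$ on the constraint set $\x\tr\x=1$, and deduces $(\x\tr\y)^2 = \y\tr\y$, so the extrema are $\pm(\y\tr\y)^{1/2}$ --- essentially the Lagrange-multiplier argument you mention only as an alternative. You instead use Cauchy--Schwarz to get the two-sided bound $-(\y\tr\y)^{1/2} \leq \x\tr\y \leq (\y\tr\y)^{1/2}$ and then exhibit the explicit extremizers $\x = \pm\y/(\y\tr\y)^{1/2}$. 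Your route buys completeness and self-containedness: it needs no calculus, it explicitly verifies attainment rather than tacitly assuming the maximum exists and occurs at a stationary point (the paper implicitly relies on compactness of the unit sphere here), and you handle the degenerate case $\y = \0$, which the paper's stationarity computation glosses over. A further minor point in the paper's favor that you forgo: the critical-point argument parallels the Rayleigh-quotient machinery used throughout the rest of the paper for $r(\M\D)$, so it fits the paper's toolkit --- though note the paper's unconstrained quotient $\x\tr\y/(\x\tr\x)$ is not scale-invariant (unlike a true Rayleigh quotient for a quadratic form), so its identification with the constrained problem is valid only on the sphere $\x\tr\x=1$, a wrinkle your constrained formulation avoids entirely.
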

\begin{proof}
Critical points of
$
\max_{\x\tr\x=1}\x\tr \y 
= \max_{\x\neq \0} \frac{ \x\tr \y}{\x\tr\x}
$ 
satisfy
\ab{
\pf{}{x_i}\frac{ \x\tr \y}{\x\tr\x} &= \frac{y_i}{\x\tr\x} - x_i \frac{ \x\tr \y}{(\x\tr\x)^2}
= 0,
}
which, for $\x\tr\x=1$, are solved by $y_i = x_i (\x\tr\y)$,  hence $\y\tr\y = (\x\tr\y)^2$ so $\sqrt{\y\tr\y} = |\x\tr\y|$ is the maximum, and $-\sqrt{\y\tr\y} = -|\x\tr\y|$ the minimum.
\end{proof}

\setcounter{Theorem}{14}
\begin{Theorem}[Upper Bound from the Spectral Gap] 
Assuming $0 < \mu < 1/2$, then
\ab{
\sumin w_i \pi_i 
&\leq r(\M\D) \\&
\leq \sumin w_i \pi_i + \lambda_2(\M) \left(\max_i[w_i]- \sumin w_i \pi_i\right) ,
}
where $\lambda_2(\M) = 1{-}\mu {+} \mu {\dsty \max_{\xi=1, \ldots, L}} \lambda_{2}(\P^{(\xi)})$.
\end{Theorem}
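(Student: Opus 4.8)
The plan is to run everything through the Rayleigh–Ritz expression for $r(\M\D)$ established in the ``Expression for the Spectral Radius'' theorem, $r(\M\D)=\max_{\x\tr\x=1}\sumjn\lambda_j\big(\sumin x_i\sqrt{w_i}K_{ij}\big)^2$. Setting $\y=\K\tr\D^{1/2}\x$, so that $y_j=\sumin x_i\sqrt{w_i}K_{ij}$, orthogonality $\K\K\tr=\I$ yields the Parseval identity $\sumjn y_j^2=\x\tr\D^{1/2}\K\K\tr\D^{1/2}\x=\x\tr\D\x=\sumin w_i x_i^2$, while $K_{i1}=\sqrt{\pi_i}$ gives the first coordinate $y_1=\sumin x_i\sqrt{w_i\pi_i}$. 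These two facts do essentially all the work. A preliminary observation used throughout is that $0<\mu<1/2$ forces every eigenvalue of $\M$ to be positive: each Kronecker factor $(1-\mu)\I^{(\xi)}+\mu\P^{(\xi)}$ has eigenvalues $1-\mu+\mu\lambda_k(\P^{(\xi)})\geq 1-2\mu>0$, since $\lambda_k(\P^{(\xi)})\geq-1$, and the eigenvalues of $\M$ are products of these positive numbers.

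For the lower bound I would simply invoke the ``Relation to Random Mutational Walks'' theorem with its Corollary: since all eigenvalues of $\M$ are positive, the Corollary gives $\Cov_\piv[W_\Pc,W_\Oc]\geq 0$, and the theorem then yields $r(\M\D)\geq(\w\tr\piv)\big[1+\Cov_\piv[W_\Pc,W_\Oc]/(\w\tr\piv)^2\big]\geq\w\tr\piv=\sumin w_i\pi_i$. Equivalently, and self-containedly, one evaluates the Rayleigh quotient at $x_i=\sqrt{\pi_i w_i}/\sqrt{\w\tr\piv}$ (which satisfies $\x\tr\x=1$); then $y_1^2=\w\tr\piv$, so the $j=1$ term alone contributes $\lambda_1 y_1^2=\w\tr\piv$, and the remaining terms $\sum_{j\geq2}\lambda_j y_j^2$ are nonnegative.

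For the upper bound I would start from the Rayleigh expression, peel off the simple Perron term $\lambda_1=1$, and use $\lambda_2\geq\lambda_3\geq\cdots$ together with $y_j^2\geq0$ to get, for every admissible $\x$, $\sumjn\lambda_j y_j^2\leq y_1^2+\lambda_2\sum_{j\geq2}y_j^2$. Substituting the Parseval identity in the form $\sum_{j\geq2}y_j^2=\sumin w_i x_i^2-y_1^2$ collapses the right-hand side to $(1-\lambda_2)y_1^2+\lambda_2\sumin w_i x_i^2$. Maximizing over $\x\tr\x=1$, Cauchy–Schwarz gives $y_1^2=\big(\sumin x_i\sqrt{w_i\pi_i}\big)^2\leq(\sumin x_i^2)(\sumin w_i\pi_i)=\w\tr\piv$, and $\sumin w_i x_i^2\leq\max_i[w_i]\sumin x_i^2=\max_i[w_i]$. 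Because $1-\lambda_2>0$ (the Perron root is simple, so $\lambda_2<\lambda_1=1$) and $\lambda_2>0$ (again $\mu<1/2$), both coefficients are positive, so the two maxima combine to $(1-\lambda_2)\,\w\tr\piv+\lambda_2\max_i[w_i]=\w\tr\piv+\lambda_2(\max_i[w_i]-\w\tr\piv)$, as claimed. It remains to identify $\lambda_2=\lambda_2(\M)$: the eigenvalues of $\M=\Ox_{\xi=1}^L[(1-\mu)\I^{(\xi)}+\mu\P^{(\xi)}]$ are the products $\prod_\xi(1-\mu+\mu\lambda_{k_\xi}(\P^{(\xi)}))$, the maximum being $1$; since every factor lies in $(0,1]$, reducing a single locus to its own second eigenvalue dominates reducing several, so the next largest is $\max_{\xi}[1-\mu+\mu\lambda_2(\P^{(\xi)})]=1-\mu+\mu\max_\xi\lambda_2(\P^{(\xi)})$.

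The main obstacle — really the only delicate point — is the sign bookkeeping that makes the hypothesis $0<\mu<1/2$ indispensable. It is what guarantees $\lambda_2>0$, without which the step $\lambda_2\sumin w_i x_i^2\leq\lambda_2\max_i[w_i]$ would reverse and the upper bound would fail; it is also what renders the covariance nonnegative in the lower bound. By contrast, the inequality $\lambda_j\leq\lambda_2$, the Parseval identity, and the two Cauchy–Schwarz estimates are routine, and the Kronecker-product eigenvalue ordering only requires the remark that each factor is at most $1$.
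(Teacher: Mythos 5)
Your proposal is correct and takes essentially the same route as the paper's proof: the Rayleigh--Ritz expression for $r(\M\D)$, peeling off the Perron term, bounding the remaining eigenvalues by $\lambda_2$, rewriting as $(1-\lambda_2)y_1^2 + \lambda_2 \sum_j y_j^2$, and applying the same two extremal estimates --- your Parseval and Cauchy--Schwarz steps are just inlined versions of the paper's Lemmas \ref{Lemma:Kbasis} and \ref{Lemma:xmaxy}, and your direct test-vector argument for the lower bound matches the paper's separate proof (the paper also cites Karlin's Corollary F.2). Your only additions are the explicit Kronecker-product ordering that justifies $\lambda_2(\M) = 1-\mu+\mu\max_{\xi}\lambda_2(\P^{(\xi)})$, which the paper states without derivation, and the optional covariance-theorem route to the lower bound; both are sound and consistent with the paper.
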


\begin{proof}
The condition $0 < \mu < 1/2$ assures that all the eigenvalues of $\M(\mu)^{(\xi)}$ are positive, and thus all $\lambda_i(\M(\mu)) > 0$.  A series of inequalities are obtained, the last steps using the representation $\x = \K \c$ from Lemma \ref {Lemma:Kbasis}, and Lemma \ref {Lemma:xmaxy}.
\ab{
&r (\M\D) = \\&
\max_{\x\tr\x = 1}\bigg[( \x\tr \D^{1/2} \piv^{1/2})^2 + \sum_{j=2}^n \lambda_j ( \x\tr \D^{1/2} [\K]_j)^2 \bigg] \\&
 \leq 
\max_{\x\tr\x = 1}\bigg[( \x\tr \D^{1/2} \piv^{1/2})^2 + \lambda_2 \sum_{j=2}^n ( \x\tr \D^{1/2} [\K]_j)^2 \bigg] \\&
= \max_{\x\tr\x = 1}\bigg[(1{-}\lambda_2) ( \x\tr \D^{1/2} \piv^{1/2})^2 \ +\\&
\qquad  \lambda_2 \bigg(( \x\tr \D^{1/2} \piv^{1/2})^2+ \sum_{j=2}^n ( \x\tr \D^{1/2} [\K]_j)^2 \bigg) \bigg] 
}
\ab{&
= \max_{\x\tr\x = 1}\! \! \bigg[(1{-}\lambda_2) ( \x\tr \D^{1/2} \piv^{1/2})^2 
{+} \lambda_2 \sumjn ( \x\tr \D^{1/2} [\K]_j)^2 \bigg] \\&
\leq  (1{-}\lambda_2)\max_{\x\tr\x = 1} ( \x\tr \D^{1/2} \piv^{1/2})^2 \ + \\&
\qquad \qquad \lambda_2 \max_{\x\tr\x = 1} \sumjn ( \x\tr \D^{1/2} [\K]_j)^2  \\&
\leq  (1{-}\lambda_2) \sumin w_i \pi_i + 
\lambda_2 \max_i[w_i] \max_{\x\tr\x = 1} \sumjn ( \x\tr [\K]_j)^2  \\&
=  (1{-}\lambda_2) \sumin w_i \pi_i + 
\lambda_2 \max_i[w_i] \max_{\c\tr\c = 1} \sumjn c_j^2  \\&
=  (1{-}\lambda_2) \sumin w_i \pi_i + 
\lambda_2 \max_i[w_i].
}

The lower bound, $r(\M\D) \geq \sumin w_i \pi_i$,  was proven in \citet[Corollary F.2.]{Karlin:1982}.  Here, a separate proof is provided.
\ab{
r(&\M\D) \\&
 = \max_{\x\tr\x = 1}\bigg[( \x\tr \D^{1/2} \piv^{1/2})^2 
+ \sum_{j=2}^n \lambda_j  (\sumin x_i \sqrt{w_i} K_{ij})^2 \bigg] \\&
\geq \max_{\x\tr\x = 1}\bigg[( \x\tr \D^{1/2} \piv^{1/2})^2 
+ \lambda_n \sum_{j=2}^n   (\sumin x_i \sqrt{w_i} K_{ij})^2 \bigg] \\
& \geq \max_{\x\tr\x = 1}\bigg[( \x\tr \D^{1/2} \piv^{1/2})^2  \bigg] = \sumin w_i \pi_i .& \qedhere 
} 
\end{proof}

\ \\ 
\setcounter{Theorem}{16}
\begin{Theorem}[Convexity of the Spectral Radius in $\mu$]
In the case where $L$ loci mutate independently at rate $\mu$, each in a reversible Markov chain, then for $0 < \mu <  1/2$, $r(\M(\mu) \D)  =$
\ab{
\dsty &\max_{\x\tr\x = 1} 
\x\tr \D^{1/2} \bigg[\! \Ox_{\xi=1}^L \K^{(\xi)} [(1{-}\mu) \I^{(\xi)} {+} \mu \Lam^{(\xi)}]  {\K^{(\xi)}}\tr\bigg] \D^{1/2} \x
}
is convex in $\mu$.
\end{Theorem}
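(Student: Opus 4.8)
The plan is to read $r(\M(\mu)\D)$ as a pointwise maximum of Rayleigh quotients, show that each quotient is convex in $\mu$ on $(0,1/2)$, and then invoke the elementary fact that a pointwise maximum of convex functions is convex.

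First I would reduce the bracketed operator locus-by-locus. Using orthogonality $\K\sxi{\K\sxi}\tr=\I\sxi$, each factor is $\K\sxi[(1{-}\mu)\I\sxi+\mu\Lam\sxi]{\K\sxi}\tr=(1{-}\mu)\I\sxi+\mu\Sm\sxi=\I\sxi-\mu\G\sxi$, where $\Sm\sxi:=\K\sxi\Lam\sxi{\K\sxi}\tr$ is symmetric with the same eigenvalues $\lambda_j\sxi\in[-1,1]$ as $\P\sxi$, and $\G\sxi:=\I\sxi-\Sm\sxi$ is positive semidefinite because its eigenvalues are $1-\lambda_j\sxi\geq 0$. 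Writing $\B(\mu):=\Ox_{\xi=1}^L(\I\sxi-\mu\G\sxi)$ and $\y:=\D^{1/2}\x$, the quantity to study for each fixed unit vector $\x$ is the scalar polynomial $\phi_\x(\mu):=\y\tr\B(\mu)\y$.

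Next I would differentiate twice. Because the Kronecker product is multilinear and each factor is affine in $\mu$ with derivative $-\G\sxi$ and vanishing second derivative, the product rule yields $\B''(\mu)=\sum_{\xi\neq\eta}\bm{T}_{\xi\eta}(\mu)$, where $\bm{T}_{\xi\eta}(\mu)$ is the Kronecker product carrying $\G\sxi$ in slot $\xi$ and $\G^{(\eta)}$ in slot $\eta$ (the two minus signs cancelling) and $\I^{(\zeta)}-\mu\G^{(\zeta)}$ in every remaining slot $\zeta$. The decisive point---and the only place the hypothesis $0<\mu<1/2$ is used---is that every undifferentiated factor $\I^{(\zeta)}-\mu\G^{(\zeta)}$ is positive definite on this range, since its eigenvalues are $1-\mu(1-\lambda_j^{(\zeta)})\geq 1-2\mu>0$ (using $\lambda_j^{(\zeta)}\geq -1$). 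Hence each $\bm{T}_{\xi\eta}(\mu)$ is a Kronecker product of positive semidefinite matrices---two factors $\G$ together with positive definite factors elsewhere---and is therefore positive semidefinite; summing over ordered pairs keeps $\B''(\mu)$ positive semidefinite. It follows that $\phi_\x''(\mu)=\y\tr\B''(\mu)\y\geq 0$, so each $\phi_\x$ is convex on $(0,1/2)$.

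Finally, since $r(\M(\mu)\D)=\max_{\x\tr\x=1}\phi_\x(\mu)$ is a maximum over a compact set of functions each convex in $\mu$, it is itself convex in $\mu$ on $(0,1/2)$. The main obstacle is confined to the second-derivative step: the raw expansion of $\B(\mu)$ is a degree-$L$ polynomial with alternating signs and is not termwise convex, so convexity only emerges after differentiating twice and recognizing that $\mu<1/2$ is exactly what keeps the undifferentiated Kronecker factors positive definite. Indeed, if some $\lambda_j^{(\zeta)}=-1$ then such a factor loses definiteness precisely at $\mu=1/2$, which is why this argument cannot be pushed past that threshold.
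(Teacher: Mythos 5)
Your proof is correct and takes essentially the same route as the paper's: both reduce $r(\M(\mu)\D)$ to a maximum over unit vectors of quantities polynomial in $\mu$, establish convexity of each via a second-derivative sign argument resting on $1-\lambda^{(\xi)}_{j_\xi}\geq 0$ together with the observation that $\mu<1/2$ keeps every undifferentiated factor $1-\mu(1-\lambda^{(\xi)}_{j_\xi})$ positive (valid since $\lambda^{(\xi)}_{j_\xi}\geq -1$), and conclude because a pointwise maximum of convex functions is convex. Your positive-semidefinite Kronecker packaging of $\B''(\mu)$ is simply the basis-free form of the paper's scalar computation $\beta_j''(\mu)\geq 0$, since conjugating by $\K=\Ox_{\xi=1}^L\K^{(\xi)}$ diagonalizes every term of your sum simultaneously and recovers exactly the paper's products over $\xi\neq\kappa,\gamma$.
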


\begin{proof}
For the situation where $L$ loci mutate independently at rate $\mu$, and locus $\xi$ has mutation distribution matrix $\P^{(\xi)}$, the mutation matrix for the entire genome is 
\ab{
\M(\mu)
&\!= \!\Ox_{\xi=1}^L  [(1{-}\mu) \I^{(\xi)} {+} \mu \P^{(\xi)}]
= \!\Ox_{\xi=1}^L  [ \I^{(\xi)} {+} \mu (\P^{(\xi)} {-}  \I^{(\xi)})].
}
For the case where mutation at each locus forms a reversible Markov chain,
\ab{
\P^{(\xi)} = \D_{\piv_\xi}^{1/2} \K^{(\xi)}\Lam^{(\xi)}{\K^{(\xi)}}\tr \D_{\piv_\xi}^{-1/2}.
}
Then by the Rayleigh theorem,
\an{
&r(\M(\mu) \D)=   \label{eq:rMulti} \\&
\max_{\x\tr\x = 1} \!
\x\tr \D^{1/2} \bigg[\!\Ox_{\xi=1}^L \K^{(\xi)} [(1{-}\mu) \I^{(\xi)} {+} \mu \Lam^{(\xi)}]  {\K^{(\xi)}}\tr\bigg] \D^{1/2} \x. \notag 
}
The total number of genotypes is $|G| \eqdef \prod_{\xi=1}^L n_\xi$, where $n_\xi$ is the number of alleles at the $\xi$-th locus.  As a way to index the genotypes for summation we can use lexicographical order, 
\ab{
\Lcu \suchthat \{1, \ldots, n_1\} \times \cdots \times \{1, \ldots, n_L\} \goesto \{1, \ldots, |G|\},
}
where we use the notation $j = \Lcu(j_1, j_2, \ldots, j_L) \in \{1, \ldots, \prod_{\xi=1}^L n_\xi\}$, and $j_\xi \eqdef [\Lcu^{-1}(j)]_\xi$.

Now \eqref{eq:rMulti} can be expressed as
\ab{
r(\M(\mu)& \D) \\
&= \max_{\x\tr\x = 1} \sum_{j=1}^{|G|} \bigg[\prod_{\xi=1}^L (1{-}\mu {+} \mu \lambda^{(\xi)}_{j_\xi}) \bigg](\x\tr \D^{1/2} [\K]_j)^2,
}
where $[\K]_j$ is the $j$-th column of $\K = \OxL \K^{(\xi)}$.  Define 
\ab{
\beta_j(\mu) = \prod_{\xi=1}^L \left(1-\mu + \mu \lambda^{(\xi)}_{j_\xi}\right)
= \prod_{\xi=1}^L \left[1{-}\mu(1-\lambda^{(\xi)}_{j_{\xi}}) \right]  .
}

First we will see that each $\beta_j(\mu)$ is convex in $\mu$.   The case $L=1$ is subsumed in Result \ref{Result:Increases}.  We assume $L \geq 2$.   The first derivative is
\ab{
\df{}{\mu} \beta_j(\mu)
&= 
\sum_{\kappa=1}^L -(1-\lambda^{(\kappa)}_{j_ \kappa}) \prod_{\xi \neq \kappa} [1{-}\mu(1-\lambda^{(\xi)}_{j_{\xi}}) ] .
}
The second derivative is
\an{
&\ddf{}{\mu} \beta_j(\mu) \notag \\
&= 
\sum_{\kappa=1}^L -(1-\lambda^{(\kappa)}_{j_ \kappa}) 
\sum_{\gamma \neq \kappa} -(1-\lambda^{(\gamma)}_{j_ \gamma}) \prod_{\xi \neq \kappa, \gamma} [1{-}\mu(1-\lambda^{(\xi)}_{j_{\xi}}) ] \notag \\&
= 
\sum_{\kappa=1}^L \sum_{\gamma \neq \kappa} 
(1-\lambda^{(\kappa)}_{j_ \kappa}) 
(1-\lambda^{(\gamma)}_{j_ \gamma}) 
\prod_{\xi \neq \kappa, \gamma} [1{-}\mu(1-\lambda^{(\xi)}_{j_{\xi}}) ], \label{eq:2nd}
}
where for $L = 2$ there is no term $\prod_{\xi \neq \kappa, \gamma} [1{-}\mu(1-\lambda^{(\xi)}_{j_{\xi}}) ] $.

All of the terms $(1-\lambda^{(\kappa)}_{j_ \kappa})$, $(1-\lambda^{(\gamma)}_{j_ \gamma})$ are nonnegative.
To have $1-\mu(1-\lambda^{(\xi)}_{j_{\xi}})  > 0$ for all $j_{\xi}$, the upper bound on $\mu$ is 
\an{\label{eq:mu}
\mu^* \eqdef  \min_{j_{\xi}} \frac{1}{1-\lambda^{(\xi)}_{j_{\xi}}} = \frac{1}{1-\min_{j_{\xi}} \lambda^{(\xi)}_{j_{\xi}}} \geq 1/2 ,
}
since for irreducible $\P^{(\kappa)}$ with real eigenvalues, Perron-Frobenius theory gives $1 \geq \lambda^{(\kappa)}_{j_ \kappa} \geq -1$.

It takes only one positive term in the sum to make $\ddf{}{\mu} \beta_j(\mu)$ positive.  The only way to get all the terms to be zero is for $j_\kappa =1$ or $j_\gamma= 1$ for every distinct pair of loci $\{\kappa, \gamma \} \subset \{1, \ldots, L\}$, which requires that index $j$ have only one $j_\xi \neq 1$.  For all the $j$ with at least two loci where $j_\gamma \neq 1$ and $j_\kappa \neq 1$, the term $\beta_j(\mu)$ is strictly convex in $\mu$.  For the other $j$ where $j_\xi= 1$ except for at most one locus, $\beta_j(\mu)$ is a straight line.  From this convexity we know that for any $j \in \{1, \ldots, |G|\}$, for $h \in (0,1)$,
\ab{
(1-h) \beta_j(\mu_1) + h \beta_j(\mu_2) \geq \beta_j( (1-h) \mu_1 + h \mu_2).
}

Now we return to the full expression,
\ab{
r(\M(\mu) \D)  
&= \max_{\x\tr\x = 1} \sum_{j=1}^{|G|} \beta_j(\mu) (\x\tr \D^{1/2} [\K]_j)^2.
}
For a given $\x$, each term $(\x\tr \D^{1/2} [\K]_j)^2$ is nonnegative, so
\ab{
\sum_{j=1}^{|G|} &\bigg[ (1-h) \beta_j(\mu_1) + h \beta_j(\mu_2) \bigg]  (\x\tr \D^{1/2} [\K]_j)^2 \\ &
= (1{-}h) \sum_{j=1}^{|G|} \beta_j(\mu_1) (\x\tr \D^{1/2} [\K]_j)^2 \\
& \qquad + h  \sum_{j=1}^{|G|} \beta_j(\mu_2) (\x\tr \D^{1/2} [\K]_j)^2 \\&
\geq \sum_{j=1}^{|G|} \bigg[ \beta_j( (1-h) \mu_1 + h \mu_2) \bigg] (\x\tr \D^{1/2} [\K]_j)^2 .
}
Hence, for any $\mu_1, \mu_2 \in (0, 1/2)$, and $h \in (0,1)$, 
\ab{
(1-h) \, r(\M & (\mu_1) \D) + h \, r(\M(\mu_2) \D) \\
= 
(1{-}h) \, & \max_{\x\tr\x=1} \sum_{j=1}^{|G|} \beta_j(\mu_1) (\x\tr \D^{1/2} [\K]_j)^2 \\
+ \ h  & \max_{\x\tr\x=1} \sum_{j=1}^{|G|} \beta_j(\mu_2) (\x\tr \D^{1/2} [\K]_j)^2 \\
\geq \max_{\x\tr\x=1}& \sum_{j=1}^{|G|}\beta_j( (1-h) \mu_1 + h \mu_2) (\x\tr \D^{1/2} [\K]_j)^2 \\
= r(\M( (&1-h) \mu_1 + h \mu_2) \D).
}
Therefore, $r(\M(\mu)\D)$ is convex in $\mu$.  With additional care, the conditions for strict convexity can be elicited.
\end{proof}

\begin{Result}[House of Cards Mutation]\label{Result:HoC}
Let $\M(\mu) = (1-\mu) \I + \mu \P$, where $\P = \piv \ev\tr$.  Then
\ab{
r(\M(\mu) \D) & \leq (1{-}\mu)  \max_{i}[w_i] +  \mu \sumin \pi_i w_i .
}
\end{Result}
\begin{proof}
\ab{
r(\M&(\mu)\D) = r([(1-\mu) \D + \mu \piv \ev\tr \D) \\&
= r([(1-\mu) \D + \mu  \D^{1/2}\D_{\piv}^{1/2} \ev \ev\tr \D_{\piv}^{1/2} \D^{1/2}) \\&
= \max_{\x\tr\x = 1} \x\tr [ (1{-}\mu) \D   +  \mu  \D^{1/2} \D_{\piv}^{1/2} \ev \ev\tr \D_{\piv}^{1/2}  \D^{1/2} ]\x \\&
= \max_{\x\tr\x = 1} \left[ (1{-}\mu)  \sumin w_i x_i^2  +  \mu ( \x\tr \D^{1/2} \D_{\piv}^{1/2} \ev)^2\right] \\&
= \max_{\x\tr\x = 1} \left[ (1{-}\mu)  \sumin w_i x_i^2  +  \mu \left(\sumin x_i \sqrt{\pi_i w_i} \right)^2 \right] \\&
\leq   (1{-}\mu)  \max_{\x\tr\x = 1} \sumin w_i x_i^2  +  \mu  \max_{\x\tr\x = 1}\left(\sumin x_i \sqrt {\pi_i w_i} \right)^2 \\&
= (1{-}\mu)  \max_{i}[w_i] +  \mu \sumin \pi_i w_i ,
}
the last term coming from Lemma \ref{Lemma:xmaxy}.  
\end{proof}


\section*{Counterexamples to Increased Mutational Robustness}

For neutral networks with non-symmetric mutation matrices, it is no longer true that population must concentrate on genotypes with above average mutational robustness.  Two counterexamples are provided.

\subsection*{Counterexample 1:  Cyclic Mutation}
Consider a space of three genotypes, two of which form a neutral network, with cyclic mutation
\ab{
\M = (1-\mu)\I + \mu \Bmatr{0 & 1 & 0\\0 & 0 & 1\\ 1&0&0}, \D=\Bmatr{1&0&0\\0&1&0\\0&0&0}.
}
The neutral network genotypes have average offspring fitness $\evt \D\M \D \ev / \evt \D \ev = 1- \mu/2 > r(\M \D) = 1-\mu.$

\subsection*{Counterexample 2:  Biased Reversible Mutation}
Consider a mutation matrix of three loci with two-alleles, under single-event mutation.  Suppose that mutation is biased towards a genotype with low mutational robustness.   Let genotype $1$ ($111$) be favored by mutation bias, where the transition matrix of the reversible Markov chain is
\ab{
\M=\left[
\begin{array}{cccccccc}
 0 & 3/4 & 3/4 & 0 & 3/4 & 0 & 0 & 0 \\
 1/3 & 0 & 0 & 1/3 & 0 & 1/3 & 0 & 0 \\
 1/3 & 0 & 0 & 1/3 & 0 & 0 & 1/3 & 0 \\
 0 &1/8 &1/8 & 0 & 0 & 0 & 0 & 1/3 \\
 1/3 & 0 & 0 & 0 & 0 & 1/3 & 1/3 & 0 \\
 0 &1/8 & 0 & 0 &1/8 & 0 & 0 & 1/3 \\
 0 & 0 &1/8 & 0 &1/8 & 0 & 0 & 1/3 \\
 0 & 0 & 0 & 1/3 & 0 & 1/3 & 1/3 & 0 \\
\end{array}
\right].
}
Genotypes $2,3,5$ ($101, 110, 011$) are its neighbors.  Genotypes $2,3,4$ ($101,110,100$) will be off the neutral network, so the fitnesses are
\ab{
\ev\tr \D = (1, 0, 0, 0, 1, 1, 1, 1).
}

Then the equilibrium population neutrality is less than the average neutrality of the neutral network:
\ab{
r(\M\D) = 0.6517 < 
\frac{\ev\tr \D \M \D \ev}{\ev\tr\D \ev} =  0.6667.
}

\pagebreak
\small 



\end{document}